\newtheorem{theorem}{Theorem}
\newtheorem{lemma}{Lemma}
\newtheorem{definition}{Definition}
\newtheorem{remark}{Remark}
\newtheorem{problem}{Problem}
\begin{document}
\title{\LARGE \bf Learning-based Formal Synthesis of Cooperative Multi-agent Systems\thanks{This work was supported by the National Science Foundation (NSF-CNS-1239222 , NSF-EECS-1253488 and NSF-CNS-1446288)}}
\author{Jin~Dai, %~\IEEEmembership{Student~Member,~IEEE,}
        Alessandro~Benini,
        Hai~Lin, %~\IEEEmembership{Senior~Member,~IEEE}
        Panos~J.~Antsaklis, %~\IEEEmembership{Fellow,~IEEE}
        Matthew~J.~Rutherford,
        and Kimon~P.~Valavanis
\thanks{J. Dai, H. Lin and P.~J.~Antsaklis are with the Department of Electrical Engineering, University of Notre Dame, Notre Dame, IN 46556, USA (e-mail: jdai1@nd.edu; hlin1@nd.edu; pja1@nd.edu).}\\
\thanks{A. Benini, M.~J.~Rutherford and K.~P.~Valavanis are with the Unmanned Systems Research Institute (DU2SRI), Department of Electrical and Computer Engineering, University of Denver, Denver, CO 80208, USA (e-mail: alessandro.benini@du.edu; matthew.rutherford@du.edu; kimon.valavanis@du.edu). }}
\date{}

% The paper headers
%%\markboth{06/03/2013}%
%{Shell \MakeLowercase{\textit{et al.}}: Bare Demo of IEEEtran.cls for Journals}

\maketitle
\thispagestyle{empty}
\pagestyle{empty}

\begin{abstract}
We propose a formal design framework for synthesizing coordination and control policies for cooperative multi-agent systems to accomplish a global mission. The global performance requirements are specified as regular languages while dynamics of each agent as well as the shared environment are characterized by finite automata, upon on which a formal design approach is carried out via divide-and-conquer. Specifically, the global mission is decomposed into local tasks; and local mission supervisors are designed to accomplish these local tasks while maintaining the multi-agent performance by integrating supervisor synthesis with compositional verification techniques; finally, motion plans are automatically synthesized based on the obtained mission plans. We present three modifications of the $L^*$ learning algorithm such that they are adapted for the synthesis of the local mission supervisors, the compositional verification and the synthesis of local motion plans, to guarantee that the collective behavior of the agents will ensure the satisfaction of the global specification. Furthermore, the effectiveness of the proposed framework is demonstrated by a detailed experimental study based on the implementation of a multi-robot coordination scenario. The proposed hardware-software architecture, with each robot's communication and localization capabilities, is exploited to examine the automatic supervisor synthesis with inter-robot communication.
\end{abstract}

\begin{IEEEkeywords}
Multi-agent systems, finite automata, supervisor synthesis, compositional verification, regular language learning
\end{IEEEkeywords}
%\IEEEpeerreviewmaketitle

\section{Introduction}
\IEEEPARstart{C}{ooperative} multi-agent systems that consist of a number of agents collaborating through distributed physical interaction and wireless communication to fulfill certain performance objectives have emerged as a hot research topic in the past two decades, due to their wide applications in both academia and industry, ranging from power grids, transportation systems, computer networks to robotic teams. Numerous distributed coordination and control problems have been extensively studied, see, e.g., \cite{chen,klo2,reyes,osm2,pan,val} and the references therein.

One of the essential problems in cooperative multi-agent systems is how to design control policies for each agent and coordination strategies among them such that certain desirable global specifications can be fulfilled.
%To pursue satisfaction of desired performance requirements, coordination methods for cooperative multi-agent systems can generally be divided into two categories: {\it bottom-up} and {\it top-down} approaches. Bottom-up approaches (see e.g. \cite{pan,guo}) design local control rules and inter-agent coordination mechanisms to fulfill each agent's individual tasks, while top-down approaches (see e.g. \cite{klo2,kar}) decompose a global task into a series of local tasks for each agent in such a way that accomplishment of the local tasks assures the satisfaction of the global specification.
We are therefore motivated to derive a formal synthesis framework to assure the accomplishment of high-level missions for a team of cooperative agents. In particular, we consider multiple intelligent agents that can move around in a partitioned and uncertain environment while satisfying a team mission specified by a regular language \cite{cas}. We assume that the mission and motion capabilities of each agent can be respectively modeled as two finite automata \cite{baier}. A learning-based formal synthesis approach is performed to coordinate the cooperative multi-agent systems as follows. First, the global mission is properly decomposed into local missions that can be executed by the corresponding agents. Towards this end, we present a learning-based algorithm for synthesizing local mission supervisors to fulfill local missions. Secondly, after deducing the local tasks, a learning-based algorithm is provided to compositionally justify whether or not the collective behavior of all agents satisfies the global mission \cite{cob2}. Finally, a motion plan \cite{dasilva} corresponding to the local mission is generated based on a nominal environment model. In case the motion plans fail to be feasible due to the uncertainties of the real environment, motion replanning is triggered and the global mission will be achieved iteratively. The contributions of our work can be summarized as:

{\it 1)} Compared to \cite{chen,dasilva}, we account for the controllability of the mission events and derive local supervisors that restrict the behavior of each agent properly. Furthermore, we modify the $L^*$ learning algorithm \cite{ang} and apply the modified algorithm to the synthesis of the local supervisors to actively learn a manageable local task even if the agent's model is not known {\it a priori}, rendering our approach the capability of handling system uncertainties.

{\it 2)} We adopt compositional verification to examine the correctness of the joint effort of all the agents and mitigate the computational complexity by introducing an assume-guarantee scheme \cite{cob2}. Another modification of the $L^*$ learning algorithm is developed to generate appropriate assumptions for each supervised agent automatically.

{\it 3)} In addition to our previous work \cite{jin1,jin2}, we further associate synthesized mission plans with appropriate motion plans by presenting a third modification of the $L^*$ learning algorithm. Moreover, replanning schemes of the motion plan in the practical environment are also studied.

{\it 4)} To examine the effectiveness of the design framework, we apply the theoretical results to a practical multi-robot coordination demonstration. It turns out in the experiment that the robots can cooperatively satisfy a request-response and coordination task even the environment possesses uncertainties.

%Compared to our previous work on learning-based supervisor design problems \cite{jin}, we deal with synthesis problems of local supervisors corresponding to each distributed agent rather than decentralized supervisory control problems, which aim at controlling a monolithic plant with multiple supervisors. Preliminary versions of the current work addressing the formal synthesis of cooperative multi-agent systems were reported in \cite{jin1,jin2}. Compared to \cite{jin2}, this work provides the detailed control and verification algorithms for cooperative multi-agent systems, along with theoretical proofs for the correctness and convergence, in addition to the demonstration of the synthesis framework on practical robotic team cooperation tasks.

The remainder of this paper is organized as follows. In Section~\ref{sec:related work}, we briefly review prior work on multi-agent coordination and application of formal methods. In Section~\ref{sec:preliminaries}, we give a brief description of the supervisory control theory and compositional verification within the multi-agent system formalism as well as the $L^*$ learning algorithm. Based on the preliminaries, we formulate the multi-agent coordination and control co-design problem and propose an automatic synthesis framework to synthesize appropriate local mission and motion plans for each agent in Section~\ref{sec:formulation}. Two modified $L^*$ algorithm are presented in Section~\ref{sec:synthesis} to generate appropriate local mission plans via supervisor synthesis and compositional verification. In Section~\ref{sec:verification}, we incorporate the model of the nominal environment with the mission plans to synthesize corresponding motion plans by utilizing a third modified $L^*$ algorithm, and online replanning strategies are proposed to address environment uncertainties. We apply the proposed framework to a multi-robot system in order to examine its effectiveness in Section~\ref{sec:CoSMoP}, while Section~\ref{sec:experimental_setup} demonstrates the experimental results for the proposed robotic coordination scenario by deploying three robots that are equipped with dedicated wireless communication capabilities, localization cameras, and the proposed supervisors that are implemented as MATLAB StateFlow machines. Concluding remarks and future work are discussed in Section~\ref{sec:conclusion}.

\section{Related Work}\label{sec:related work}
%Coordination and planning of multi-agent systems have been fundamental research problems of many recent studies. Considerable amount of existing research efforts has been devoted to behavior-based approach \cite{ark}, in which coordination of multi-agent systems can be accomplished by taking advantage of pre-defined behaviors or by exploiting learning algorithms from distributed artificial intelligence \cite{ferb,la}. Despite the capability of adapting to various team tasks, much of the behavior-based work shows empirical features, resulting in a trial-and-error design process and therefore lacking guarantees of the performance. Although recent study \cite{lyons} has taken performance verification of behavior-based agents into consideration, it mainly focuses on the mission planning problem of one single agent.

The past two decades have witnessed the development of performance-guaranteed design of cooperative multi-agent systems, and blooming contributions have been made to addressing various distributed coordination and control purposes, such as consensus \cite{osm2}, flocking \cite{osm1}, rendezvous \cite{dim}, formation control \cite{reyes} and cooperative learning \cite{la}. The achievement of the global coordination goals is enforced by Lyapunov stability \cite{osm1}, barrier certificates \cite{pan} and game theory \cite{mard}. Despite fulfillment of steady-state performance objectives, satisfaction of more complicated and temporal specifications is not investigated in these previous works.

The desire to improve the expressiveness of tasks drives our interest towards {\it formal} specifications, including regular languages, linear temporal logic (LTL) and computation tree logic (CTL) formulas \cite{baier}. A recent trend of ``symbolic planning" \cite{belta,fai} has established the standard procedure for synthesizing correct-by-construction controllers for formal specifications by combining abstraction-based approaches \cite{klo,tab} with formal methods \cite{ram,ram2,gaz2}. However, most of the existing works usually assume full knowledge of the environment. To cope with the uncertain environments, Chen et al. \cite{chen2} designed a control policy to achieve a surveillance mission by combining formal synthesis methods with automata learning of the environment, whereas Lahijanian et al. \cite{lahi} employed a multilayered synergistic planner to generate trajectories that partially satisfy an LTL specification in uncertain environments; note that both of the aforementioned works focused on single-agent case.
%In addition to environment uncertainties, Pham et al. \cite{pham} studied the uncertainties from the specification languages and considered the transparency of control specifications; however, possible uncertainties arising in the environment were neglected.

Although control of a single agent to attain formal specifications is tractable, extensions to the multi-agent case are non-trivial. Filippidis et al. \cite{fil} proposed a decentralized control scheme to satisfy local LTL specifications with communication constraints among the agents; and this idea was developed by Guo and Dimarogonas \cite{guo} to derive a partially decentralized coordination scheme that decomposed the team into clusters of dependent agents to fulfill local LTL specifications. To cope with computational complexity issues, the results were further extended by Tumova et al. \cite{tum} by employing receding horizon planning techniques. Nevertheless, specifying each agent's individual task is difficult to scale up and usually becomes burdensome for human operators in application.

Our work is most closely related to coordination approaches presented in \cite{chen}, \cite{kar} and \cite{seow}. Facing a global LTL specification, Kloetzer and Belta \cite{klo2} derived a centralized solution and assigned individual tasks for each agent accordingly. Nonetheless, deployment of such approaches suffered from the ``state explosion" due to its monolithic design pattern. ``Trace-closed" regular language specifications were studied in \cite{chen} for automatic deployment of robotic teams. Karimadini and Lin \cite{kar} presented necessary and sufficient conditions under which the global tasks can be retrieved by local ones, while Partovi and Lin \cite{ali} investigated this problem for CTL specifications. Brandin et al. \cite{brandin} studied the coordinated planning of multi-agent systems by using an incremental verification approach, whereas Seow et al. \cite{seow} investigated the coordination of multi-agent systems from a supervisory control perspective with additional ``coordination modules". These results focused on task allocation and/or design of control policies among multiple agents with known external environment, while in our proposed work, coordination and control co-design in the presence of both system and environment uncertainties are investigated.

\section{Preliminaries}\label{sec:preliminaries}

This section presents preliminaries in automata theory, supervisory control, compositional verification, and a brief introduction of the $L^*$ learning algorithm \cite{ang}. Throughout the rest of this paper, we use the following notations. Let $\mathbb{N}$ denote the set of non-negative integers. For a finite set $\Sigma$, let $2^\Sigma$ and $|\Sigma|$ denote its power set (set of all subsets) and cardinality, respectively. For two sets $\Sigma_1$ and $\Sigma_2$, $\Sigma_1-\Sigma_2$ denotes the set-theoretic difference of $\Sigma_1$ and $\Sigma_2$; $\Sigma_1\Delta\Sigma_2$ denotes the symmetric difference of of $\Sigma_1$ and $\Sigma_2$, i.e., $\Sigma_1\Delta\Sigma_2=(\Sigma_1-\Sigma_2)\cup(\Sigma_2-\Sigma_1)$; and $\Sigma_1\Sigma_2=\{\sigma_1\sigma_2|\sigma_1\in\Sigma_1, \sigma_2\in\Sigma_2\}$, where $\sigma_1\sigma_2$ stands for the concatenation of two elements $\sigma_1$ and $\sigma_2$.

\subsection{Automata and Regular Language Models}
For a given finite set (alphabet) of {\it events} $\Sigma$, a finite sequence $w$ composing of elements in $\Sigma$, i.e., $w=\sigma_0\sigma_1\ldots \sigma_m$, is called a {\it word} over $\Sigma$. We use $\Sigma^*$ to denote the {\it Kleene-closure} of $\Sigma$ \cite{baier}, including the empty word $\epsilon$. The length of a word $w\in \Sigma^*$ is denoted by $|w|$. A subset of $\Sigma^*$ is called a {\it language} over $\Sigma$. The {\it prefix-closure} of a language $L\subseteq \Sigma^*$, denoted as $\overline{L}$, is the set of all {\it prefixes} of words in $L$, i.e., $\overline{L}=\{s\in\Sigma^*|(\exists t\in\Sigma^*)[st\in L]\}$. $L$ is said to be {\it prefix-closed} if $\overline{L}=L$. For two languages $L_1,L_2\subseteq \Sigma^*$, the {\it quotient} is the collection of prefixes of words in $L_1$ with a suffix that belongs to $L_2$, i.e., $L_1/L_2=\{s\in\Sigma^*|(\exists t)[(t\in L_2 )\land (st \in L_1)]\}$.
%We also define that $L_1\setminus L_2=\{t\in\Sigma^*\vert (\exists s\in L_2)[st\in L_1]\}$.

To recognize languages over $\Sigma$, the definition of deterministic finite automaton \cite{baier} is recalled as follows.

\begin{definition}[Deterministic Finite Automaton]\label{dfa}
A deterministic finite automaton (DFA) is a 5-tuple
$$G=(Q,\Sigma,q_0,\delta,Q_m),$$
where $Q$ is a finite set of states, $\Sigma$ is a finite set of events, $q_0\in Q$ is an initial state, $\delta: Q\times \Sigma \to Q$ is a partial transition function and $Q_m\subseteq Q$ is the set of the marked (accepting) states.
\end{definition}

The transition function $\delta$ can be extended to $\delta: Q\times \Sigma^* \to Q$ in the usual manner \cite{cas}. We use the notation $\delta(q,s)!$ to denote that the transition $\delta(q,s)$ is defined. A DFA $G$ is said to be {\it accessible} if $(\forall q\in Q)(\exists s\in\Sigma^*)[\delta(q_0,s)=q]$; $G$ is said to be {\it coaccessible} if $(\forall q\in Q)(\exists s\in\Sigma^*)[\delta(q,s)\in Q_m]$; $G$ is said to be {\it trim} if it is both accessible and coaccessible \cite{cas}.

A {\it run} of a DFA $G$ on a finite word $w=\sigma_0\sigma_1\ldots \sigma_m\in\Sigma^*$, is a sequence of states $Run(w)=q_0q_1\ldots q_{m+1}\in Q^*$ satisfying $q_{i+1}=\delta(q_i,\sigma_i) (i=0,1,\ldots,m)$. The language generated by a DFA $G$ is given by $L(G)=\left\{s\in \Sigma^* \vert \delta(q_0,s)!\right\}$, and the language accepted (recognized) by $G$ is given by $L_m(G)=\left\{s\in \Sigma^* \vert s\in L(G), \delta(q_0,s)\in Q_m\right\}$. A language $L\subseteq\Sigma^*$ is said to be {\it regular} if there exists a DFA $G$ such that $L(G)=\overline{L}$ and $L_m(G)=L$. We will focus our study on regular languages throughout the rest of this paper.

%The {\it prefix-closure} of a regular language $L\subseteq \Sigma^*$, denoted as $\overline{L}$, is the set of all {\it prefixes} of words in $L$, i.e., $\overline{L}=\{s\in\Sigma^*|\exists t\in\Sigma^*: st\in L\}$, where $st$ denotes the concatenation of two words $s$ and $t$. $L$ is said to be {\it prefix-closed} if $\overline{L}=L$. For two languages $L_1,L_2\subseteq \Sigma^*$, the {\it right quotient} (or simply quotient) is the collection of prefixes of words in $L_1$ with a suffix that belongs to $L_2$, i.e., $L_1/L_2=\{s\in\Sigma^*|(\exists t)[(t\in L_2 )\land (st \in L_1)]\}$. We also define that $L_1\setminus L_2=\{t\in\Sigma^*\vert (\exists s\in L_2)[st\in L_1]\}$.

The {\it complement language} of a regular language $L$ over $\Sigma$ is defined as $coL=\Sigma^*-L$. Let $G$ be a DFA accepting $L$, the {\it complete} DFA of $G$, denoted as $\tilde{G}$, is defined as follows.

\begin{definition}[Complete DFA]\label{complement DFA}
Let $q_e$ be an ``error state". Given a DFA $G=(Q,\Sigma, q_0,\delta,Q_m)$, the complete model of $G$ is defined as a DFA $\tilde{G}=(\tilde {Q},\Sigma, q_0,\tilde{\delta}, \{q_e\})$, where $\tilde{Q}=Q\cup \{q_e\}$, and
$$
\forall \tilde{q}\in \tilde {Q}, \sigma\in \Sigma, \tilde {\delta}(\tilde{q},\sigma)=
\begin{cases}
\delta(\tilde{q},\sigma), & \mbox{if }\tilde {q}\in Q \land \delta(\tilde {q},\sigma)!, \\
q_e, & \mbox{if } \tilde{q} = q_e \lor \delta(\tilde {q},\sigma)=\emptyset.
\end{cases}
$$
\end{definition}

It can be verified that $L_m(\tilde{G})=\Sigma^*$. The {\it complement DFA} $coG$ of $G$ that accepts $coL$, is formed by swapping the marked states of $G$ with its non-marked states and vice versa, i.e., $coG=(\tilde Q,\Sigma, q_0,\tilde{\delta}, \tilde Q-Q_m)$.

We conclude the review of automata theory by investigating the concurrent operation of two or more DFAs.

\begin{definition}[Parallel Composition]\label{parallel}
\cite{cob2,cas} Given two finite automata $G_i=(Q_i,\Sigma_i,q_{i,0},\delta_i,Q_{i,m}) (i=1,2)$, the parallel composition of $G_1$ and $G_2$, denoted by $G_1\vert\vert G_2$, is defined as the finite automata
$G_1\vert\vert G_2=(Q,\Sigma, q_0,\delta,Q_m)$, where $Q=Q_1\times Q_2$ is the set of states, $\Sigma=\Sigma_1\cup\Sigma_2$ is the set of events, $q_0=(q_{1,0}, q_{2,0})$ is the initial state, $Q_m=Q_{1,m}\times Q_{2,m}$ is the set of marked states, and the $\delta : Q\times \Sigma \to Q$ is the transition function that is given by:
\begin{equation*}
\begin{split}
&\forall q=(q_1,q_2)\in Q, \sigma\in \Sigma, \\
&\delta(q,\sigma)=
\begin{cases}
(\delta_1(q_1,\sigma), \delta_2(q_2,\sigma)), & \mbox{if }\delta_1(q_1,\sigma)!, \delta_2(q_2,\sigma)!,\\
(\delta_1(q_1,\sigma), q_2), & \mbox{if }\delta_1(q_1,\sigma)!, \sigma\not\in \Sigma_2,\\
(q_1, \delta_2(q_2,\sigma)), & \mbox{if }\delta_2(q_2,\sigma)!, \sigma\not\in \Sigma_1,\\
\varnothing, &\mbox{otherwise,}
\end{cases}
\end{split}
\end{equation*}
\end{definition}

\begin{remark}
%In the parallel composition of two automata, the two automata are ``synchronized" on the common events in $\Sigma_1\cap\Sigma_2$. The other events in $\Sigma_1\cup\Sigma_2$ are not subject to such a constraint and can be executed whenever possible.
It is worth pointing out that parallel composition of more than two finite automata can be defined recursively based on the associativity of parallel composition.
\end{remark}

The connection between the behavior of the composed DFA and a local DFA is captured by the ``natural projection" from the global event set to the local one. For non-empty event sets $\Sigma$ and $\Sigma'$ satisfying $\Sigma'\subseteq\Sigma$, the natural projection $P_{\Sigma'}: \Sigma^*\to \Sigma'^*$ is inductively defined as: $P_{\Sigma'}(\epsilon)=\epsilon$; $P_{\Sigma'}(\sigma)=\sigma$ if $\sigma\in\Sigma'$; $P_{\Sigma'}(s\sigma)=P_{\Sigma'}(s)P_{\Sigma'}(\sigma)$, $\forall s\in \Sigma^*, \sigma\in \Sigma$. The set-valued inverse projection $P_{\Sigma'}^{-1}:\Sigma'^* \to 2^{\Sigma^*}$ of $P_{\Sigma'}$ is defined as $P_{\Sigma'}^{-1}(s)=\left\{t\in \Sigma^*: P_{\Sigma'}(t)=P_{\Sigma'}(s)\right\}$.
%$P_i(\epsilon)=\epsilon$; $P_i(\sigma)=\sigma$ if $\sigma\in\Sigma_i$;
%$P_i(s\sigma)=P_i(s)P_i(\sigma)$, $\forall s\in \Sigma^*, \sigma\in \Sigma$. The inverse %projection $P_i^{-1}:2^{\Sigma_i^*} \to 2^{\Sigma^*}$ is then defined as $P_i^{-1}(s)=\left\{t\in %\Sigma^*: P(t)=P(s)\right\}$.
%\begin{definition}[Natural Projection]
%\cite{cas}
%$$$
%$$
%\forall s\in \Sigma^*, \sigma\in \Sigma, P(s\sigma)=
%\begin{cases}
%P(s),  & \mbox{if } \sigma\notin \Sigma' \\
%P(s)\sigma, &\mbox{otherwise}
%\end{cases}
%$$
%\end{definition}
%

Let $I=\{1,2,\ldots,n\}$ be an index set. For local event sets $\Sigma_i (i\in I)$ and the global event set $\Sigma=\bigcup_{i\in I} \Sigma_i$, let $P_i$ denote the natural projection from $\Sigma^*$ to $\Sigma_i^*$. The {\it synchronous product} of a finite set of regular languages $L_i\subseteq \Sigma_i^* (i\in I)$, denoted by $\vert\vert_{i\in I} L_i$, is defined as follows.

\begin{definition}[Synchronous Product]\label{product}
\cite{will} For a finite set of regular languages $L_i\subseteq \Sigma_i^* (i=1,2,\ldots,n)$,
\begin{equation}
\vert\vert_{i\in I} L_i=\bigcap_{i\in I} P_i^{-1}(L_i).
\end{equation}
\end{definition}

It can be shown that for $G=\vert\vert_{i=1}^n G_i$,
$L(G)=\vert\vert_{i=1}^n L(G_i)$.

The notion of language separability plays an essential role in the rest of the paper and is defined formally as follows.

\begin{definition}[Separable Languages]\label{sep}
\cite{will} For the local event sets $\Sigma_i (i=1,2,\ldots,n)$ and the global event set $\Sigma=\bigcup_{i=1}^n \Sigma_i$, a language $L\subseteq \Sigma^*$ is said to be {\it separable} with respect to $\{\Sigma_i\}_{i=1}^n$ if there exists a set of local languages $L_i\subseteq \Sigma_i^*$ for each $i\le n$ such that $L=\vert\vert_{i=1}^n L_i$.
\end{definition}

It has also been shown in \cite{will} that $L\subseteq \Sigma^*$ is separable with respect to $\{\Sigma_i\}_{i=1}^n$ if and only if $L=\vert\vert_{i=1}^n P_i(L)=\bigcap_{i=1}^n P_i^{-1}[P_i(L)]$.

\begin{remark}
For the local event sets $\Sigma_i (i\in I)$ and the global event set $\Sigma=\bigcup_{i\in I} \Sigma_i$, let $D(\Sigma)=\{(\sigma_1,\sigma_2)\in \Sigma\times\Sigma| \exists i\in I, \sigma_1,\sigma_2\in\Sigma_i\}$. The independence relation is defined as $I(\Sigma)=\Sigma\times\Sigma -D(\Sigma)$ \cite{lss}. It follows from \cite{lss} that a language $L\subseteq L(G)$ possesses a non-empty separable sublanguage if and only if $I(\Sigma)$ is transitive.
\end{remark}

\subsection{Supervisory Control Theory}
Given a system modeled by a DFA $G=(Q,\Sigma,q_0,\delta,Q_m)$, let $\Sigma$ be partitioned into the controllable event set $\Sigma_c$ and the uncontrollable event set $\Sigma_{uc}$, i.e., $\Sigma=\Sigma_{c} \dot \cup \Sigma_{uc}$. A {\it supervisor} \cite{ram,ram2} $S$ is another DFA over $\Sigma$ with the constraint that each state of $S$ is marked. $S$ operates in parallel with $G$ and modifies the behavior of $G$ by disabling certain controllable events. The behavior of $G$ under control of $S$ is denoted as $L(S\vert\vert G)$. Given a prefix-closed and non-empty specification language $L=\overline{L}\subseteq L(G)$, there exists a local supervisor $S$ such that $L(S\vert\vert G)=L$ if and only if $L$ is {\it controllable} with respect to $G$ and $\Sigma_{uc}$.

%To pursue supervisory control \cite{ram}\cite{ram2} of a DFA $G$ over $\Sigma$, $\Sigma_i$ is partitioned into the set of (local) controllable events and the set of uncontrollable events, i.e., $\Sigma_i=\Sigma_{i,uc} \dot \cup \Sigma_{i,c}$. In practice, an uncontrollable event may represent an action that cannot be prohibited or controlled in advance, for example a communication message from another agent. A local supervisor $S_i$ associated with $G_i$ is another automaton that operates in parallel with $G_i$, and the local controlled behavior can then be modeled as $L(S_i\vert\vert G_i)$.

\begin{definition}[Controllable Languages]\label{controllability}
\cite{cas,ram} A language $L\subseteq L(G)$ is said to be controllable with respect to $G$ and $\Sigma_{uc}$ (or controllable for short) if
%$(\forall s\in\overline{L})$$(\forall \sigma\in\Sigma_{uc})$$[s\sigma\in L(G)\Rightarrow s\sigma\in L]$; or equivalently,
$\overline{L}\Sigma_{uc}\cap L(G) \subseteq \overline{L}$.
\end{definition}

If the controllability of $L$ fails, a supervisor is synthesized for the supremal controllable (also prefix-closed) sublanguage of $L$, denoted as $\sup C(L)$.
%
%\begin{remark}
%Throughout the rest of this paper, we assume that prior knowledge of $G_i$, $i\in I$ is not necessarily available, which is quite reasonable due to the uncertainty or scale of $G_i$. Nevertheless, we assume that $\Sigma_i$, $i\in I$ is accessible information for coordination and control purposes.
%\end{remark}

\subsection{Compositional Verification}
We employ a compositional verification procedure to evaluate the coordinated behavior of the multi-agent system. First, we present the concept of ``property satisfaction" in the context of formal verification.

\begin{definition}[Satisfaction Relation]\label{satisfaction}
Given a system modeled by a DFA $M=(Q,\Sigma_M, q_0,\delta,Q_m)$ and a regular language (property) that is accepted by a DFA $P=(Q_P,\Sigma_P, q_{0,P}, \delta_P, Q_{m,P})$ with $\Sigma_P \subseteq \Sigma_M$, the system $M$ is said to satisfy $P$, written as $M \models P$, if and only if $\forall t\in L_m(M): P_P(t)\in L_m(P)$, where $P_P$ denotes the natural projection from $\Sigma_M$ to $\Sigma_P$.
\end{definition}

%In case where $\Sigma_P=\Sigma_M$, $M\models P$ is equivalent to $L_m(M)\subseteq L_m(P)$, indicating that the satisfaction relation reduces to a language inclusion relation.
Note that when $P$ is represented by a prefix-closed language, the marked languages considered in Definition \ref{satisfaction} can be altered by generated languages.

To mitigate the ``state explosion" issue arising in verification for systems that consist of multiple components, the assume-guarantee reasoning scheme \cite{cob2} is employed at this point. In this paradigm, a formula to be checked is a triple $\langle A \rangle M \langle P \rangle $, where $M$ is a system model, $P$ is a property and $A$ is an assumption about $M$'s environment, all of which are represented by a corresponding DFA. The formula holds if whenever $M$ is part of a system satisfying $A$, then the system must guarantee the property $P$, i.e., $\forall E$, $E\vert\vert M \models A$ implies that $E\vert\vert M\models P$ \cite{cob2}. The following theorem provides a DFA characterization of the assume-guarantee satisfaction.

\begin{theorem}\label{assume-guarantee}
\cite{cob2} $\langle A \rangle M \langle P \rangle $ holds if and only if $q_e$ is unreachable in $A\vert\vert M \vert\vert \tilde{P}$ ($\tilde{P}$ is the complete DFA of $P$).
\end{theorem}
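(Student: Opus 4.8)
The plan is to prove both directions by unpacking the definition of the assume-guarantee triple $\langle A\rangle M\langle P\rangle$ and relating reachability of the error state $q_e$ in $A\|M\|\tilde{P}$ to language containment. The central observation is that $\tilde{P}$ is a \emph{complete} DFA with $L_m(\tilde{P})=\Sigma^*$ and a single error state $q_e$ that is entered precisely when a word's projection onto $\Sigma_P$ leaves $L_m(P)$. Thus $q_e$ is reachable in $A\|M\|\tilde{P}$ if and only if there is a word in $L(A\|M)$ whose $P$-projection violates $P$, i.e.\ some behavior that satisfies the assumption $A$ but fails the property $P$. I would first establish this equivalence as the technical core, then translate it into the $\forall E$ formulation of the triple.

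For the ``only if'' direction (triple holds $\Rightarrow$ $q_e$ unreachable), I would argue by contraposition. Suppose $q_e$ is reachable, so there exists $t\in L(A\|M)$ driving $\tilde{P}$ into $q_e$, meaning $P_P(t)\notin L_m(P)$ while $t$ is accepted by both $A$ and $M$ on their respective projections. I would then exhibit a concrete environment $E$ that realizes this witness: taking $E$ essentially as (a DFA generating) the projection of $t$ onto the environment's alphabet, one gets $E\|M\models A$ (the witness lies in $A$) yet $E\|M\not\models P$ (its $P$-projection escapes $L_m(P)$), contradicting the assumed triple. The care here is in choosing $E$ over the correct alphabet so that the synchronous product reproduces exactly the offending word rather than spuriously enabling or disabling events.

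For the ``if'' direction ($q_e$ unreachable $\Rightarrow$ triple holds), I would take an arbitrary environment $E$ with $E\|M\models A$ and show $E\|M\models P$. Any marked word $t\in L_m(E\|M)$ projects into $L_m(A)$ by the satisfaction hypothesis, so its behavior is consistent with $A$; hence the corresponding run exists in $A\|M$. Since $q_e$ is unreachable in $A\|M\|\tilde{P}$, the run of $\tilde{P}$ on $t$ never reaches $q_e$, which by the construction of $\tilde{P}$ in Definition~\ref{complement DFA} forces $P_P(t)\in L_m(P)$; by Definition~\ref{satisfaction} this is exactly $E\|M\models P$. The key lemma supporting both directions is that the reachable (non-$q_e$) states of $A\|M\|\tilde{P}$ coincide with those of $A\|M\|P$, so completing $P$ adds only the sink $q_e$ that flags violations.

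The main obstacle I anticipate is the bookkeeping around alphabets and natural projections: $M$, $A$, and $P$ may live over different (though nested) event sets $\Sigma_M,\Sigma_P,\Sigma_A$, and parallel composition synchronizes only on shared events while interleaving private ones. Ensuring that ``$q_e$ reachable in $A\|M\|\tilde{P}$'' aligns exactly with ``$\exists t\in L(A\|M): P_P(t)\notin L_m(P)$'' requires verifying that the projection $P_P$ used in Definition~\ref{satisfaction} commutes correctly with the synchronous execution of $\tilde{P}$ inside the product, and that the environment $E$ constructed in the ``only if'' direction does not introduce events that alter the $A$- or $P$-projections. This alphabet-management step, rather than the logical skeleton, is where the proof must be handled with precision; the rest follows routinely from the definitions of completion, satisfaction, and parallel composition.
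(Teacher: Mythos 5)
The paper does not prove this theorem: it is imported verbatim from \cite{cob2} as a known result, so there is no in-paper proof to compare against. Your proposal correctly reconstructs the standard argument from that reference --- reduce the triple to the trace-level condition ``every word of $L(A\Vert M)$ has its $\Sigma_P$-projection in $L(P)$,'' observe that $q_e$ in $\tilde{P}$ is a sink entered exactly when that projection first leaves $L(P)$, and discharge the $\forall E$ quantifier by instantiating a trace DFA for the counterexample direction. One point where your justification is thinner than it should be: in the ``only if'' direction, the parenthetical ``the witness lies in $A$'' does not by itself give $E\Vert M\models A$, since satisfaction quantifies over \emph{all} words of $E\Vert M$, not just the witness $t$; you need to take $E$ over the alphabet $\Sigma_A$ as the trace DFA of $P_{\Sigma_A}(t)$, so that every word of $E\Vert M$ projects onto a prefix of $P_{\Sigma_A}(t)$, and then invoke prefix-closedness of generated languages to conclude all such projections lie in $L(A)$ --- you flag this alphabet issue as the delicate step but do not actually resolve it. Relatedly, the equivalence only holds under the paper's standing safety convention (Definition~6's remark that marked languages may be replaced by generated ones for prefix-closed $P$); for a non-prefix-closed marked property, reachability of $q_e$ would not characterize violation, so that hypothesis should be stated explicitly rather than left implicit.
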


The assume-guarantee scheme summons a series of symmetric and/or asymmetric proof rules to ease the reasoning when the system $M$ consists of multiple components, i.e., $M=\vert\vert_{i=1}^n M_i$. We apply the symmetric proof rule, namely SYM-N \cite{cob2}, for the compositional verification, where $A_i$ is an assumption about $M_i$'s environment and $coA_i$ is the complement DFA of $A_i$.

\begin{center}
\begin{tabular}{ll}
1 & $\langle A_1 \rangle M_1 \langle P \rangle $ \\
$\cdots$ & \\
$n$ & $\langle A_n\rangle M_n \langle P \rangle$ \\
$n+1$ & $L_m(coA_1\vert\vert coA_2 \vert\vert \cdots \vert\vert coA_n)\subseteq L_m(P)$ \\
\hline
 & $\langle true \rangle (M_1\vert\vert M_2 \vert\vert \cdots \vert\vert M_n)\langle P \rangle$
\end{tabular}
\end{center}
%where $coA_i$ is the complement DFA of $A_i$.

A central notion of the compositional verification is the weakest assumption, which is defined formally as follows.

\begin{definition}[Weakest Assumption for $\Sigma_{i,IF}$]\label{weakest assumption}
(Adapted from \cite{cob2}) Let $M=\vert\vert_{i=1}^n M_i$ be a system with local alphabets $\Sigma_i (i=1,2,\ldots,n)$, $P$ be a DFA representation of a property and $\Sigma_{i,IF}$ be a specified interface alphabet of $M_i$ to its environment. The weakest assumption $A_i^{w,\Sigma_{i,IF}}$ of $M_i$ over the alphabet $\Sigma_{i,IF}$ for property $P$ is a DFA such that: 1) $\Sigma(A_i^{w,\Sigma_{i,IF}})=\Sigma_{i,IF}$; 2) for any $M_{-i}=\vert\vert_{j\in I,j\ne i} M_j$, $\langle true \rangle M_i\vert\vert (P_{i,IF}(M_{-i}))\langle P \rangle$ if and only if $\langle true \rangle M_{-i} \langle A_i^{w,\Sigma_{i,IF}} \rangle$, where $P_{i,IF}$ is the natural projection from $\bigcup_{i=1}^n \Sigma_i$ to $\Sigma_{i,IF}$.
\end{definition}

Since then we require that $\Sigma_P\subseteq\bigcup_{i\in I} \Sigma_i$, and we use the notation $A_i^w$ to denote the weakest assumption for component $M_i$ over $\Sigma_{i,IF}$ being set to $\Sigma_{A_i}$ such that $\Sigma_{A_i}\subseteq (\bigcap_{i\in I} \Sigma_i)\cup \Sigma_P (i\in I)$.

\subsection{$L^*$ Learning Algorithm}
Angluin \cite{ang} developed the $L^*$ learning algorithm (abbreviated as $L^*$ algorithm hereafter) for solving the problem of identifying an unknown regular language $U$ over an alphabet $\Sigma$ and of constructing a minimal DFA\footnote{By ``minimal" we mean that the obtained DFA contains the least number of states.} that accepts it. Interested readers are referred to \cite{ang} for more details.

The $L^*$ algorithm interacts with a {\it minimally adequate Teacher}, henceforth deemed as the {\it Teacher}, that answers two types of questions. The first type is referred to as {\it membership queries}, i.e., the Teacher justifies whether or not a word $s\in \Sigma^*$ belongs to $U$; the second type is a {\it conjecture}, in which the $L^*$ algorithm constructs a DFA $M$ and the Teacher justifies $L_m(M)=U$ or not; the Teacher returns a counterexample $c\in L_m(M)\Delta U$ whenever the conjecture is denied.

In order to construct a conjectured DFA $M$, the $L^*$ algorithm incrementally collects information about a finite collection of words in $\Sigma^*$ and records their membership status in an {\it observation table}. The observation table is a three-tuple $(S,E,T)$, where $S$ and $E$ are a set of prefix-closed and suffix-closed words over $\Sigma$, respectively, and $T: (S\cup S\Sigma)E\to\{0,1\}$ is the {\it membership function} that maps words in $s\in (S\cup S\Sigma)$ onto $1$ if they are members of $U$, otherwise it returns $0$. The observation table can be viewed as a 2-dimensional array whose rows are labeled by words $s\in S\cup S\Sigma$ and whose columns are labeled by symbols $e \in E$. The entries in the labeled rows and columns are evaluated by the function $T(se)$. If $s\in S\cup S\Sigma$, then the {\it row function} $row(s)$ denotes the finite function $f$ from $E$ to $\{0,1\}$ defined by $f(e)=T(se)$. The following two properties are essential in the $L^*$ algorithm.

%The observation table constructed by $L^*$ after the $i$-th iteration will be denoted as $T^i$.

\begin{definition}[Closeness and Consistency]\label{cc}
\cite{ang} An observation table is said to be {\it closed} if $(\forall s\in S)(\forall \sigma\in\Sigma)$ $[\exists s'\in S: row(s\sigma)=row(s')]$. It is said to be {\it consistent} if $(\forall s_1, s_2\in S: row(s_1)=row(s_2))(\forall \sigma\in \Sigma)[row(s_1\sigma)=row(s_2\sigma)]$.
%\begin{itemize}
%\item {\it closed} if $(\forall s\in S)(\forall \sigma\in\Sigma)$ $[\exists s'\in S: row(s\sigma)=row(s')]$.
%\item {\it consistent} if $(\forall s_1, s_2\in S: row(s_1)=row(s_2))(\forall \sigma\in \Sigma)[row(s_1\sigma)=row(s_2\sigma)]$.
%\end{itemize}
\end{definition}

If an observation table is not closed, then $s\sigma$ is added to $S$ and $T$ is updated to make it closed where $s\in S$ and $\sigma\in\Sigma$ are the elements for which no $s'\in S$ exists. If an observation table is not consistent, then there exist $s_1, s_2\in S$, $\sigma\in \Sigma$ and $e\in E$ such that $row(s_1)=row(s_2)$ but $T(s_1\sigma e)\ne T(s_2\sigma e)$; to make it consistent, $\sigma e$ is added to $E$ and $T$ is updated.
%\allowdisplaybreaks
%\begin{algorithm}[H]\label{L_algo}
%\caption{$L^*$ \cite{ang}}
%\begin{algorithmic}[l]
%\REQUIRE Alphabet $\Sigma$ and membership quereis $T$
%\ENSURE A DFA $M$ such that $L_m(M)=K$
%\STATE Set $S=\epsilon$ and $E=\epsilon$.
%\STATE Form the initial observation table $T_i(S,E,T)$ where $i=1$
%\WHILE {$T_i(S,E,T)$ is not completed}
%\IF { $T_i$ is not consistent }
%\STATE find $s_1, s_2\in S$, $\sigma\in \Sigma$ and $e\in E$ such that $row(s_1)=row(s_2)$ but $T(s_1\sigma e)\ne T(s_2\sigma e)$;
%\STATE Add $\sigma e$ to $E$;
%\STATE extend $T_i$ to $(S\cup S\Sigma)E$ using membership queries to the oracle.
%\ENDIF
%\IF { $T_i$ is not closed }
%\STATE find $s_1\in S$, $\sigma\in \Sigma$ such that $row(s_1\sigma)$ is different from $row(s)$ for all $s\in S$;
%\STATE Add $s_1\sigma$ to $S$;
%\STATE extend $T_i$ to $(S\cup S\Sigma)E$ using membership queries to the oracle.
%\ENDIF
%\ENDWHILE
%\STATE Let $M_i=M(T_i)$ as the conjectured DFA
%\STATE Ask the Teacher the validity of $M_i$.
%\IF {A counterexample $t\in \Sigma^*$ is provided }
%\STATE Add $t$ and all its prefixes into $S$;
%\STATE extend $T_i$ to $(S\cup S\Sigma)E$ using membership queries to the oracle;
%\ENDIF
%\STATE $i=i+1$ and return to {\bf while} until no more counterexamples.
%\RETURN $M_i$.
%\end{algorithmic}
%\end{algorithm}
%The working procedure of Angluin's $L^*$ is summarized as Algorithm 1.

Once an observation table is both closed and consistent, a candidate DFA $M(S,E,T)=(Q,\Sigma,q_0,\delta,Q_m)$ over the alphabet $\Sigma$ is constructed as follows: $Q=\left\{ {row(s): s\in S} \right\}$, $q_0=row(\epsilon)$, $Q_m=\left\{ {row(s): (s\in S)\land (T(s)=1)}\right\}$, and $\delta(row(s),\sigma)=row(s\sigma)$.
%$$
%\begin{array}{l}
%Q=\left\{ {row(s): s\in S} \right\}, \\
%q_0=row(\epsilon), \\
%Q_m=\left\{ {row(s): (s\in S)\land (T(s)=1)}\right\}, \\
%\delta(row(s),\sigma)=row(s\sigma).
%\end{array}
%$$

The Teacher takes $M$ as a conjecture and if $L_m(M)=U$, the Teacher returns ``True" with the current DFA $M$; otherwise, the Teacher returns ``False" with a counterexample $c\in L_m(M)\Delta U$. The $L^*$ algorithm adds all the words in $\overline{\{c\}}$ to $S$ and iterate the entire procedure to update a new closed and consistent observation table.
The following theorem asserts that the sequence of DFAs constructed by the $L^*$ algorithm to be consistent\footnote{A DFA $M$ is said to be consistent with the function $T$ if for every $s\in S\cup S\Sigma$ and $e \in E$, $\delta(q_0,se)\in Q_m$ if and only if $T(se)=1$.} with $T$ after each counterexample examination is strictly increasing in the number of states.

\begin{theorem}\label{L_iso}
\cite{ang} If $(S,E,T)$ is closed and consistent, and $M(S,E,T)$ is a DFA constructed from $(S,E,T)$. Let $(S',E',T')$ be the updated closed and consistent observation table if a counterexample $t$ is added to $(S,E,T)$. If $M(S,E,T)$ has $n$ states, then the DFA $M(S',E',T')$ constructed from $(S',E',T')$ has at least $n+1$ states.
\end{theorem}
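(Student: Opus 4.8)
The plan is to reduce the claim to two facts about the DFA produced from a closed and consistent table --- a \emph{minimality/uniqueness} property and a \emph{monotonicity} property --- and then to exploit the defining feature of a counterexample. I would first record the monotonicity observation. Because a counterexample is incorporated by adjoining $\overline{\{t\}}$ to $S$ and (possibly) new suffixes to $E$, we have $S\subseteq S'$ and $E\subseteq E'$, while all membership values are read off the \emph{same} target language $U$; hence $T'$ restricted to $(S\cup S\Sigma)\times E$ coincides with $T$. Consequently any DFA that is consistent with $T'$ (in the sense of the footnote accompanying the theorem) is automatically consistent with $T$.

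Next I would establish the key lemma: if $(S,E,T)$ is closed and consistent, then $M(S,E,T)$ is itself consistent with $T$ and is, up to isomorphism, the \emph{unique minimum-state} DFA consistent with $T$. Consistency of $M(S,E,T)$ with $T$ follows from Definition~\ref{cc}: consistency of the table makes the transition rule $\delta(row(s),\sigma)=row(s\sigma)$ well defined (equal rows cannot be separated by any $\sigma e$), and closedness guarantees that $row(s\sigma)$ is again a state, so $\delta$ is total on the reachable part; an induction on $|e|$ then shows $\delta(q_0,se)\in Q_m$ if and only if $T(se)=1$. For minimality I would argue in Myhill--Nerode fashion: given any DFA $A$ consistent with $T$, map each state $row(s)$ of $M(S,E,T)$ to the state of $A$ reached from its initial state on input $s$. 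Consistency with $T$ forces this map to be well defined on rows and \emph{injective} --- two distinct states $row(s_1)\ne row(s_2)$ are separated by some $e\in E$ with $T(s_1e)\ne T(s_2e)$, so the corresponding $A$-states classify $e$ differently and must be distinct. Injectivity yields that $A$ has at least $n$ states, and when $A$ has exactly $n$ states the map is a bijection that respects transitions and marking, i.e.\ an isomorphism.

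Finally I would invoke the counterexample property to rule out equality. Since $t$ is a counterexample for $M=M(S,E,T)$, we have $t\in L_m(M)\,\Delta\,U$, so $M$ misclassifies $t$ relative to $U$. In the updated table $t\in S'$ (all its prefixes were added) and $\epsilon\in E'$, so $T'(t)$ equals the true membership of $t$ in $U$; as $M'=M(S',E',T')$ is consistent with $T'$, the run of $M'$ on $t$ ends in a marked state exactly when $t\in U$, so $M'$ \emph{correctly} classifies $t$. Hence $M$ and $M'$ disagree on the membership of $t$, so $L_m(M)\ne L_m(M')$ and the two automata cannot be isomorphic. By monotonicity $M'$ is consistent with $T$, so the lemma gives that $M'$ has at least $n$ states; were $M'$ to have exactly $n$ states, the uniqueness clause of the lemma would make $M'$ isomorphic to $M$, a contradiction. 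Therefore $M'$ has at least $n+1$ states.

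I expect the main obstacle to be the minimality lemma --- specifically, verifying that the state map into an arbitrary consistent DFA $A$ is simultaneously well defined and injective, which is precisely where closedness and consistency of the table are used and where Angluin's original argument does its real work. The monotonicity step and the counterexample step are then comparatively routine bookkeeping.
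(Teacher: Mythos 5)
The paper itself offers no proof of this theorem---it is imported verbatim from Angluin \cite{ang}---and your proposal is a correct reconstruction of Angluin's original argument: monotonicity of the table under extension (so the new conjecture is consistent with the old function $T$), the lemma that $M(S,E,T)$ is the unique minimum-state DFA consistent with $T$, and inequivalence of the two conjectures on the counterexample $t$, which together force at least $n+1$ states. One small repair in your minimality lemma: for an \emph{arbitrary} DFA $A$ consistent with $T$, the map $row(s)\mapsto\delta_A(q_{0,A},s)$ need not be well defined, since $A$ may have redundant states with $s_1,s_2$ of equal rows reaching distinct $A$-states; but only the injectivity direction (distinct rows reach distinct states, via a separating $e\in E$) is needed for the ``at least $n$ states'' count, and well-definedness in the exactly-$n$ case then follows by counting---alternatively, argue in the reverse direction as Angluin does, assigning to each state $q$ of $A$ the row function $e\mapsto 1$ iff $\delta_A(q,e)\in Q_{m,A}$, which sidesteps the issue entirely.
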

%
%It is worth pointing out that the conjectures made by the $L^*$ strictly increase in size; each conjecture is smaller than the next one, and all incorrect conjectures are smaller than $M$.

The correctness and finite convergence of the $L^*$ algorithm are assured by the following theorem.

\begin{theorem}\label{L_ter}
\cite{ang} Given an unknown regular language $U\subseteq \Sigma^*$, the $L^*$ algorithm eventually terminates and outputs an DFA isomorphic to the minimal one accepting $U$. Moreover, if $n$ is the number of states of the minimal DFA accepting $U$ and $m$ is an upper bound on the length of any counterexample provided by the Teacher, then the total running time of the $L^*$ algorithm is bounded by a polynomial in $m$ and $n$.
\end{theorem}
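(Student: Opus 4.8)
The plan is to prove the three assertions — termination, correctness, and the polynomial running-time bound — by exploiting one central invariant: by the Myhill--Nerode theorem the minimal DFA accepting $U$ has exactly $n$ states, each corresponding to a distinct right-congruence class, so the number of distinct values $row(s)$ that can appear in any observation table $(S,E,T)$ built from $U$ never exceeds $n$. Indeed, if $s_1$ and $s_2$ lie in the same Myhill--Nerode class of $U$, then $s_1 e\in U \iff s_2 e\in U$ for every $e$, whence $row(s_1)=row(s_2)$; contrapositively, distinct rows witness distinct classes. Since the number of states of $M(S,E,T)$ constructed from a closed and consistent table equals the number of distinct rows among $S$, every conjectured DFA has at most $n$ states.

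First I would establish termination of the inner loop that renders a table closed and consistent. Each closedness repair adds some $s\sigma$ to $S$ whose row value differs from all existing rows, strictly increasing the count of distinct rows; each consistency repair adds a column $\sigma e$ to $E$ separating two rows previously deemed equal, again strictly increasing that count. Because the count is monotone non-decreasing, starts at $1$, and is bounded above by $n$, at most $n-1$ such repairs occur, so the inner loop terminates and a conjecture is issued. For the outer loop, Theorem \ref{L_iso} guarantees that the closed-and-consistent table obtained after incorporating any counterexample yields a DFA with strictly more states than its predecessor; capped at $n$ states, the Teacher can therefore return at most $n-1$ counterexamples, so at most $n$ conjectures are made and the algorithm halts.

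For correctness, note that the algorithm halts only when the Teacher accepts a conjecture $M$, which by definition of a conjecture means $L_m(M)=U$. Since $M$ has at most $n$ states yet accepts a language whose minimal acceptor has exactly $n$ states, $M$ must have precisely $n$ states and hence be minimal; a minimal DFA for a regular language being unique up to isomorphism, $M$ is isomorphic to the minimal DFA accepting $U$. Here I would also invoke the standard fact, provable directly from closedness and consistency, that $M(S,E,T)$ is consistent with $T$, so the construction is well defined and the acceptance test is meaningful.

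Finally, for the complexity bound I would tally the work. The number of consistency repairs over the whole run is at most $n-1$, so $|E|\le n$. Elements enter $S$ either through the at most $n-1$ closedness repairs or as prefixes of the at most $n-1$ counterexamples, each of length at most $m$; hence $|S|$, and with it $|S\cup S\Sigma|$, is $O(mn)$. The table has $O(|S\cup S\Sigma|\cdot|E|)=O(mn^2)$ entries, each filled by one membership query, and all table-processing operations (checking closedness and consistency, constructing $M$) are polynomial in the table size; summed over the at most $n$ conjectures, the total running time is polynomial in $m$ and $n$. The main obstacle I anticipate is not any single step but the careful bookkeeping of this last paragraph: correctly charging each string added to $S$ either to a distinct-row increase or to a counterexample prefix, and applying the $n$-bound on distinct rows consistently across repairs that are interleaved with counterexample insertions.
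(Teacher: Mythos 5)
Your proof is correct: the Myhill--Nerode bound on the number of distinct row values (at most $n$), the strict growth of the conjectured DFA's state count under counterexamples via Theorem~\ref{L_iso}, the minimality-by-uniqueness conclusion, and the final tally of table entries and membership queries together constitute exactly Angluin's original argument for this result. Note that the paper itself states this theorem without proof, citing \cite{ang}, so your reconstruction matches the canonical proof in the cited source rather than any in-paper derivation.
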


\section{Problem Formulation and Overall Approach}\label{sec:formulation}
\subsection{Automata-based Agent and Environment Models}
We study the coordination and control of multiple agents moving in a shared and uncertain environment that admits uncertain elements.
%In the context of cooperative multi-agent systems, we use $\Sigma$ to denote the set of action and motion events of the agents. To characterize the behavior of an individual agent in the discrete-event format
For a multi-agent system $G$ that consists of $n$ agents, we model the mission executions of agent $G_i (i\in I)$ as a DFA
\begin{equation}
G_i=(Q_i,\Sigma_i,q_{i,0},\delta_i,Q_{i,m}).
\end{equation}
The local mission set $\Sigma_i$ is partitioned into local controllable missions $\Sigma_{i,c}$ and local uncontrollable missions $\Sigma_{i,uc}$. We are interested in the generated language of $G_i$ and we assume that $Q_{i,m}=Q_i$. The global events, controllable and uncontrollable events are given by $\Sigma=\bigcup_{i\in I} \Sigma_i$, $\Sigma_c=\bigcup_{i\in I} \Sigma_{i,c}$ and $\Sigma_{uc}=\Sigma-\Sigma_c=\bigcap_{i\in I} \Sigma_{i,uc}$, respectively. The collective behavior of the system $G$ is captured by $G=\vert\vert_{i\in I} G_i$. For a given mission $\sigma\in\Sigma$, $I_\sigma=\{i\in I| \sigma\in\Sigma_i\}$. $|I_\sigma|>1$ suggests that all agents $G_i$ with $i\in I_\sigma$ cooperate to accomplish the mission $\sigma$.

All the agents are assumed to move in a shared environment that is partitioned into $N$ regions, and we assume that there are $N_d$ doors placed among these regions. Two adjacent regions in the environment may be connected by one or more doors. The cooperative multi-agent system is provided with the nominal environment {\it a priori}, which is modeled as a 4-tuple
\begin{equation}
\mathcal{E}_0=(V,\longrightarrow_{\mathcal{E}_0},D,F_D),
\end{equation}
where $V=\{v_1,v_2,\ldots,v_N\}$ is the set of labels for the partitioned regions, $\longrightarrow_{\mathcal{E}_0}\subseteq V\times V$ is the adjacency relation among the regions, $D=\{d_1,d_2,\ldots,d_{N_d}\}$ is the set of labels for the doors, $F_D: V\times V \to 2^D$ is a partial mapping such that for all $(v,v')\in\longrightarrow_{\mathcal{E}_0}$, $F_D(v,v')\subseteq D$ represents all the possible doors that connect the adjacent regions $v$ and $v'$, whereas $F_D(v,v')=\emptyset$ implies that $v$ and $v'$ are not connected by any doors.

The motion capabilities of agent $G_i (i\in I)$ within the nominal environment $\mathcal{E}_0$ are abstracted as a trim DFA:
\begin{equation}
G^m_i=(V,D,v_{i,0},\delta^m_i,V_m),
\end{equation}
where $V$ and $D$ are defined in $\mathcal{E}_0$, $v_{i,0}\in V$ is the initial region of agent $G_i$, $V_m=V$ and the motion transition $\delta^m_i$ is defined as follows:
\begin{equation*}
\forall v\in V, d\in D: \delta^m_i(v,d)=v' \Leftrightarrow d\in F_D(v,v'),
\end{equation*}
i.e., an agent can move to an adjacent region whenever there is a door connecting the two regions.

The uncertainty of the multi-agent system arises in two aspects. Firstly, the practical environment $\mathcal{E}$ may differ from the nominal one $\mathcal{E}_0$, in the sense that some transitions in $F_D$ may become unaccessible. Secondly, the mission capabilities of $G_i$'s need not be given {\it a priori}.

\subsection{Coordination of Cooperative Multi-agent Systems}
The integration of mission and motions of the agent $G_i$ is captured by the following local labeling mapping
\begin{equation}
\pi_i: \Sigma_i \to V,
\end{equation}
that specifies a region in $V$ in which a mission in $\Sigma_i$ shall be performed.
%The definition of $\pi_i$ implies that an agent can execute various missions in one region, whereas a fixed mission cannot be executed in different regions.
%; in particular, the symbol $\epsilon$ represents the case that the agent executes no mission in a certain region.

The coordination of the agents accounts for both mission and motion plans. Formally, an integrated mission-motion plan is defined as follows.

\begin{definition}[Integrated Local Plans]
A local integrated mission-motion plan for agent $G_i (i\in I)$ is a prefix-closed language $LP_i\subseteq (V\cup \Sigma_i)^*$ such that
\begin{enumerate}
\item $LP_i(0) = v_{i,0}$;
\item If $LP_i(k)\in \Sigma_i$ and $LP_i(k-1)\in V$, then $LP_i(k-1) \in \pi_i(LP_i(k))$; otherwise, $\pi_i(LP_i(k-1)) = \pi_i(LP_i(k))$
\item $P_V(LP_i)\subseteq Run\left[L(G^m_i)\right]$, where $P_V$ denotes the natural projection from $V\cup\Sigma_i$ to $V$ and $Run\left[L(G^m_i)\right]=\bigcup_{s\in L(G^m_i)} Run(s)$.
\end{enumerate}
\end{definition}

We use $L_i^{mo}=P_V(LP_i)$ to denote the {\it motion plan} component of an integrated local plan $LP_i$ for agent $G_i$, whereas $L_i^{mi}=P_\Sigma(LP_i)$ stands for the {\it mission plan} component of $LP_i$, where $P_\Sigma$ is the natural projection from $V\cup\Sigma_i$ to $\Sigma_i$.

Our main objective is concerned with the synthesis of local mission plans and corresponding motion plans for each agent such that a global mission can be accomplished.
\begin{problem}\label{dcccp}
Given a cooperative multi-agent system $G$ that consists of $n$ agents with local mission DFA $G_i$ and motion DFA $G_i^m (i\in I)$ within the nominal environment $\mathcal{E}_0$, and a prefix-closed global mission $L\subseteq \Sigma^*$ whose independence relation (cf. Remark 2) is transitive, solve the following problems:
\begin{enumerate}
\item {\it Mission assignment:} systematically find a feasible local mission specification $L^{mi}_i$ for each agent $G_i$;
\item {\it Mission planning:} synthesize a series of local supervisors $S_i (i\in I)$ such that $L(S_i||G_i)=L_i^{mi}$;
\item {\it Motion planning:} synthesize local motion plans $L_i^{mo} (i\in I)$ such that the combination of $L^{mi}_i$ and $L^{mo}_i$ forms an integrated local plan $LP_i$;
\item {\it Mission-motion integration:} $\vert\vert_{i\in I} LP_i \models L$.
\end{enumerate}
\end{problem}

\subsection{Overview of the Approach}
Without loss of generality, we assume that $L\subseteq L(G)$. We propose a two-layer formal synthesis framework to solve Problem~\ref{dcccp} via ``divide-and-conquer". The synthesis procedure of the framework is depicted in Fig.~\ref{caf}. In addition, regular language learning algorithms are utilized in the proposed framework to guarantee the performance of the cooperative multi-agent system in the presence of system and/or environment uncertainties. The fact that $\Sigma\cap D=\emptyset$ enables us to solve the mission and motion planning sub-problems in a modular fashion, which can be summarized as iterative execution of the following procedure.

\begin{figure}[t]
\begin{center}
    \centerline{\includegraphics[width=0.40\textwidth]{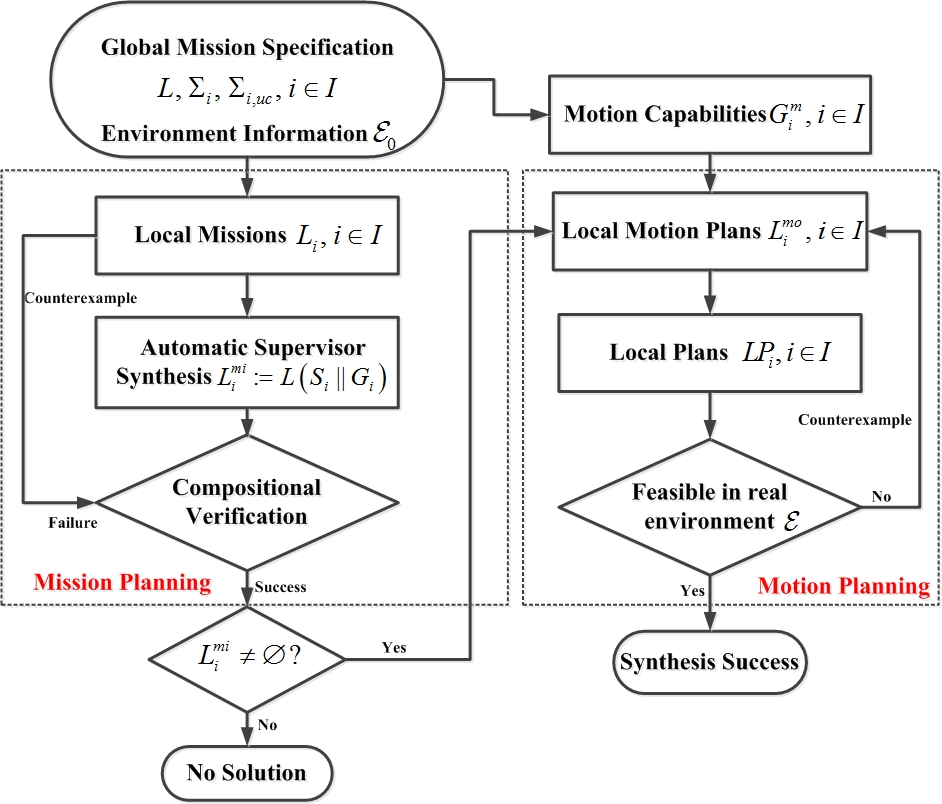}}
    \caption{Learning-based coordination and control framework.}
    \label{caf}
  \end{center}
  \vspace{-12.5mm}
\end{figure}

{\it 1) Mission decomposition:} Obtain a prefix-closed and locally feasible mission specification $L_i$ for agent $G_i (i\in I)$.

{\it 2) Local supervisor synthesis:} Given the mission specification $L_i$, we use a modified $L^*$ algorithm to automatically synthesize a local supervisor $S_i$ such that $L(S_i\vert\vert G_i)=\sup C_i(L_i)$, where $\sup C_i$ stands for the supremal controllable sublanguage with respect to $G_i$ and $\Sigma_{i,uc}$.

{\it 3) Compositional verification:} We employ a compositional verification strategy to justify whether or not all the controlled agents can fulfill the global mission. Specifically, we apply the assume-guarantee paradigm by setting $M_i:=S_i\vert\vert G_i (i\in I)$ as component modules and $L$ as the property to be verified. The verifier returns a counterexample $t\subseteq \Sigma^*$ that indicates the violation of $L$ when the compositional verification fails. Such counterexample $t$ is applied for the re-synthesis of the local supervisors until no more counterexamples are detected. Afterwards, local mission plans $L^{mi}_i (i\in I)$ are obtained.

{\it 4) Automatic motion planning:} Based on the synthesized mission plans $L^{mi}_i$, corresponding motion plans $L^{mo}_i$ are generated by incorporating $G^m_i (i\in I)$ such that the combination of $L^{mi}_i$ and $L^{mo}_i$ forms a valid integrated local plan $LP_i$.

{\it 5) Counterexample-guided motion replanning:} Agent $G_i$ implements $LP_i$ in the practical environment $\mathcal{E}_0$; once there exists a mismatch between $\mathcal{E}$ and $\mathcal{E}_0$ that makes the current motion plan $L^{mo}_i$ is infeasible, a counterexample is automatically generated to refine $L^{mo}_i$ such that a new integrated plan $LP_i$ can be executed by $G_i$.

%, suggesting that all the $M_i, i\in I$ share a same illegal word that violates property $K$

The following sections illustrate the mission and motion planning layers respectively.

\section{Learning-based Compositional Synthesis of Mission Plans}\label{sec:synthesis}
This section concerns with the (offline) mission planning of the proposed framework shown in in Fig.~\ref{caf}. Synthesis of mission plans $L^{mi}_i (i\in I)$ consists of two stages: first, we exploit the local mission capabilities $\Sigma_i$ to develop feasible local missions and mission supervisors; next, we compositionally verify the joint performance of the coordination of the mission plans by using the assume-guarantee scheme \cite{cob2}.
\subsection{Initiation of the Mission Plans}
We originate the generation of local missions $L_i (i\in I)$ by setting $L_i=P_i(L)$. The local feasibility of $L_i$ is assured by the following theorem.

\begin{theorem}\label{fea}
Given a cooperative multi-agent system $G$ that consists of $n$ agents $G_i (i\in I)$ and a non-empty and prefix-closed specification $L\subseteq L(G)$, languages $L_i:=P_i(L) (i\in I)$ satisfy $L_i\subseteq L(G_i)$ and $L_i=\overline{L_i}$.
\end{theorem}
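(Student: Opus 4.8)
The plan is to establish the two claims separately, treating the inclusion $L_i\subseteq L(G_i)$ by a projection-of-intersection argument and the prefix-closedness $L_i=\overline{L_i}$ by lifting prefixes through the natural projection.

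For the first claim, I would start from the hypothesis $L\subseteq L(G)$ together with the identity $L(G)=\vert\vert_{i\in I} L(G_i)=\bigcap_{i\in I} P_i^{-1}(L(G_i))$ recalled after Definition~\ref{product}. In particular $L\subseteq P_i^{-1}(L(G_i))$ for each fixed $i$. Applying the projection $P_i$ to both sides and using monotonicity of $P_i$ yields $P_i(L)\subseteq P_i(P_i^{-1}(L(G_i)))$. Since $\Sigma_i\subseteq\Sigma$ and $P_i$ maps $\Sigma^*$ onto $\Sigma_i^*$, one has the elementary identity $P_i(P_i^{-1}(K))=K$ for every $K\subseteq\Sigma_i^*$ (every $w\in K$ satisfies $P_i(w)=w$, so $w\in P_i^{-1}(K)$); taking $K=L(G_i)$ gives $L_i=P_i(L)\subseteq L(G_i)$, as desired.

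For the second claim I would take an arbitrary word $w\in L_i=P_i(L)$ and an arbitrary prefix $u$ of $w$, say $w=uv$, and show $u\in L_i$. By definition of the projection there exists $s\in L$ with $P_i(s)=w$. The crucial step is to produce a prefix $s'$ of $s$ with $P_i(s')=u$: writing $s=\sigma_1\cdots\sigma_m$, the images $P_i(\sigma_1\cdots\sigma_k)$ for $k=0,1,\ldots,m$ form an increasing chain of prefixes of $w$ whose length grows by at most one at each step (each $\sigma_k$ contributes one symbol to the image if $\sigma_k\in\Sigma_i$ and none otherwise, directly from the inductive definition of $P_i$). Hence some index $k$ gives a projection of length exactly $|u|$, which must coincide with $u$ since it is a prefix of $w=uv$; I set $s'=\sigma_1\cdots\sigma_k$. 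Because $L=\overline{L}$ by hypothesis, the prefix $s'$ of $s$ lies in $L$, whence $u=P_i(s')\in P_i(L)=L_i$. This proves $\overline{L_i}\subseteq L_i$; the reverse inclusion $L_i\subseteq\overline{L_i}$ is immediate, so $L_i=\overline{L_i}$.

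The only nonroutine ingredient is the prefix-lifting argument in the second part, namely that every prefix of a projected word is itself the projection of a prefix of the original word. I expect this to be the main point requiring care, and I would justify it precisely by the length/monotonicity observation above (or, equivalently, by induction on $|s|$). Everything else reduces to elementary projection identities and monotonicity, together with the hypotheses $L\subseteq L(G)$ and $L=\overline{L}$.
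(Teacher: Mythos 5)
Your proof is correct and takes essentially the same route as the paper's: you obtain $L_i\subseteq L(G_i)$ from $L\subseteq L(G)=\vert\vert_{i\in I}L(G_i)=\bigcap_{i\in I}P_i^{-1}\bigl(L(G_i)\bigr)$ together with monotonicity of $P_i$ and the identity $P_i\bigl(P_i^{-1}(K)\bigr)=K$ for $K\subseteq\Sigma_i^*$, which is exactly the content of the paper's chain of inclusions. The only difference is one of detail: the paper asserts in a single line that prefix-closedness of $L$ implies that of $P_i(L)$, whereas you prove this standard property of natural projections in full via the prefix-lifting (length/monotonicity) argument, which is a welcome elaboration rather than a different approach.
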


\begin{proof}
From Definition~\ref{product}, the prefix-closeness of $L$ implies that $P_i(L)$ is prefix-closed. Thus we obtian $L_i=\overline{L}_i$. Next, according to Definitions~\ref{parallel} and~\ref{product},
\begin{equation}\nonumber
\begin{split}
L_i = P_i(L)\subseteq P_i(L(G))=P_i\left[\bigcap_{i\in I}P_i^{-1}[P_i(L(G))]\right] \\
\subseteq P_i(P_i^{-1}[P_i(L(G))])=P_i(L(G))=L(G_i),
\end{split}
\end{equation}
which completes the proof.
\end{proof}

\subsection{Automatic Synthesis of Local Mission Supervisors}
We aim to synthesize a maximally permissive \cite{cas} local mission supervisor $S_i$ that drives $G_i$ to fulfill $L_i (i\in I)$. According to Definition~\ref{controllability}, applying directly the $L^*$ algorithm to synthesize an appropriate supervisor for agent $G_i (i\in I)$ is infeasible when full knowledge of $G_i$ is not given {\it a priori}. To overcome this computational burden, we modify the algorithm and propose a learning-based algorithm, namely the $L^*_{LS}$\footnote{The subscript ``LS" stands for ``local synthesis".} algorithm, for agent $G_i$ to learn $\sup C_i(L_i)$ with the local controllability information $\Sigma_i=\Sigma_{i,c}\dot\cup\Sigma_{i,uc}$ rather than $G_i$.

The local mission supervisors are constructed on the basis of illegal words. A mission $s\in L(G_i)$ is said to be {\it illegal} if $s\not\in L_i$. A local mission $st\in L(G_i)$ is said to be {\it uncontrollably illegal} if it becomes illegal due to an uncontrollable suffix, i.e., $s\in L_i$ , $t\in \Sigma_{i,uc}^*$ and $st\not\in L_i$. Let $C_i$ be the collection of uncontrollably illegal behaviors of agent $G_i (i\in I)$. Define $D_{ui}(\cdot)$ as an operator to collect words that are formed by discarding the uncontrollable suffixes of mission behaviors in $C_i$, i.e.,
$
D_{ui}(C_i)=\left \{s\in L(G_i)|(\exists t\in \Sigma_{uc}^*)[st\in C_i]\right \}.
$

From Section~\ref{sec:preliminaries}.D, the $L^*$ algorithm sequentially constructs a series of observation tables to generate a correct DFA that recognizes an unknown regular language. This idea is inherited by the $L^*_{LS}$ algorithm and the observation table is formed with help of $C_i$. Specifically, let $C_i^j$ denote the set of uncontrollably illegal behavior after the $j$-th update of the observation table in the $L^*$ algorithm; thus $C_i^{j+1}=\left\{s_j\right\}\cup C_i^j$ if a new uncontrollably illegal behavior $s_j$ is added to $C_i^j$.

The main difference between the $L^*$ and $L_{LS}^*$ algorithms lies on the membership queries and generation of counterexamples. For $t\in\Sigma_i^*$, we let $T_i^j(t) (j\in \mathbb{N})$ be the membership function for $G_i (i\in I)$. Initially, the Teacher justifies the membership of $t$ with respect to $L_i$,
\begin{equation}\label{ls1}
T_i^1(t) =
\begin{cases}
0,  & \mbox{if }t\notin L_i, \\
1, & \mbox{otherwise.}
\end{cases}
\end{equation}
and for $j > 1$, rather than answering the membership question with respect to the unknown $\sup C_i(L_i)$, the Teacher provides answers for the following queries:
\begin{equation}\label{ls2}
T_i^j(t) =
\begin{cases}
0,  & \mbox{if }T_i^{j-1}(t)=0 \mbox { or } t\in D_{ui}(C_i^j)\Sigma_i^*, \\
1, & \mbox{otherwise,}
\end{cases}
\end{equation}

Furthermore, due to the lack of prior knowledge of $G_i$, we define the following sequence $\{K_j\} (j\in \mathbb{N})$ to facilitate the generation of  appropriate counterexamples.
\begin{equation}\label{ls3}
\begin{split}
K_1&:=L_i,\\
K_{j}&:=K_{j-1}-D_{ui}(C_i^j)\Sigma_i^*.
\end{split}
\end{equation}

With slightly abusing the notations, we also use $K_j$ to denote the DFA that recognizes the language $K_j (j\in\mathbb{N})$. Let $M_j=M(S^j,E^j,T^j_i)$ be the DFA that is consistent with membership function $T^j_i$, a word $t\in\Sigma_i$ is a counterexample of the $L^*_{LS}$ algorithm with respect to agent $G_i (i\in I)$ if
\begin{equation}\label{ls5}
t\in L(M_j)\Delta K_j.
\end{equation}

The $L^*_{LS}$ algorithm can be viewed as an execution of the $L^*$ algorithm by using membership queries in (\ref{ls1})(\ref{ls2}) while generating counterexamples by examining (\ref{ls5}). Details of the $L^*_{LS}$ algorithm are presented in Algorithm 1.
\subsection{Correctness of the $L^*_{LS}$ Algorithm}
\begin{algorithm}[t]
%\captionsetup[algorithm]{name=$L^*_{LS}$ Algorithm}
\caption{$L^*_{LS}$ Algorithm}
\begin{algorithmic}[1]
\REQUIRE Local mission specification $L_i$, local event sets $\Sigma_{i,c}$ and $\Sigma_{i,uc}$
\ENSURE Local supervisor $S_i$ such that $L(S_i\vert\vert G_i)=L(S_i)=\sup C_i(K_i)$
\STATE $S=\{\epsilon\}$, $E=\{\epsilon\}$, $j= 1$
\STATE Construct $T^1_i(S,E,T)$ by answering membership queries (\ref{ls1})
\REPEAT
\WHILE {$T^j_i(S,E,T)$ is not closed or consistent}
\IF { $T^j_i$ is not closed }
\STATE find $s\in S$, $\sigma\in \Sigma_i$ such that $\forall s'\in S: row(s'\sigma)\ne row(s)$
\STATE $S = S\cup\{s\sigma\}$
\STATE extend $T^j_i$ to $(S\cup S\Sigma_i)E$ using membership queries (\ref{ls1}) or (\ref{ls2})
\ENDIF
\IF { $T^j_i$ is not consistent }
\STATE find $s_1, s_2\in S$, $\sigma\in \Sigma_i$ and $e\in E$ such that $row(s_1)=row(s_2)$ but $T(s_1\sigma e)\ne T(s_2\sigma e)$
\STATE $E = E\cup\{\sigma e\}$
\STATE extend $T^j_i$ to $(S\cup S\Sigma_i)E$ using membership queries  (\ref{ls1}) or (\ref{ls2})
\ENDIF
\ENDWHILE
\STATE $M_j=M(S^j,E^j,T^j_i)$
%\STATE Make the conjecture $L_m(M_j)=\sup C_i(L_i)$.
\IF { the Teacher presents a counterexample $t\in \Sigma_i^*$ according to (\ref{ls5})}
\STATE $S = S \cup \overline{t}$
\STATE $j=j+1$
\STATE extend $T^j_i$ to $(S\cup S\Sigma_i)E$ using membership queries (\ref{ls1}) or (\ref{ls2})
\ENDIF
\UNTIL {the Teacher generates no more counterexamples}.
\RETURN $S_i=M_j$
\end{algorithmic}
\end{algorithm}

The following lemma helps us investigate the correctness and finite convergence properties of the $L_{LS}^*$ algorithm.

\begin{lemma}\label{supC(K) iteration}
(Paraphrased from \cite{kum}) If $L_i$ and $L(G_i)$ are regular languages, then $\sup C_i(L_i)$ can be computed iteratively as follows.
\begin{equation}\label{ls6}
\begin{split}
K_1&:=L_i,\\
K_{j+1}&:=K_j-[(L(G_i)-K_j)/\Sigma_{i,uc}]\Sigma_i^*.
\end{split}
\end{equation}
If there exists $N\in \mathbb{N}$ such that $K_{N+1}=K_N$, then $\sup C_i(L_i)=K_N$. Moreover, if $L_i$ is prefix-closed, then $\sup C_i(L_i)$ is also prefix-closed and  can be computed directly as
\begin{equation}\label{ls8}
\sup C_i(L_i)=L_i-[(L(G_i)-L_i)/\Sigma^*_{i,uc}]\Sigma_i^*.
\end{equation}
\end{lemma}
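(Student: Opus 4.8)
The plan is to prove the two assertions of the lemma separately, both resting on Definition~\ref{controllability} and on viewing $\sup C_i(L_i)$ as the largest fixed point of a shrinking operator on the lattice of sublanguages of $L_i$ ordered by inclusion. For the iteration~(\ref{ls6}) I would introduce the operator $\Omega(K):=K-[(L(G_i)-K)/\Sigma_{i,uc}]\Sigma_i^*$, so that $K_{j+1}=\Omega(K_j)$ with $K_1=L_i$. Since $\Omega(K)\subseteq K$ for every $K$, the chain $\{K_j\}$ is non-increasing; regularity of $L_i$ and $L(G_i)$ forces the chain to consist of regular languages of bounded index, which is why a stabilization index $N$ with $K_{N+1}=K_N$ exists and the iteration is a genuine algorithm (the lemma takes this $N$ as given). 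The whole content is then to identify this stabilized value $K_N$ with $\sup C_i(L_i)$.

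I would establish $K_N=\sup C_i(L_i)$ by two inclusions. For the \emph{lower bound}, I show by induction on $j$ that every controllable sublanguage $M\subseteq L_i$ satisfies $M\subseteq K_j$; in the paper's setting $L_i$ is prefix-closed (Theorem~\ref{fea}), so it suffices to treat prefix-closed $M=\overline{M}$ and take their union. The base case is $M\subseteq L_i=K_1$. For the step, suppose some $w\in M$ were deleted, i.e. $w=s'u$ with $u\in\Sigma_i^*$ and $s'\sigma\in L(G_i)-K_j$ for some $\sigma\in\Sigma_{i,uc}$. Then $s'\in\overline{M}=M$, and $s'\sigma\in L(G_i)$, so Definition~\ref{controllability} gives $s'\sigma\in\overline{M}=M\subseteq K_j$, contradicting $s'\sigma\in L(G_i)-K_j$; hence $M\subseteq K_{j+1}$. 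Taking $M=\sup C_i(L_i)$ yields $\sup C_i(L_i)\subseteq K_N$. For the \emph{upper bound}, the fixed-point condition $K_N=\Omega(K_N)$ says precisely that no word of $K_N$ admits an uncontrollable continuation leaving $K_N$ while remaining in $L(G_i)$; after verifying that $\Omega$ preserves prefix-closure (so $K_N=\overline{K_N}$), this is exactly controllability of $K_N$, and since $K_N\subseteq L_i$ it is a controllable sublanguage, giving $K_N\subseteq\sup C_i(L_i)$. The two inclusions give equality.

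For the prefix-closed case and the closed form~(\ref{ls8}), set $C:=L_i-[(L(G_i)-L_i)/\Sigma^*_{i,uc}]\Sigma_i^*$ and $X:=(L(G_i)-L_i)/\Sigma^*_{i,uc}$, so that $C=L_i-X\Sigma_i^*$. I would verify directly that $C$ is prefix-closed, controllable, and maximal. Prefix-closure follows because deleting any set of the form $X\Sigma_i^*$ from a prefix-closed language leaves a prefix-closed language (a prefix of a surviving word cannot lie in $X\Sigma_i^*$, else the word itself would). Controllability is the key computation: for $s\in C$, $\sigma\in\Sigma_{i,uc}$ with $s\sigma\in L(G_i)$, a short case split shows $s\sigma\in C$, where the illegal case $s\sigma\in L(G_i)-L_i$ is impossible because it would place $s$ itself in $X$ (via the uncontrollable string $\sigma$), contradicting $s\in C$. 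Maximality uses the standard extension of single-event controllability to uncontrollable strings, namely $\overline{M}\Sigma_{i,uc}\cap L(G_i)\subseteq\overline{M}\Rightarrow\overline{M}\Sigma^*_{i,uc}\cap L(G_i)\subseteq\overline{M}$ (an easy induction on string length): if $M=\overline{M}\subseteq L_i$ is controllable and $w\in M\cap X\Sigma_i^*$, write $w=xu$ with $x\in X$, so $xv\in L(G_i)-L_i$ for some $v\in\Sigma^*_{i,uc}$; prefix-closure gives $x\in M$ and iterated controllability gives $xv\in M\subseteq L_i$, contradicting $xv\notin L_i$. Hence $M\subseteq C$ and $C=\sup C_i(L_i)$.

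The main obstacle I anticipate is the bookkeeping around prefix-closure together with the reconciliation of the single-event quotient $/\Sigma_{i,uc}$ used in~(\ref{ls6}) against the string quotient $/\Sigma^*_{i,uc}$ used in~(\ref{ls8}). Concretely, the lower-bound induction and the identification of the fixed point as a \emph{controllable} (not merely ``generated-language-saturated'') sublanguage both hinge on $\Omega$ preserving prefix-closure, so establishing that invariant carefully is essential; and the auxiliary lemma $\overline{M}\Sigma_{i,uc}\cap L(G_i)\subseteq\overline{M}\Rightarrow\overline{M}\Sigma^*_{i,uc}\cap L(G_i)\subseteq\overline{M}$ is the connective tissue that makes the iterative and closed-form characterizations coincide, showing that repeatedly peeling off single uncontrollable exits converges to removing every word that can uncontrollably reach $L(G_i)-L_i$.
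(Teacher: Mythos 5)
The paper never proves this lemma: it is imported from Kumar and Garg \cite{kum} with the tag ``paraphrased,'' so there is no in-paper argument to compare against, and your proposal has to stand on its own — which it does. Your proof is the standard lattice/fixed-point argument for the supremal controllable sublanguage: the lower bound by induction (every prefix-closed controllable $M\subseteq L_i$ survives each application of $\Omega$, via $s'\in\overline{M}=M$ and Definition~\ref{controllability}), the upper bound by reading $K_N=\Omega(K_N)$ as controllability of $K_N$, and the closed form~(\ref{ls8}) verified directly as prefix-closed, controllable and maximal through the bridge $\overline{M}\Sigma_{i,uc}\cap L(G_i)\subseteq\overline{M}\Rightarrow\overline{M}\Sigma_{i,uc}^*\cap L(G_i)\subseteq\overline{M}$; do make explicit that this last induction uses prefix-closedness of the generated language $L(G_i)$ to keep the intermediate strings $xv'$ inside $L(G_i)$. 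Two observations. First, your insistence on prefix-closure is not mere bookkeeping but repairs a hypothesis the lemma's first sentence omits: if, say, no uncontrollable event is ever feasible in $L(G_i)$, the quotient is empty and the iteration stabilizes at $K_1=L_i$ immediately, so for the non-prefix-closed $L_i=\{\epsilon,ab\}\subseteq L(G_i)=\overline{\{ab\}}$ it returns $L_i$ itself, whereas $\sup C_i(L_i)=\{\epsilon\}$ under the paper's definition of $\sup C$ as the supremal controllable \emph{and prefix-closed} sublanguage; your restriction is exactly the case the paper uses, since $L_i=P_i(L)$ is prefix-closed by Theorem~\ref{fea}. Second, a small simplification: the closure-preservation invariant for $\Omega$ is needed only in your lower-bound induction, not for the upper bound — the fixed-point identity already yields $\overline{K_N}\Sigma_{i,uc}\cap L(G_i)\subseteq K_N$ directly, because the removed set $[(L(G_i)-K_N)/\Sigma_{i,uc}]\Sigma_i^*$ quantifies over all prefixes of words of $K_N$. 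Your parenthetical claim that regularity forces stabilization is hand-waved (to make it precise, argue that every $K_j$ is recognized by a subautomaton of the product of recognizers of $L_i$ and $L(G_i)$), but this is harmless since the lemma assumes the existence of $N$.
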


We now proceed to the correctness of the $L^*_{LS}$ algorithm. In fact, comparing (\ref{ls6}) with the dynamical membership queries stated in (\ref{ls1})(\ref{ls2}) and the counterexample-related iteration (\ref{ls3}), we observe that initially, for $i\in I$, (\ref{ls6}) is identical to (\ref{ls3}), and $T^1_i(t)$ is consistent with $K_1=L_i$. Furthermore, when the iteration step $j\ge 2$, counterexamples are generated and based on which $C^j_i$ is formed. When no more counterexamples are generated, the set $C^j_i$ eventually becomes identical to $L(G_i)-L_i$, which includes all the local illegal and uncontrollably illegal behaviors and it turns out that the language consistent with $T^j_i$ is
\begin{equation}\label{ls9}
L_i-D_{ui}(C^j_i)\Sigma_i^*=L_i-[D_{ui}(L(G_i)-K_i)]\Sigma_i^*,
\end{equation}
Note that the right side of (\ref{ls9}) coincides with the results obtained in (\ref{ls8}), which implies that the output of the $L^*_{LS}$ algorithm is indeed $\sup C_i(L_i)$, provided that it is convergent.

Next we will prove that the $L^*_{LS}$ algorithm synthesizes $\sup C_i(L_i)$ within a finite number of iterations. From Algorithm 1, there are two kinds of ``counterexamples" that are used by the $L^*_{LS}$ algorithm: counterexamples used to make an observation table $T^j_i$ closed and consistent (lines 4-15), and counterexamples generated by the Teacher (\ref{ls5}) to construct a new observation table $T^{j+1}_i$ (lines 17-20). We denote $D_i$ as the collection of the ``true" counterexamples that are generated according to (\ref{ls5}), while $D'_i$ is used to denote the set of counterexamples that help make an observation table closed and consistent.
%make a closed and consistent $T_i$ after a new illegal string $s$ is detected and added to update $S$ of the current observation table $(S,E,T)$.

Consider an observation table $T=(S^j_i,E^j_i,T^k_i)$ with the corresponding DFA $M(S^j_i,E^j_i,T^k_i) = M$, where $S^j_i$ and $E^j_i$ are the pre-defined $S$ and $E$ after the $j$-th counterexample from $D_i$ has been added to the table, and $T^k_i$ denotes the current membership function after $k$-th counterexample in $D'_i$ has been added to update it. Let $n_{k,j}$ denote the number of states of DFA $M$, $n_k$ denote the number of states of the minimal DFA that is consistent with $T^k_i$, and $n^\uparrow$ denote the number of states of the minimal DFA that recognizes $\sup C_i(L_i)$. According to membership queries (\ref{ls1})(\ref{ls2}) and Lemma~\ref{supC(K) iteration}, the sequence of DFAs that are consistent with $T_i^k$ finally recognizes $\sup C_i(L_i)$, which implies that $n_k \to n^\uparrow$ (with respect to $k$) and the convergence is achieved within a finite number of iterations.

In addition, from Theorem~\ref{L_iso}, $n_{k,j}\le n_k$ for all $j\in \mathbb{N}$. Moreover, it is clear that $\left\{n_{k,j}\right\}$ is a monotonically increasing sequence (with respect to $j$). Therefore, $n_{k,j}\to n_k$ for a fixed $k$ (with respect to $j$) is enforced by its monotonicity and upper-boundedness. The fact that $n_k\to n^\uparrow$ thus yields that $n_{k,j}\to n^\uparrow$.
%$\left\{n_k\right\}$ and $\left\{n_{kj}\right\}$ are both monotonically increasing sequences (with respect to $k$ and $j$, respectively.) and $n_{kj}\le n_k$ for any fixed $k$ and all $j\in \mathbb{N}$. From Lemma \ref{supC(K) iteration}, $n_k \to n^\uparrow$ (with respect to $k$) after a finite step of iterations, and $n_{kj}\to n_k$ eventually according to Theorem \ref{L_ter}. Hence we can conclude that the iteration of the $L^*_{LS}$ using membership queries (2) and (3) shall always converge after the Teacher generates a finite number of counterexamples.

The above discussion in fact proves the correctness and finite convergence properties of the $L^*_{LS}$ algorithm, as summarized in the following theorem.

\begin{theorem}\label{LC}
For a non-empty, prefix-closed and locally feasible mission specification $L_i \subseteq L(G_i)$ for agent $G_i (i\in I)$, the $L^*_{LS}$ algorithm with dynamical membership queries (\ref{ls1}) and (\ref{ls2}), and counterexample generated from (\ref{ls5}), converges to a local mission supervisor $S_i$ such that $L(S_i\vert\vert G_i)=\sup C_i(L_i)$. Moreover, the $L_{LS}^*$ algorithm can always construct the correct $S_i$ after a finite number of counterexample tests.
\end{theorem}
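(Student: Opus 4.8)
The plan is to split the statement into two claims and treat them in turn: (i) \emph{correctness}, that the language the algorithm stabilizes at is exactly $\sup C_i(L_i)$, and (ii) \emph{finite convergence}, that only finitely many counterexamples are ever required. The backbone of the correctness argument is to identify the dynamically evolving target language of the $L^*_{LS}$ procedure with the monotone iteration of Lemma~\ref{supC(K) iteration}. First I would observe that the membership function defined by (\ref{ls1})--(\ref{ls2}) is monotone non-increasing in $j$ for every fixed word $t$ (once an entry is set to $0$ it stays $0$), so the associated sequence of target languages satisfies $K_1 \supseteq K_2 \supseteq \cdots$; I would then verify by direct comparison that the counterexample-driven recursion (\ref{ls3}) coincides with the Kumar-style recursion (\ref{ls6}) as the accumulated set $C_i^j$ of uncontrollably illegal behaviors is expanded one word at a time.

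For correctness, the decisive step is to show that when the Teacher eventually returns no counterexample, the set $C_i^j$ has grown to capture all of $L(G_i)-L_i$, i.e.\ every illegal and uncontrollably illegal word. Granting this, the language consistent with the final membership function $T_i^j$ is precisely (\ref{ls9}), whose right-hand side equals the closed-form expression (\ref{ls8}) furnished by Lemma~\ref{supC(K) iteration}; hence the closed-and-consistent table yields, via the standard $L^*$ construction, a DFA $M_j = S_i$ with $L(S_i)=L(S_i\vert\vert G_i)=\sup C_i(L_i)$. The prefix-closedness hypothesis on $L_i$ is exactly what licenses use of the direct formula (\ref{ls8}) in place of the full fixed-point iteration.

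For finite convergence I would run a nested counting argument over the two distinct sources of counterexamples in Algorithm~1: those that merely restore closedness and consistency (the set $D'_i$, indexed by $k$) and the genuine Teacher counterexamples from (\ref{ls5}) (the set $D_i$, indexed by $j$). Writing $n_{k,j}$, $n_k$, and $n^{\uparrow}$ for the state counts of the current conjecture, of the minimal DFA consistent with $T_i^k$, and of the minimal DFA for $\sup C_i(L_i)$ respectively, I would argue that the membership functions converge to $\sup C_i(L_i)$ by the correctness part together with Lemma~\ref{supC(K) iteration}, so $n_k \to n^{\uparrow}$ after finitely many outer rounds; that for each fixed $k$, Theorem~\ref{L_iso} gives both $n_{k,j}\le n_k$ and strict monotonicity of $\{n_{k,j}\}$ in $j$, so boundedness forces $n_{k,j}\to n_k$ in finitely many inner steps; and that chaining these two limits yields $n_{k,j}\to n^{\uparrow}$, with termination following from the finiteness of $n^{\uparrow}$.

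The hard part, I expect, is the correctness reduction rather than the counting. The classical $L^*$ termination machinery (Theorem~\ref{L_ter}) presumes a \emph{fixed} unknown target, whereas here the Teacher's answers drift as $C_i^j$ accumulates, so the target is being discovered rather than held constant. The delicate point is therefore to show that this moving target stabilizes---that the decreasing chain $\{K_j\}$ reaches the fixed point $\sup C_i(L_i)$ in finitely many outer rounds---and that each intermediate $L^*$ instance, run against a momentarily frozen target, behaves exactly as in the classical analysis. Establishing that the two indices $k$ and $j$ do not interfere, so that inner progress toward $n_k$ is never undone by an outer update, is what makes the combined monotonicity argument close.
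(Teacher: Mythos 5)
Your proposal is correct and follows essentially the same route as the paper: the correctness half identifies the counterexample-driven recursion (\ref{ls3}) with the fixed-point iteration of Lemma~\ref{supC(K) iteration} and equates the final consistent language (\ref{ls9}) with the closed form (\ref{ls8}), and the convergence half is exactly the paper's nested counting over $D_i$ and $D'_i$ with $n_{k,j}\le n_k$, monotonicity in $j$, and $n_k\to n^{\uparrow}$. The ``moving target'' subtlety you flag at the end is precisely what the paper resolves (somewhat tersely) by invoking the finite stabilization of the Kumar iteration, so no new idea is missing.
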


\begin{remark}
It follows from Theorem 3 that the $L^*_{LS}$ algorithm also admits a polynomial-time computational complexity. Compared to the conventional supervisor synthesis algorithms proposed in \cite{cas} and \cite{kum}, the presented $L_{LS}^*$ algorithm requires no prior knowledge of the plant.
\end{remark}
%
%\begin{remark}
%One shall find that the synthesis of local supervisors with the $L_{LS}^*$ requires only local information about each individual agent's capabilities $\Sigma_i$.
%\end{remark}

\subsection{Automated Compositional Verification via Learning}
Let $L^{mi}_i:=\sup C_i(L_i)$ be the (initial) local mission plan of agent $G_i (i\in I)$ in the cooperative multi-agent system. In case that the global mission $L$ is prefix-closed and separable with respect to $\{\Sigma_i\}_{i\in I}$, the mission assignment and planning for each agent can be solved in a modular manner by synthesizing local supervisors independently to fulfill the initial local specifications $L_i=P_i(L) (i\in I)$ using the $L^*_{LS}$ algorithm, since according to \cite{will}, separability of $L$ implies $L=\vert\vert_{i\in I} P_i(L)$. Nonetheless, whenever $L$ fails to be separable, it is non-trivial to determine whether or not the concurrent execution of $L^{mi}_i (i\in I)$ satisfies the global mission $L$. To automate the assume-guarantee strategy, we propose another modification of the $L^*$ algorithm, namely the $L^*_{CV}$\footnote{The subscript ``CV" stands for ``compositional verification".} algorithm, to learn appropriate and weakest local assumptions for each supervised agent

\begin{equation}\label{cv1}
M_i:=S_i||G_i (i\in I).
\end{equation}

Let $M=\vert\vert_{i\in I}M_i$ denote the mission performance of the controlled cooperative multi-agent system, $A_i$ denote an assumption about $M_i$'s environment. With slightly abusing the notations, we use $L$ and $coL$ to denote the DFA that recognizes $L$ and its complement language, respectively. The $L^*_{CV}$ algorithm checks $M\models L$ via the assume-guarantee proof rules SYM-N. A two-layered implementation of the $L^*_{CV}$ algorithm is depicted in Fig.~\ref{ag}.

As shown in Fig.~\ref{ag}, in the first layer, the $L^*_{CV}$ algorithm deploys $n$ {\it local} Teachers, each of which corresponds to an controlled agent $M_i (i\in I)$, to learn the appropriate and weakest assumption $A_i$ for $M_i$. At the $j$-th iteration step, the $L_{CV}^*$ algorithm learns an appropriate assumption $A_i^j$ for the controlled agent $M_i$, by querying the agent and by reviewing the results of the previous iteration. As the $L_{CV}^*$ algorithm executes the learning procedure iteratively, it incrementally (in the sense of the number of states) constructs a sequence of assumption DFAs $\{A_i^j\}_{j\in \mathbb{N}}$ for $M_i$, which converges to the DFA that recognizes the weakest assumption $A_i^w$ whose event set is constrained to be $\Sigma_{A_i}$ (cf. Section~\ref{sec:preliminaries}.C).

The implementation of the $L_{CV}^*$ algorithm relies on the following lemma.
\begin{lemma}\label{lcv}
(Adapted from \cite{cob2}) Given $t\in \Sigma^*$, $i\in I$ and a regular property $L$, $t\in L(A_i^w)$ if and only if $\langle\mathcal {DFA}(t) \rangle M_i \langle L \rangle$ holds, where $\mathcal {DFA}(t)$ is a trim DFA such that $L(\mathcal {DFA}(t))=L_m(\mathcal {DFA}(t))=\overline{t}$.
\end{lemma}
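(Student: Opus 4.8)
The plan is to establish the biconditional in Lemma~\ref{lcv} by unwinding the definition of the weakest assumption (Definition~\ref{weakest assumption}) together with the DFA characterization of assume-guarantee satisfaction (Theorem~\ref{assume-guarantee}). The key observation is that $\mathcal{DFA}(t)$ is constructed precisely so that $L(\mathcal{DFA}(t)) = L_m(\mathcal{DFA}(t)) = \overline{t}$, i.e.\ it is the minimal trim DFA generating exactly the single word $t$ and all its prefixes; this lets us treat the single trace $t$ as a degenerate ``environment'' over $\Sigma_{A_i}$. The guiding intuition is that the weakest assumption $A_i^w$ is, by its defining property, the exact (least restrictive) characterization of those environments $E$ under which $E \vert\vert M_i \models L$; so membership of $t$ in $L(A_i^w)$ should coincide with the assume-guarantee triple $\langle \mathcal{DFA}(t) \rangle M_i \langle L \rangle$ holding.

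First I would make the forward direction ($t \in L(A_i^w) \Rightarrow \langle \mathcal{DFA}(t)\rangle M_i \langle L\rangle$) precise. Since $A_i^w$ is the weakest assumption, any environment whose behavior is subsumed by $A_i^w$ guarantees $P$; concretely, I would argue that $\mathcal{DFA}(t) \models A_i^w$ follows from $t \in L(A_i^w)$ (because $\mathcal{DFA}(t)$ generates only prefixes of $t$, all of which lie in the prefix-closed $L(A_i^w)$ once $t$ does), and then invoke the defining property of the weakest assumption to conclude that $\mathcal{DFA}(t) \vert\vert M_i \models L$. By Theorem~\ref{assume-guarantee}, this is exactly the statement that $q_e$ is unreachable in $\mathcal{DFA}(t) \vert\vert M_i \vert\vert \tilde{L}$, which is the triple $\langle \mathcal{DFA}(t)\rangle M_i \langle L\rangle$. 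For the converse, I would assume the triple holds and show $t \in L(A_i^w)$: if $t$ were not accepted (generated) by $A_i^w$, then $\mathcal{DFA}(t)$ would contain behavior forbidden by the weakest assumption, and the defining ``if and only if'' of Definition~\ref{weakest assumption} (specializing the environment $M_{-i}$ to $\mathcal{DFA}(t)$) would force a violation of $L$, contradicting the assumed triple.

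The main obstacle I anticipate is reconciling the alphabet and projection bookkeeping. The weakest assumption in Definition~\ref{weakest assumption} is phrased over the interface alphabet $\Sigma_{i,IF}$ (here specialized to $\Sigma_{A_i}$) via the natural projection $P_{i,IF}$, whereas the trace $t$ and the triple $\langle \mathcal{DFA}(t)\rangle M_i \langle L\rangle$ live over the relevant composed alphabets. I would need to verify carefully that taking $M_{-i} := \mathcal{DFA}(t)$ as the environment is legitimate — that $\mathcal{DFA}(t)$ has the correct event set relative to $\Sigma_{A_i}$ and that the projection $P_{i,IF}$ acts as the identity on $\overline{t}$ — so that the two formulations of ``environment under which $M_i$ guarantees $L$'' genuinely coincide. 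Once the alphabet alignment is settled, both directions reduce to direct applications of the weakest-assumption definition and Theorem~\ref{assume-guarantee}, and the prefix-closedness of $\overline{t}$ ensures the membership/generation distinction does not introduce any gap.
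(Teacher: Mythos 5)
The first thing to note is that the paper contains no proof of this lemma at all: it is stated as ``Adapted from \cite{cob2}'' and imported without argument, so there is no in-paper proof to measure you against. Judged on its own merits, your proposal is correct and follows the natural abstract route: instantiate the universal defining property of the weakest assumption (Definition~\ref{weakest assumption}) with the degenerate environment $\mathcal{DFA}(t)$, use the fact that generated languages of DFAs are automatically prefix-closed to convert $t\in L(A_i^w)$ into $\mathcal{DFA}(t)\models A_i^w$ and back, and use Theorem~\ref{assume-guarantee} to identify the triple $\langle\mathcal{DFA}(t)\rangle M_i\langle L\rangle$ with $\langle true\rangle M_i\vert\vert\mathcal{DFA}(t)\langle L\rangle$ in this prefix-closed safety setting. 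This is genuinely different from the proof in the cited source \cite{cob2}, which first \emph{constructs} the weakest assumption explicitly (compose $M_i$ with the error-completed property automaton, project onto the interface alphabet, determinize, and complete) and then reads the trace characterization directly off that construction. Your version buys independence from any particular construction, but in exchange it leans on two things the constructive proof gets for free: (i) that $A_i^w$ exists and is well defined, and (ii) that the quantification in Definition~\ref{weakest assumption} ranges over \emph{arbitrary} environments --- the paper's phrasing ``for any $M_{-i}=\vert\vert_{j\in I, j\ne i} M_j$'' literally ranges only over compositions of the other agents, so to instantiate $M_{-i}:=\mathcal{DFA}(t)$ you must state explicitly that you adopt the standard arbitrary-environment reading used in \cite{cob2}; without that remark the key instantiation is not licensed by the definition as written. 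Your flagged alphabet caveat is real but benign: the membership $t\in L(A_i^w)$ only typechecks for $t\in\Sigma_{A_i}^*$ (indeed the $L^*_{CV}$ algorithm projects counterexamples through $P_{A_i}$ before querying), and on such $t$ the projection $P_{i,IF}$ acts as the identity, so the bookkeeping closes exactly as you anticipate.
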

%
%\begin{proof}
%Let $M_{-i}=\vert\vert_{j\in I, j\ne i}M_j$. By Theorem \ref{assume-guarantee}, $\langle\mathcal {DFA}(t) \rangle M_i \langle L \rangle$ is true if and only if $q_e$ is unreachable in $\mathcal {DFA}(t)\vert\vert M_i \vert\vert coL $, which is equivalent to checking $\langle true \rangle \mathcal {DFA}(t)\vert\vert M_i \langle L \rangle$. From Definition \ref{weakest assumption}, this is the same as checking $\langle true \rangle \mathcal {DFA}(t) \langle A_i^w \rangle$ (by substituting $M_{-i}$ with $\mathcal {DFA}(t)$), which is clearly equivalent to checking $t\in L(A_i^w)$, completing the proof.
%\end{proof}

\begin{figure}[t]
\begin{center}
    \centerline{\includegraphics[width=0.36\textwidth]{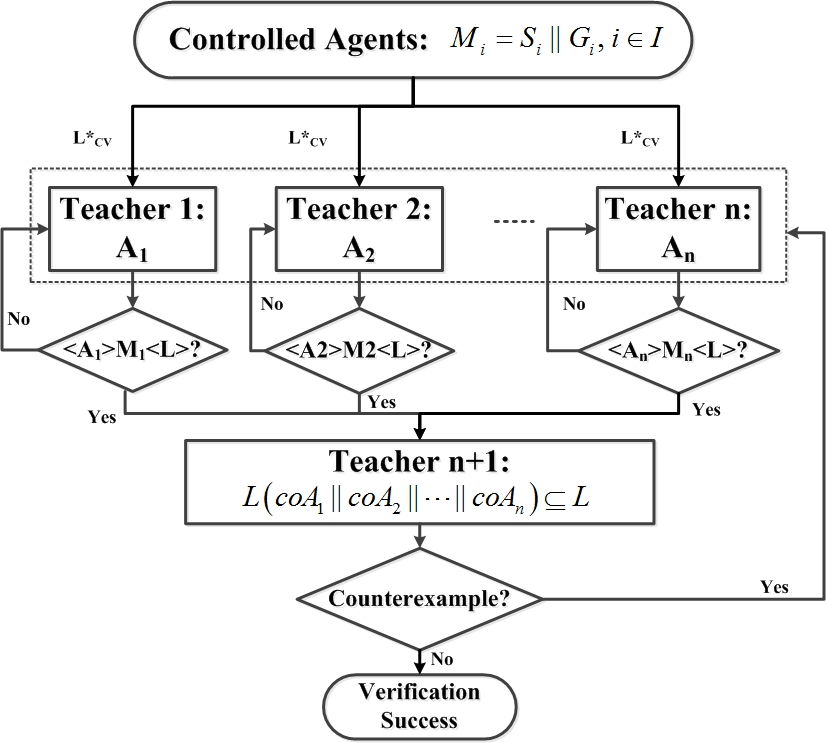}}
    \caption{Compositional verification via assumption learning.}
    \label{ag}
  \end{center}
  \vspace{-10mm}
\end{figure}

Suggested by Lemma~\ref{lcv}, each local Teacher of the $L_{CV}^*$ algorithm answers the following membership queries to construct local assumption DFAs for each controlled agent $M_i$: for $i\in I$ and $t\in \Sigma^*$,
\begin{equation}\label{L_cv}
T_i(t) =
\begin{cases}
1,  & \mbox{if } \langle\mathcal {DFA}(t) \rangle M_i \langle L \rangle \mbox{ is true,} \\
0, & \mbox{otherwise.}
\end{cases}
\end{equation}

In addition to the membership queries (\ref{L_cv}), the Teacher of the $L^*_{CV}$ algorithm justifies the conjecture $\langle A_i^j \rangle M_i \langle L \rangle$ rather than $L(A_i^j) = L(A_i^w)$. Once the Teacher denies the conjecture, a counterexample $t\in \Sigma^*$ is proposed by the Teacher, and the $L^*_{CV}$ algorithm adds $P_{A_i}(t)$ and all its prefixes back to the local iteration loop to update the observation tables, where $P_{A_i}$ is the natural projection from $\Sigma^*$ to $\Sigma_{A_i}^*$.
%
%\begin{theorem}\label{AG-C rule}
%\cite{cob2} The circular and symmetric assume-guarantee rule (SAGR) is sound and complete.
%\end{theorem}
%
%Theorem \ref{AG-C rule} ensures the completeness and soundness of the SAGR. Intuitively, the soundness of the SAGR will imply the correctness of the compositional verification part of the framework shown in Fig.~\ref{caf}, while the completeness guarantees finite convergence of the verification procedure.
%
%Based on Theorems \ref{lcv} and \ref{AG-C rule}, we can conclude that the assume-guarantee reasoning compositional verification in Fig.~\ref{ag} empolying the $L_{CV}^*$ queries (8) is correct and convergent, which is formally summarized as the following theorem.

Once the local learning loop of the $L^*_{CV}$ algorithm terminates, we collect a family of local assumptions $A_i:=A_i^w (i\in I)$. In addition to the $n$ local Teachers, the $L^*_{CV}$ algorithm deploys Teacher $n+1$ in the second layer in the framework shown in Fig.~\ref{ag} to justify whether $L(coA_1\vert\vert\cdots\vert\vert coA_n)\subseteq L$ or not. If Teacher $n+1$ returns ``True", the synthesis framework terminates with the conclusion that $\vert\vert_{i\in I} M_i\models L$. Otherwise, Teacher $n+1$ returns ``False" with a counterexample $t\in \Sigma^*$. The $L^*_{CV}$ algorithm then determines whether or not the global mission $L$ is indeed violated by the collective behavior of $M$, which is performed by simulating $t$ on each composed DFA $M_i\vert\vert coL (i\in I)$ and by checking whether or not $t$ can be accepted. If $t$ is not a violating trace for at least one agent $M_i$, we treat $t$ in the same way in the first layer and use $t$ to re-construct the local assumption $A_i$ for controlled agent $M_i$. Otherwise, $t$ turns out to be a common violating word of all agents and the controlled cooperative multi-agent system $\vert\vert_{i\in I} M_i$ indeed violates $L$. As shown in Fig.~\ref{caf}, undesired mission behaviors emerge in the joint execution of synthesized mission plans $L^{mi}_i$ for agent $G_i (i\in I)$; therefore, the re-synthesis of local mission plans is triggered (cf. Section~\ref{sec:synthesis}.E). The correctness and termination properties of the $L^*_{CV}$ algorithm are summarized in the following theorem.

\begin{theorem}\label{cv correctness}
\cite{cob2} For the global mission $L$ and controlled agents $M_1, M_2, \cdots, M_n$, the $L_{CV}^*$ algorithm implemented by the framework shown in Fig.~\ref{ag} with the proof rules SYM-N terminates within finite number of iterations and correctly returns whether or not $\vert\vert_{i\in I}M_i\models L$.
\end{theorem}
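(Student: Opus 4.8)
The plan is to decompose the argument along the two layers of Fig.~\ref{ag}: (i) the finite convergence and correctness of the $n$ local assumption-learning loops, and (ii) the soundness and completeness of the SYM-N proof rule together with the finite progress of the outer counterexample-driven refinement. Since every membership query and every conjecture in both layers is phrased as an assume-guarantee triple, I would first note that each is effectively decidable via Theorem~\ref{assume-guarantee}, i.e., by testing reachability of $q_e$ in a finite parallel composition. This certifies that each local learner and Teacher $n+1$ are genuine minimally adequate Teachers, so that the $L^*$ machinery of Theorems~\ref{L_iso} and~\ref{L_ter} applies unchanged.

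For the first layer the key step is Lemma~\ref{lcv}, which shows that the membership oracle (\ref{L_cv}) returns exactly the characteristic function of the regular language $L(A_i^w)$. Hence each local loop is an instance of $L^*$ learning the fixed regular target $A_i^w$, and by Theorem~\ref{L_ter} it would converge to the minimal DFA for $A_i^w$. The only deviation from vanilla $L^*$ is that the conjecture tests the triple $\langle A_i^j\rangle M_i\langle L\rangle$ instead of language equality with $A_i^w$; here I would argue that this can only cause \emph{earlier} termination with a \emph{sufficient} (possibly stronger than weakest) assumption, while still guaranteeing termination. Concretely, if the triple holds the loop stops with a valid assumption; if it fails, the returned trace $t$ satisfies $P_{A_i}(t)\in L(A_i^j)\,\Delta\, L(A_i^w)$, so by Theorem~\ref{L_iso} the next conjecture strictly gains a state. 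Since $A_i^w$ is itself a sufficient assumption (Definition~\ref{weakest assumption}), the triple $\langle A_i^w\rangle M_i\langle L\rangle$ holds and $A_i^w$ would never be rejected; therefore the state count is capped by $|A_i^w|$ and the local loop halts after at most $|A_i^w|$ conjectures.

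For the second layer I would invoke the soundness of SYM-N. Upon termination of the local loops, premises $1,\dots,n$ hold by construction, so it remains for Teacher $n+1$ to test premise $n+1$, namely $L(coA_1\vert\vert\cdots\vert\vert coA_n)\subseteq L$. If this holds, soundness of SYM-N yields $\langle true\rangle(\vert\vert_{i\in I}M_i)\langle L\rangle$, i.e., $\vert\vert_{i\in I}M_i\models L$, and the algorithm correctly answers affirmatively. If it fails, the returned $t$ is simulated on each $M_i\vert\vert coL$: if $t$ is not violating for some agent $M_i$, then $\langle\mathcal{DFA}(t)\rangle M_i\langle L\rangle$ holds, so by Lemma~\ref{lcv} we have $P_{A_i}(t)\in L(A_i^w)\setminus L(A_i)\subseteq L(A_i^w)\,\Delta\,L(A_i)$, and feeding $P_{A_i}(t)$ back into loop $i$ again strictly increases the state count of $A_i$; otherwise $t$ is a common violating word and the algorithm correctly reports $\vert\vert_{i\in I}M_i\not\models L$.

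The main obstacle — and the place where I would lean most directly on \cite{cob2} — is establishing overall termination despite the feedback between the two layers. The argument is a double monotonicity: each spurious counterexample strictly enlarges some $A_i$, while every $A_i$ is bounded above in number of states by $A_i^w$, so only finitely many refinements are possible. Once every $A_i$ has been driven up to $A_i^w$, the \emph{completeness} of SYM-N for weakest assumptions guarantees that premise $n+1$ is decided conclusively: either it holds and $\vert\vert_{i\in I}M_i\models L$, or the returned trace cannot be spurious (since $L(A_i^w)\setminus L(A_i^w)=\emptyset$ for every $i$) and must therefore be a genuine common violation. In either case no further feedback is generated, and the procedure terminates in finitely many iterations with the correct verdict on $\vert\vert_{i\in I}M_i\models L$.
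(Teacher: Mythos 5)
Your proof is correct and, in substance, it is exactly the argument the paper relies on: the paper states this theorem without an in-text proof, citing \cite{cob2}, and your reconstruction (decidable oracle via Theorem~\ref{assume-guarantee}, Lemma~\ref{lcv} making each local loop a vanilla $L^*$ run toward the regular target $A_i^w$, strict state growth per counterexample bounded by $|A_i^w|$, soundness of SYM-N for affirmative answers, simulation on $M_i\vert\vert coL$ to separate spurious from genuine violations, and completeness at the weakest assumptions to guarantee a conclusive verdict) is the standard proof from that reference. No gaps; your explicit handling of the two counterexample directions ($A_i^j$ too weak from failed conjectures, $A_i$ too strong from failed premise $n{+}1$) and the double-monotonicity termination argument match the cited source's reasoning.
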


\subsection{Counterexample-guided Re-synthesis of Mission Plans}
Our proposed mission planning scheme explores the re-synthesis of the local mission plans after a mission-violating counterexample $t$ is provided by the compositional verification procedure. The re-synthesis is accomplished by the following steps: first,
\begin{equation}\label{co1}
L^{temp}_i = L^{mi}_i-P_i(t);
\end{equation}
next, to apply the $L^*_{LS}$ algorithm for the re-synthesis of the local mission supervisor, it is desired to provide agent $G_i (i\in I)$ with a prefix-closed mission specification; this requirement is fulfilled by setting
\begin{equation}\label{co2}
L_i = L^{temp}_i - coL^{temp}_i\Sigma_i^*,
\end{equation}
where $coL^{temp}_i=\Sigma_i^*-L^{temp}_i$ is the complement language of $L^{temp}_i$. The obtained $L_i$ is used as the new initial local mission for agent $G_i$ to update the new local mission supervisor $S_i (i\in I)$ with the $L^*_{LS}$ algorithm.

Since it has been shown in the previous subsections that both the $L^*_{LS}$ and $L_{CV}^*$ algorithms possess finite convergence, we mainly focus on evaluating the performance of the proposed mission planning schemes in this subsection. We introduce the notion of {\it separate controllability} to characterize the solutions of the multi-agent mission planning.

\begin{definition}\label{separate controllability}
Given the cooperative multi-agent system that consists of $n$ agents $G_i$ with $\Sigma_{i,c}\subseteq \Sigma_i$ being the local controllable events, $i\in I$, a prefix-closed language $L\subseteq \Sigma^*$ is said to be separately controllable with respect to $\Sigma_i, \Sigma_{i,uc}$ and $G_i$ if
\begin{enumerate}
\item $L=\vert\vert_{i\in I}P_i(L)$;
\item $P_i(L)$ is controllable with respect to $G_i$ and $\Sigma_{i,uc}$.
\end{enumerate}
\end{definition}

We assert that the separate controllability is a sufficient and necessary condition for the existence of solutions of the cooperative multi-agent mission planning.

\begin{theorem}\label{sep con}
Given a cooperative multi-agent system $G$ that consists of $n$ agents $G_i (i\in I)$ with local controllable events $\Sigma_{i,c}$ and local uncontrollable events $\Sigma_{i,uc}$, and a non-empty and prefix-closed global mission specification $L\subseteq L(G)$, there exists a series of local mission supervisors $S_i (i\in I)$ for each agent $G_i$ such that $L(S_i\vert\vert G_i)=P_i(L)$ and $\vert\vert_{i\in I} L(S_i\vert\vert G_i)=L$ if and only if $L$ is separately controllable with respect to $\Sigma_i$, $\Sigma_{i,uc}$ and $G_i$.
\end{theorem}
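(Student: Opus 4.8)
The plan is to prove the two directions of this biconditional separately, treating it as a characterization of when the global specification $L$ can be realized as a parallel composition of individually-controllable local supervised behaviors. The statement connects two conditions: the existence of local supervisors $S_i$ whose supervised behaviors $L(S_i\|G_i)$ equal the local projections $P_i(L)$ and recompose to $L$, versus the definition of separate controllability (Definition~\ref{separate controllability}), which requires $L = \|_{i\in I} P_i(L)$ together with controllability of each $P_i(L)$ with respect to $G_i$ and $\Sigma_{i,uc}$.

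For the \emph{(if)} direction, I would assume $L$ is separately controllable and construct the supervisors explicitly. Since $P_i(L)$ is controllable with respect to $G_i$ and $\Sigma_{i,uc}$ by condition~(2) of Definition~\ref{separate controllability}, the standard supervisory control result recalled just before Definition~\ref{controllability} guarantees the existence of a supervisor $S_i$ with $L(S_i\|G_i) = P_i(L)$; note $P_i(L)$ is prefix-closed and contained in $L(G_i)$ by Theorem~\ref{fea}. Then condition~(1), namely $L = \|_{i\in I} P_i(L)$, immediately gives $\|_{i\in I} L(S_i\|G_i) = \|_{i\in I} P_i(L) = L$, closing this direction.

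For the \emph{(only if)} direction, I would assume the supervisors $S_i$ exist with $L(S_i\|G_i) = P_i(L)$ and $\|_{i\in I} L(S_i\|G_i) = L$, and recover both conditions of separate controllability. Substituting the first equality into the second yields $L = \|_{i\in I} P_i(L)$, which is exactly condition~(1). For condition~(2), the key observation is that $L(S_i\|G_i) = P_i(L)$ is realizable as a supervised behavior only if it satisfies the controllability constraint: a supervisor can only disable controllable events, so $\overline{P_i(L)}\,\Sigma_{i,uc} \cap L(G_i) \subseteq \overline{P_i(L)}$ must hold. Because $P_i(L)$ is already prefix-closed (Theorem~\ref{fea}), this is precisely controllability of $P_i(L)$ with respect to $G_i$ and $\Sigma_{i,uc}$ per Definition~\ref{controllability}.

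\textbf{The main obstacle} I anticipate is making the controllability extraction in the \emph{(only if)} direction fully rigorous, specifically justifying why the mere existence of a supervisor $S_i$ achieving $L(S_i\|G_i) = P_i(L)$ forces controllability rather than merely being consistent with it. This relies on the well-known controllability theorem (the existence of a supervisor realizing a nonempty prefix-closed sublanguage is equivalent to that sublanguage being controllable), which is the biconditional stated informally before Definition~\ref{controllability}; I would invoke it in the forward direction to conclude controllability of $P_i(L)$ from the realizability of $P_i(L)$ as $L(S_i\|G_i)$. A secondary subtlety is ensuring the recomposition $\|_{i\in I}$ interacts correctly with the projections, but this is handled cleanly by the separability characterization from \cite{will} quoted in the excerpt, so the genuine work concentrates in the per-agent controllability argument.
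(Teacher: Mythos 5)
Your proposal is correct and follows essentially the same route as the paper's own proof: the ($\Leftarrow$) direction combines Theorem~\ref{fea} (prefix-closedness and $P_i(L)\subseteq L(G_i)$) with the controllability-equals-realizability theorem to obtain each $S_i$, then applies separability to recompose $L$, while the ($\Rightarrow$) direction reads off separability by substitution and extracts controllability of $P_i(L)$ from its realizability as a supervised behavior. Your explicit appeal to the biconditional supervisor-existence theorem in the ($\Rightarrow$) direction is in fact slightly more careful than the paper's one-line remark that the controlled behavior ``is clearly controllable,'' but it is the same argument.
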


\begin{proof}
($\Leftarrow$) When $L\subseteq L(G)$ is separately controllable with respect to $\Sigma_i, \Sigma_{i,uc}$ and $G_i$, Theorem~\ref{fea} yields that $P_i(L)\subseteq L(G_i)$ is prefix-closed; whereas Definition~\ref{separate controllability} suggests that $P_i(L)$ be controllable with respect to $G_i$ and $\Sigma_{i,uc}$, which implies that for each $i\in I$, there exists a local supervisor $S_i$ such that $L(S_i\vert\vert G_i)=P_i(L)$ \cite{kum}. Therefore, $\vert\vert_{i\in I} L(S_i\vert\vert G_i)=\vert\vert_{i\in I}P_i(L)=L$ is guaranteed according to the separability requirement of Definition~\ref{separate controllability}.

($\Rightarrow$) Conversely, when local supervisors $S_i (i\in I)$ exist such that $L(S_i\vert\vert G_i)=P_i(L)$ and $\vert\vert_{i\in I} L(S_i\vert\vert G_i)=L$, $L=\vert\vert_{i\in I} P_i(L)$ trivially holds. Furthermore, since $L(S_i\vert\vert G_i)=P_i(L)$ is a controlled behavior for agent $G_i$, it is clearly controllable with respect to $G_i$ and $\Sigma_{i,uc}$.
\end{proof}

The following theorem states that in general, the separate controllability is a strictly stronger notion than the combination of global controllability and separability.

\begin{theorem}
For the cooperative multi-agent system $G$ that consists of $G_i (i\in I)$, if a non-empty prefix-closed specification language $L\subseteq L(G)$ is separately controllable with respect to $\Sigma_i, \Sigma_{i,uc}$ and $G_i$, then $L$ is separable with respect to $\{\Sigma_i\}_{i\in I}$ and is controllable with respect to $G$ and $\Sigma_{uc}$.
\end{theorem}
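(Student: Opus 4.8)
The plan is to treat the two conclusions separately, since separability falls out immediately from the hypothesis while controllability requires a genuine argument. For separability, I would simply invoke Definition~\ref{separate controllability}, whose first clause states $L=\vert\vert_{i\in I}P_i(L)$. Taking $L_i:=P_i(L)\subseteq\Sigma_i^*$ exhibits $L$ as a synchronous product of local languages, which is exactly the condition demanded by Definition~\ref{sep}; equivalently, this is the characterization $L=\vert\vert_{i=1}^n P_i(L)=\bigcap_{i=1}^n P_i^{-1}[P_i(L)]$ quoted from \cite{will}. Hence $L$ is separable with respect to $\{\Sigma_i\}_{i\in I}$ with no further work.

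For controllability, since $L$ is prefix-closed we have $\overline L=L$, so by Definition~\ref{controllability} it suffices to show $L\Sigma_{uc}\cap L(G)\subseteq L$. I would fix $s\in L$ and $\sigma\in\Sigma_{uc}$ with $s\sigma\in L(G)$ and prove $s\sigma\in L$. Using the separability characterization $L=\bigcap_{i\in I}P_i^{-1}[P_i(L)]$, membership $s\sigma\in L$ is equivalent to $P_i(s\sigma)\in P_i(L)$ for every $i\in I$, so the argument reduces to a per-agent check. For each $i$ I would split on whether $\sigma\in\Sigma_i$: if $\sigma\notin\Sigma_i$ then $P_i(s\sigma)=P_i(s)$, which already lies in $P_i(L)$ because $s\in L$; the substantive case is $\sigma\in\Sigma_i$, where $P_i(s\sigma)=P_i(s)\sigma$.

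In the substantive case I would translate the global data to the local level. First, the key observation on the event partition: since $\Sigma_{uc}=\Sigma-\bigcup_{i\in I}\Sigma_{i,c}$, an event $\sigma\in\Sigma_{uc}$ lies in no $\Sigma_{i,c}$, so whenever $\sigma\in\Sigma_i=\Sigma_{i,c}\dot\cup\Sigma_{i,uc}$ we necessarily have $\sigma\in\Sigma_{i,uc}$. Second, from $L(G)=\vert\vert_{i\in I}L(G_i)=\bigcap_{i\in I}P_i^{-1}[L(G_i)]$ and $s\sigma\in L(G)$ I obtain $P_i(s)\sigma=P_i(s\sigma)\in L(G_i)$. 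Now $P_i(L)$ is prefix-closed by Theorem~\ref{fea}, so its controllability with respect to $G_i$ and $\Sigma_{i,uc}$ (clause~2 of Definition~\ref{separate controllability}) reads $P_i(L)\Sigma_{i,uc}\cap L(G_i)\subseteq P_i(L)$. Since $P_i(s)\in P_i(L)$, $\sigma\in\Sigma_{i,uc}$, and $P_i(s)\sigma\in L(G_i)$, this inclusion yields $P_i(s)\sigma\in P_i(L)$, i.e.\ $P_i(s\sigma)\in P_i(L)$. Combining both cases over all $i$ gives $s\sigma\in\bigcap_{i\in I}P_i^{-1}[P_i(L)]=L$, establishing controllability.

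I expect the main obstacle to be precisely the event-partition bookkeeping in the substantive case: the definitions force me to pass from global uncontrollability ($\sigma\in\Sigma_{uc}$) to local uncontrollability ($\sigma\in\Sigma_{i,uc}$) for exactly those agents that share $\sigma$, and to recognize that local controllability must be applied to the prefix-closed language $P_i(L)$ (so that $\overline{P_i(L)}=P_i(L)$) rather than to $L$ itself. Everything else is routine manipulation of natural projections together with the synchronous-product identities for $L$ and $L(G)$.
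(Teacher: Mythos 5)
Your proof is correct, and at its core it rests on the same two ingredients as the paper's: separability is read off directly from clause~1 of Definition~\ref{separate controllability}, and global controllability is reduced to the local controllability of each $P_i(L)$ through the representation $L=\bigcap_{i\in I}P_i^{-1}[P_i(L)]$. The execution differs, however. The paper stays at the level of languages: from $P_i(L)\Sigma_{i,uc}\cap L(G_i)\subseteq P_i(L)$ it derives, via monotonicity of $P_i^{-1}$ and distribution of $P_i^{-1}$ over intersection, the inclusion $L\Sigma_{i,uc}\cap L(G)\subseteq P_i^{-1}(P_i(L))$ for each $i$, intersects over $i\in I$, and finally invokes the convention $\Sigma_{uc}=\bigcap_{i\in I}\Sigma_{i,uc}$ to pass from $\bigcap_{i\in I}\left[L\Sigma_{i,uc}\cap L(G)\right]$ to $L\Sigma_{uc}\cap L(G)$. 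You instead argue element-wise on a fixed word $s\sigma$ with a case split on whether $\sigma\in\Sigma_i$, and this buys you something concrete: the paper's final step implicitly needs $\Sigma_{uc}\subseteq\Sigma_{i,uc}$ for every $i$, i.e.\ every globally uncontrollable event must lie in every local alphabet, which is exactly what the identity $\Sigma_{uc}=\bigcap_{i\in I}\Sigma_{i,uc}$ encodes; your case $\sigma\notin\Sigma_i$, where $P_i(s\sigma)=P_i(s)\in P_i(L)$ trivially, dispenses with that assumption and uses only $\Sigma_{uc}=\Sigma-\bigcup_{i\in I}\Sigma_{i,c}$, so your argument also covers uncontrollable events private to a proper subset of the agents. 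The trade-off is brevity: the paper's set-level computation recycles the projection identities wholesale, while your version is more elementary and makes the alphabet bookkeeping --- precisely the point you flagged as the main obstacle --- fully explicit, including the correct observation that local controllability must be applied to the prefix-closed language $P_i(L)$ (via Theorem~\ref{fea}) rather than to $L$ itself.
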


\begin{proof}
First, by Definition~\ref{separate controllability}, the separate controllability of $L$ implies that $L=\vert\vert_{i\in I} P_i(L)$, which assures the separability of $L$. Next we show that $L$ is globally controllable. In fact, $P_i(L)$ is prefix-closed and controllable with respect to $G_i$ and $\Sigma_{i,uc}$, i.e.
$$P_i(L)\Sigma_{i,uc} \cap L(G_i) \subseteq P_i(L),$$
it follows from the monotonicity of $P_i^{-1}$ that
\begin{equation}\nonumber
P_i^{-1}\left[P_i(L)\Sigma_{i,uc} \cap L(G_i)\right]\subseteq P_i^{-1}(P_i(L)).
%& \subseteq P_i^{-1}(P_i(L)\Sigma_{i,uc}).
\end{equation}
Since
$$P_i^{-1}\left[P_i(L)\Sigma_{i,uc} \cap L(G_i)\right]=P_i^{-1}\left[P_i(L)\Sigma_{i,uc}\right]\cap P_i^{-1}(L(G_i)),$$
we have
\begin{equation}\nonumber
\begin{split}
L\Sigma_{i,uc} \cap L(G) &\subseteq P_i^{-1}(P_i(L))\Sigma_{i,uc} \cap P_i^{-1}(P_i(L(G))) \\
&\subseteq P_i^{-1}\left[P_i(L)\Sigma_{i,uc}\right] \cap P_i^{-1}(L(G_i))\\
&=P_i^{-1}\left[P_i(L)\Sigma_{i,uc} \cap L(G_i)\right]\subseteq P_i^{-1}(P_i(L)).
\end{split}
\end{equation}

Note that the last inclusion always holds for any $i\in I$. Therefore
\begin{equation}\nonumber
\bigcap_{i\in I}\left[L\Sigma_{i,uc} \cap L(G)\right] \subseteq \bigcap_{i\in I}P_i^{-1}(P_i(L))=\vert\vert_{i\in I} P_i(L),
\end{equation}
where the last equality holds according to Definition~\ref{product}. From the definitions of $\Sigma_c$ and $\Sigma_{uc}$, the above equation is equivalent to
\begin{equation}
L\Sigma_{uc}\cap L(G) \subseteq \vert\vert_{i\in I} P_i(L) = L,
\end{equation}
which enforces the global controllability of $L$.
\end{proof}

Based on the aforementioned properties of separately controllable languages, the following theorem is established to illustrate the correctness of the proposed mission planning scheme.

\begin{theorem}\label{sep cont correct}
The mission planning procedure shown in Fig.~\ref{caf} terminates and returns local mission supervisors $S_i$ as well as local mission plans $L^{mi}_i (i\in I)$ for each agent such that the collective behaviors of the cooperative multi-agent system $G$ achieve a separately controllable sublanguage of $L$, within a finite number of iterations.
\end{theorem}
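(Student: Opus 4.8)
The plan is to establish termination and correctness by assembling the convergence results already proven for the three learning algorithms and interpreting the iterative re-synthesis loop of Fig.~\ref{caf} as a monotonically decreasing sequence of candidate mission plans. First I would argue finite convergence of each inner stage: by Theorem~\ref{LC}, every invocation of the $L^*_{LS}$ algorithm terminates and returns a supervisor $S_i$ realizing $\sup C_i(L_i)$; by Theorem~\ref{cv correctness}, each invocation of the $L^*_{CV}$ algorithm terminates and decides whether $\vert\vert_{i\in I} M_i\models L$, producing a mission-violating counterexample $t$ when it does not. Hence a single pass through steps 2) and 3) of the framework always halts. The remaining task is to show that the outer loop — the counterexample-guided re-synthesis of Section~\ref{sec:synthesis}.E — is itself executed only finitely often.

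For the outer loop I would track the sequence of global candidate plans obtained after each re-synthesis round. At round $k$ let $\hat{L}^{(k)}:=\vert\vert_{i\in I} L^{mi,(k)}_i$, where $L^{mi,(k)}_i=\sup C_i(L^{(k)}_i)$ are the current local plans. When compositional verification fails it returns a common violating word $t$, and equations (\ref{co1}) and (\ref{co2}) strictly remove $P_i(t)$ (together with all of its uncontrollable extensions) from each offending agent's specification. The key monotonicity claim is that
\begin{equation}\nonumber
\hat{L}^{(k+1)}\subsetneq \hat{L}^{(k)},
\end{equation}
because the word $t$ (or a prefix-extension thereof) was accepted by $\vert\vert_{i\in I} M_i$ at round $k$ but is excluded from $\hat{L}^{(k+1)}$ by construction of (\ref{co1})--(\ref{co2}). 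Since every $\hat{L}^{(k)}$ is regular and prefix-closed with $\hat{L}^{(k)}\subseteq L\subseteq L(G)$, and each step strictly decreases the language, I would invoke finiteness of the underlying DFA state space (the plant $G$ has finitely many states, so only finitely many distinct controllable prefix-closed sublanguages of $L(G)$ arise) to conclude that the descending chain stabilizes after finitely many rounds. Termination then follows, and upon termination the $L^*_{CV}$ algorithm certifies $\vert\vert_{i\in I} L(S_i\vert\vert G_i)\models L$.

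For correctness I would combine Theorem~\ref{sep con} with the monotonicity argument. Each $L^{mi}_i=L(S_i\vert\vert G_i)$ is by Theorem~\ref{LC} controllable with respect to $G_i$ and $\Sigma_{i,uc}$; and the re-synthesis in (\ref{co2}) preserves prefix-closedness and hence, via Lemma~\ref{supC(K) iteration}, controllability. At termination the surviving family $\{L^{mi}_i\}_{i\in I}$ satisfies both conditions of Definition~\ref{separate controllability} for the limit language $\hat{L}^{(\ast)}=\vert\vert_{i\in I} L^{mi}_i$: separability holds because $\hat{L}^{(\ast)}$ is by definition a synchronous product of its own local projections, and separate controllability holds because each factor is controllable. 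Thus $\hat{L}^{(\ast)}$ is a separately controllable sublanguage of $L$ that the supervisors realize, which is exactly the assertion.

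The main obstacle I expect is the strict-descent claim $\hat{L}^{(k+1)}\subsetneq \hat{L}^{(k)}$. It is tempting but not immediate that removing $P_i(t)$ locally actually removes $t$ (equivalently, some witnessing global trace) from the synchronous product; one must verify that the counterexample $t$ returned by the second-layer Teacher is a genuine common violating word for which $P_i(t)$ survives in $L^{mi}_i$ for the chosen offending agent $i$, so that deleting it strictly shrinks the composition rather than leaving $\hat{L}$ unchanged. This requires using the case-analysis in Section~\ref{sec:synthesis}.D that distinguishes a common violating word from one that can be re-absorbed into a single agent's assumption, and confirming that (\ref{co1})--(\ref{co2}) indeed excise the offending behavior while keeping the specification controllable via the supremal-controllable recomputation. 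Bounding the number of such excisions by the finite DFA state count of $G$ is the crux of turning the intuitive "the plan keeps shrinking" into a rigorous finite-termination argument.
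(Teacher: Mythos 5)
Your overall architecture matches the paper's proof: inner termination from Theorems~\ref{LC} and~\ref{cv correctness}, the monotone chain $\tilde L^{mi}_i=\sup C_i(L_i)\subseteq L_i\subseteq L^{temp}_i\subseteq L^{mi}_i$ induced by (\ref{co1})--(\ref{co2}), and separate controllability of the surviving product at termination via Definition~\ref{separate controllability}. However, the step you yourself identify as the crux contains a genuine error: the finiteness lemma you invoke is false. A finite-state plant $G$ does \emph{not} have only finitely many controllable prefix-closed sublanguages of $L(G)$, and strictly decreasing chains of such languages can be infinite. For instance, with $\Sigma=\{a,b\}$ both controllable and $L(G)=\overline{a^*b}$, the languages $L_n=\{a^k \mid k\in\mathbb{N}\}\cup\{a^k b \mid k\ge n\}$ are regular, prefix-closed, controllable, and form an infinite strictly descending chain; the minimal DFAs recognizing them grow without bound even though $G$ has two states. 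So ``each round strictly shrinks $\hat L^{(k)}$'' plus ``finitely many candidates'' does not close the termination argument. To repair it you would need a bound on the state counts of the automata produced across re-synthesis rounds, which is not automatic. The paper, for what it is worth, does not attempt your stronger claim: it only exhibits the monotone sequence of local plans and leans on the finite convergence of the $L^*_{LS}$ and $L^*_{CV}$ subroutines for each pass, leaving the outer-loop bound informal --- your proposal promises more rigor but rests on a wrong lemma.

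The second gap is that you never use the standing hypothesis of Problem~\ref{dcccp} that the independence relation of $L$ is transitive, and correspondingly you never establish \emph{non-emptiness} of the limit. The paper's proof introduces the set $S(L)$ of prefix-closed separable sublanguages of $L$, notes $S(L)\ne\emptyset$ precisely by that transitivity assumption (Remark~2, citing the result of Lin--Stefanescu--Su), and concludes the terminal product $\tilde L=\vert\vert_{i\in I}\tilde L^{mi}_i$ is a \emph{non-empty} separately controllable sublanguage. Your argument would be equally satisfied if the descending chain bottomed out at $\hat L^{(\ast)}=\emptyset$, which meets the letter of ``separately controllable sublanguage'' only vacuously; a proof that never touches a hypothesis the paper treats as essential should be a red flag. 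A minor further point, shared with the paper's own write-up: Definition~\ref{separate controllability} requires controllability of the projections $P_i\bigl(\hat L^{(\ast)}\bigr)$, which may be strictly smaller than the factors $L^{mi}_i$ whose controllability Theorem~\ref{LC} delivers; both you and the paper silently identify the two, and an honest proof should either verify $P_i\bigl(\hat L^{(\ast)}\bigr)=L^{mi}_i$ at termination or argue controllability of the projection directly.
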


\begin{proof}
When $L$ is separable, the local mission plans are given by $L^{mi}_i:=\sup C_i(P_i(L))$ and the correctness of the proposed mission planning procedure trivially holds.

When $L$ is not separable, we define
\begin{equation}\nonumber
S(L)=\{H\subseteq L \vert (L=\overline{L})\land (H=\vert\vert_{i\in I} P_i(H))\}
\end{equation}
as the set of prefix-closed and separable sublanguages of $L$. Under the assumption imposed by Remark 2, $S(L)\ne\emptyset$.

At each iteration step of the $L^*_{CV}$ algorithm, a counterexample $t$ is actually a word that belongs to $\vert\vert_{i\in I} L^{mi}_i - L$. The first step (\ref{co1}) of the counterexample-guided replanning thus eliminates all the observationally indistinguishable local mission behaviors with respect to $t$ from the local mission plan $L^{mi}_i$ and generates a temporal mission language $L_i^{temp}$; while the second step (\ref{co2}) in fact computes the supremal prefix-closed sublanguage \cite{kum} of $L_i^{temp} (i\in I)$. Thus,
\begin{equation}
L_i=\overline{L_i}\subseteq L^{temp}_i.
\end{equation}

Since during the iterative execution of the mission planning, $L_i$ is used as an updated mission specification for the $L^*_{LS}$ algorithm; thus, the updated mission plans of agent $G_i$, denoted as $\tilde L^{mi}_i (i\in I)$, after the generation of $t$, will be
\begin{equation}
\tilde L^{mi}_i=\sup C_i(L_i)\subseteq L_i \subseteq L^{temp}_i \subseteq L^{mi}_i
\end{equation}
which implies that the joint effort of the $L^*_{LS}$ and $L^*_{CV}$ algorithms generates the following ``monotonic" sequence of local mission plans for each agent $G_i$:
\begin{equation*}
\emptyset\subseteq \cdots \subseteq \tilde L^{mi}_i \subseteq L^{mi}_i\subseteq L^{mi}_{i,0}:=\sup C_i(P_i(L))\subseteq P_i(L).
\end{equation*}

When no more counterexamples are produced by the $L^*_{CV}$ algorithm, with slightly abusing the notations, we can write that
$$\tilde L :=\vert\vert_{i\in I} \tilde L^{mi}_i \subseteq L,$$
thus $\tilde L\in S(L)$. Furthermore, $\tilde L^{mi}_i$, generated by the $L^*_{LS}$ algorithm, is controllable with respect to $\Sigma_{i,uc}$ and $G_i$. Hence according to Definition~\ref{separate controllability}, $\vert\vert_{i\in I} \tilde L^{mi}_i$ forms a non-empty and separately controllable sublanguage of $L$, which solves the mission planning of the cooperative multi-agent system.
%During the execution, the $L_{CV}^*$ keeps presenting counterexamples $t$ such that $t\in K-S(K)$. At each iteration step, $K_i$ and $P_i(\overline{t})$ are both prefix-closed, hence $K_i-P_i(\overline{t})$ is prefix-closed; moreover, when no more counterexample $t$ is provided by the $L_{CV}^*$, one can eventually obtain a sublanguage, namely $\tilde K\in S(K)$. Facing the separable specification $\tilde K$, the automatic synthesis framework can then synthesize the local supervisors $\tilde S_i, i\in I$ to fulfill the supremal controllable sublanguage of $\tilde K$, which is also a sublanguage of $K$ and this language is a solution of Problem 1.
\end{proof}

\begin{remark}
We use $L^{mi}_i$ to denote the synthesized mission plans for agent $G_i (i\in I)$ hereafter.
\end{remark}

\section{Motion Planning of Cooperative Agents}\label{sec:verification}
\subsection{Automatic Generation of Motion Plans}
After the design of feasible local missions $L^{mi}_i (i\in I)$, the motion planning problem concentrates on finding a set of motion plans $L^{mo}_i$ associated with each $L^{mi}_i$ such that the integrated mission-motion plans $LP_i (i\in I)$ solve Problem~\ref{dcccp}. A local motion plan actually consists of two parts: a local motion plan $L^{mo}_i\subseteq V^*$ that enumerates all the regions visited by the agent, and a local door profile $D^{mo}_i\subseteq D^*$ associated with $L^{mo}_i$ that records all possible doors through which the agents shall pass. It is desired that the local motion plan is {\it adequate}, which is defined as follows.
\begin{definition}[Adequate Plans]\label{perm}
A local motion plan $L^{mo}_i$ for agent $G_i (i\in I)$ is said to be adequate if
\begin{enumerate}
\item $L^{mo}_i\models \pi_i(L^{mi}_i)$;
\item $L^{mo}_i\subseteq Run\left[L(G^m_i)\right]$.
%\item for any other local motion plan $\hat {L}^{mo}_i$ that satisfies 1) and 2), $L^{mo}_i\subseteq \hat {L}^{mo}_i$.
\end{enumerate}
\end{definition}

From Definition~\ref{perm}, synthesis of the local motion plan starts by exploiting the mission-motion integration relation $\pi_i (i\in I)$ in (5). The intuition behind Definition~\ref{perm} is that $L^{mo}_i$ shall obey restrictions imposed not only by $L^{mi}_i$, but by $G^m_i$ as well (adequacy).
%; secondly, among all the local motion plans that satisfy the adequacy, $L^{mo}_i$ contains the words of minimal lengths.

It is clear from (5) that the mission-motion integration mapping $\pi_i$ maps the prefix-closed language $L^{mi}_i$ over $\Sigma_i$ into a prefix-closed language $\pi_i(L^{mi}_i)$ over $V$. To efficiently compute an adequate motion plan for each agent (which is not known {\it a priori}), we employ a third modification of the $L^*$ learning algorithm, termed as the $L^*_{MP}$\footnote{The subscript ``MP" stands for ``motion planning".}  algorithm, to pursue an adequate local motion plan. Specifically, the Teacher of the $L^*_{MP}$ algorithm is designed to determine the following membership queries for agent $G_i$: for $i\in I$ and $t\in V^*$:
\begin{equation}\label{L_mp}
T_i(t) =
\begin{cases}
1,  & \mbox{if } \mathcal{DFA}(t)\models \pi_i(L^{mi}_i) \mbox{ is true, }\\
0, & \mbox{otherwise,}
\end{cases}
\end{equation}
where $\mathcal{DFA}(t)$ is defined similarly to the membership queries (\ref{L_cv}) for the $L^*_{CV}$ algorithm, except that the underlying DFA is defined over $V$. Besides justifying the membership queries (\ref{L_mp}), the Teacher of the $L^*_{MP}$ algorithm answers the conjecture $L^{mo}_i\subseteq Run\left[L(G^m_i)\right]$. If the conjecture is denied, a counterexample $t\in V^*$ is produced. Since $L^{mo}_i\subseteq Run\left[L(G^m_i)\right]$ is false, we know that $t$ witnesses a difference between $L^{mo}_i$ and $Run\left[L(G^m_i)\right]$; therefore, it is returned to the $L^*_{MP}$ algorithm to update the motion plan $L^{mo}_i$.

The correctness and termination properties of the $L^*_{MP}$ algorithm can be summarized as the following theorem.
\begin{theorem}
Given the local mission plan $L^{mi}_i$ and the mission-motion integration mapping $\pi_i$, the $L^*_{MP}$ algorithm terminates and correctly constructs an adequate local motion plan $L^{mo}_i (i\in I)$.
\end{theorem}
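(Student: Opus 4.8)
The plan is to recognize the $L^*_{MP}$ algorithm as an instance of Angluin's $L^*$ procedure (Theorems~\ref{L_iso} and~\ref{L_ter}) learning a single regular target language, and then to verify that the two-part Teacher acts as a minimally adequate Teacher for that target. First I would identify the target: the maximal adequate motion plan
$$U_i := \pi_i(L^{mi}_i)\cap Run\left[L(G^m_i)\right].$$
I would argue that $U_i$ is regular and prefix-closed: $\pi_i$ is a symbol-to-symbol relabeling, so $\pi_i(L^{mi}_i)$ is the homomorphic image of a regular prefix-closed language and hence is itself regular and prefix-closed; $Run\left[L(G^m_i)\right]$ is the set of state sequences of the trim DFA $G^m_i$ over the alphabet $V$, which is regular; and both properties are preserved under intersection. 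Thus $U_i$ admits a minimal DFA with a finite number of states, say $n_i$. I would also observe that, since the satisfaction relation over the common alphabet $V$ reduces (for prefix-closed properties) to language containment, a prefix-closed $L^{mo}_i$ is adequate in the sense of Definition~\ref{perm} exactly when $L^{mo}_i\subseteq U_i$, and $U_i$ is the largest such language.

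Next I would show that the Teacher is minimally adequate for $U_i$. The membership query $T_i(t)$ in (\ref{L_mp}) returns whether $\mathcal{DFA}(t)\models\pi_i(L^{mi}_i)$, i.e. whether $t\in\pi_i(L^{mi}_i)$; combined with the directly decidable test $t\in Run\left[L(G^m_i)\right]$ (the DFA $G^m_i$ being known), this correctly classifies membership in $U_i$. For the conjecture, the hypothesis $M_j$ is consistent with these classifications, so the words it accepts already respect $\pi_i(L^{mi}_i)$; the Teacher then checks the remaining adequacy condition $L(M_j)\subseteq Run\left[L(G^m_i)\right]$. When this fails, any returned counterexample $t\in L(M_j)\setminus Run\left[L(G^m_i)\right]$ lies in $L(M_j)\setminus U_i$, hence in $L(M_j)\Delta U_i$, so it is a genuine $L^*$ counterexample; I would feed $\overline{\{t\}}$ into the observation table exactly as in Algorithm~1 and the standard $L^*$ update.

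Termination then follows from the $L^*$ machinery: by Theorem~\ref{L_iso} each counterexample forces the next closed-and-consistent hypothesis to have strictly more states, while every conjectured DFA consistent with the $U_i$-classification has at most $n_i$ states by Theorem~\ref{L_ter}; hence only finitely many counterexamples can occur and the procedure halts, with running time polynomial in $n_i$ and the maximal counterexample length. For correctness, at termination the conjecture is accepted, so the returned $L^{mo}_i=L(M_j)$ satisfies condition~2 of Definition~\ref{perm} by construction, while consistency of the final table with the $U_i$-classification gives $L^{mo}_i\subseteq\pi_i(L^{mi}_i)$, i.e. $L^{mo}_i\models\pi_i(L^{mi}_i)$ (condition~1); thus $L^{mo}_i$ is adequate.

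The step I expect to be the main obstacle is establishing that the Teacher is genuinely minimally adequate for the single target $U_i$, given that the stated membership query (\ref{L_mp}) only encodes condition~1 and the conjecture only encodes condition~2. The delicate point is to rule out non-refining loops: one must guarantee that a counterexample produced by the condition-2 subset check is always rejected by the membership classification of $U_i$, so that adding it actually changes a table entry and triggers the state increase of Theorem~\ref{L_iso}, rather than being a word that $\pi_i(L^{mi}_i)$ accepts but $Run\left[L(G^m_i)\right]$ rejects and that the table already accepts. This is precisely why the $Run$-membership test must be folded into the classification used to fill the observation table. Making this reconciliation explicit, and thereby justifying that the two Teacher components jointly realize the characteristic function of $U_i$, is the crux; once it is in place, correctness and finite convergence follow immediately from Theorems~\ref{L_iso} and~\ref{L_ter}.
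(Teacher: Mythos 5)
Your overall route coincides with the paper's: both reduce $L^*_{MP}$ to Angluin's machinery with a two-part Teacher and invoke Theorems~\ref{L_iso} and~\ref{L_ter}. The paper's own proof is essentially a two-line sketch (counterexamples lie in the symmetric difference of $\pi_i(L^{mi}_i)$ and $Run\left[L(G^m_i)\right]$; Theorem~\ref{L_ter} gives termination), whereas you make the target explicit as $U_i=\pi_i(L^{mi}_i)\cap Run\left[L(G^m_i)\right]$ and verify the oracle. The ``non-refining loop'' you flag is a real defect of the algorithm as literally stated and the paper's proof skips it: under the membership query (\ref{L_mp}) a counterexample $t\in \pi_i(L^{mi}_i)\setminus Run\left[L(G^m_i)\right]$ already carries table value $1$, so adding $\overline{\{t\}}$ changes no entry, and Theorem~\ref{L_iso} --- which presupposes a word misclassified by the current hypothesis relative to the table --- gives no state increase. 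Folding the $Run$-membership into the classification, as you propose, is the correct repair, and your regularity/prefix-closedness argument for $U_i$ is sound.

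However, your final correctness step has a genuine gap of its own: you assert that consistency of the final observation table with the $U_i$-classification yields $L^{mo}_i=L(M_j)\subseteq \pi_i(L^{mi}_i)$. Table consistency constrains only the finitely many queried words $(S\cup S\Sigma)E$; an $L^*$ hypothesis may accept unqueried words outside the target (e.g., for target $\overline{\{ab\}}$ the first closed, consistent table produces the hypothesis $a^*$, which accepts $aa$). So if the conjecture check is only $L(M_j)\subseteq Run\left[L(G^m_i)\right]$, the Teacher can accept a hypothesis violating condition~1 of Definition~\ref{perm}, and the proof of adequacy fails at exactly the point you wave through. The fix is cheap: since $\pi_i(L^{mi}_i)$ is computable from $L^{mi}_i$ and $\pi_i$, and $G^m_i$ is given, the Teacher can decide $L(M_j)\subseteq U_i$ (or full equivalence with $U_i$) by a product construction. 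Any returned counterexample $t\in L(M_j)\setminus U_i$ is then genuinely misclassified, Theorem~\ref{L_iso} forces a strict state increase, the state count is bounded by that of the minimal DFA for $U_i$, so only finitely many counterexamples occur; and acceptance of the conjecture then literally is adequacy, since every $L(M_j)\subseteq U_i$ satisfies both conditions of Definition~\ref{perm}, with termination guaranteed at $L(M_j)=U_i$ at the latest. With that amendment your argument is complete and, unlike the paper's sketch, actually discharges both subtleties.
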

\begin{proof}
({\it Correctness}) The Teacher of the $L^*_{MP}$ algorithm is designed according to the two properties in Definition~\ref{perm}. When a counterexample is generated, the counterexample in fact represents a word in the symmetric difference between $\pi_i(L^{mi}_i)$ and $Run\left[L(G^m_i)\right]$; therefore, if no more counterexample is generated, the current learned DFA $L^{mo}_i$ (with slightly abusing the notations) satisfies both of the requirements of Definition~\ref{perm}, which guarantees the correctness.

({\it Termination}) At any iteration step, after a local motion plan DFA $L^{mo}_i$ is conjectured, the $L^*_{MP}$ algorithm reports whether or not $L^{mo}_i$ is adequate and terminates, or continues the construction of $L^{mo}_i$ by providing new counterexamples. By Theorem~\ref{L_ter}, the learning procedure eventually terminates at some iteration step $j\in \mathbb{N}$, at that time, the $L^*_{MP}$ algorithm produces an adequate $L^{mo}_i$.
\end{proof}

%As illustrated in Algorithm 2 in details, $PMP-SYN$ extracts a feasible and permissive motion plan from the motion diagram $G^m_i$ for each agent $G_i$, $i\in I$. We starts with $\pi_i(L^{mi}_i) \subseteq V^*$ that specifies all the regions the agent $G_i$ should visit sequentially in order to accomplish missions in $L^{mi}_i$ (line 1). We check the feasibility of $\pi_i(L^{mi}_i)$, i.e., whether or not $\pi_i(L^{mi}_i)\subseteq Run\left[L(G^m_i)\right]$, if the set inclusion holds, then $\pi_i(L^{mi}_i)$
%Therefore, it can be accepted by a DFA $\pi_i(L^{mi}_i)$ over $V$, with slightly abusing the notations.

Furthermore, the door profile associated with the synthesized local motion plan $L^{mo}_i$ is obtained as
\begin{equation}
D^{mo}_i={\rm Word}(L^{mo}_i),
\end{equation}
where the operator $\rm Word$ computes all the corresponding words $s\in D^*$ by treating words in $L^{mo}_i$ as runs of $G^m_i$, i.e., ${\rm Word}(L^{mo}_i)=\{s\in L(G^m_i)\vert Run(s)\in L^{mo}_i\}$.
\subsection{Counterexample-guided Replanning in Uncertain Environments}
We aim to implement the integrated plans $\{LP_i\}_{i\in I}$ in the real environment $\mathcal{E}$ rather than the nominal environment characterized by $\mathcal{E}_0$, which is used for generating motion plans and door profiles. As mentioned in Section~\ref{sec:formulation}.A, feasible transitions among adjacent regions in $\mathcal{E}_0$ may become infeasible in $\mathcal{E}$ since some doors that are supposed to be open are closed. During the online coordination of the cooperative multi-agent systems, the underlying agents may detect the real door mapping $F^\mathcal{E}_D$ and the real region transition diagram $\delta^\mathcal{E}_i$ through sensing capabilities and communication with each other. Here we assume that the motion capacities $$G^\mathcal{E}_i=(V,D,v_{i,0},\delta^\mathcal{E}_i,V)$$
of agent $G_i (i\in I)$, under the restriction of $\mathcal{E}$, are also captured by a trim DFA. Based on the online-acquired knowledge of the environment, we develop a new algorithm to address the environment uncertainty, whose outputs include both the feasible integrated plan $LP^\mathcal{E}_i$ and the associated door profile $D^m_i$ for agent $G_i (i\in I)$ moving and performing missions in $\mathcal{E}$.

\begin{algorithm}[t]
\caption{Implementation of $LP_i$ in $\mathcal{E}$}
\begin{algorithmic}[1]
\REQUIRE Local integrated plan $LP_i$, local motion model $G^m_i$ and local door profile $D^{mo}_i$
\ENSURE Local implementable integrated plan $LP^\mathcal{E}_i$
\STATE Initialization: $LP^\mathcal{E}_i=LP_i$, $D^m_i=D^{mo}_i$
\STATE Order the words in $LP_i$: $LP_i=\bigcup_{k=1}^K \overline{LP_i^k}$
\STATE Construct $D^{mo}_i=\bigcup_{k=1}^K \overline{D^k_i}$: $D^k_i={\rm Word}[P_V(LP^k_i)]$
\FOR {all $k\in \{1,2,\ldots,K\}$}
\STATE Let $LP_i^k=LP_i^k(0)LP_i^k(1)\ldots LP_i^k(m)$
\FOR {all $l\in \{1,2,\ldots,m\}$}
\IF { $(\exists l)(\exists d): [LP_i^k(l+1)=\delta^\mathcal{E}_i(LP_i^k(l),d)]$ but $[d\not\in F_D^\mathcal{E}(LP_i^k(l),LP_i^k(l+1))]$ }
\STATE Find $d'\in D^k_i: LP_i^k(l+1)=\delta^m_i(LP_i^k(l),d')$ and $d'\in F_D^\mathcal{E}(LP_i^k(l),LP_i^k(l+1))$
\STATE Remove the words in $D^k_i$ that contains $d$
\STATE $LP^\mathcal{E}_i=LP_i$
\ELSE
\STATE Check if there exists $l$ such that $LP_i^k(l+1)\ne\delta^\mathcal{E}_i(LP_i^k(l),d)$ for all $d\in D$
\IF {such $l$ does exist}
\STATE Find $v_1,v_2,\ldots,v_p:$$(LP_i^k(l),v_1),\ldots,$$(v_p,LP_i^k(l+1))\in\longrightarrow_\mathcal{E}$
\STATE $LP_i^k(l)LP^k_i(l+1)=LP_i^k(l)v_1\ldots v_pLP^k_i(l+1)$
\STATE Update $LP^\mathcal{E}_i$ according to $LP^k_i$
\STATE Update $D^m_i$ accordingly
\ENDIF
\ENDIF
\ENDFOR
\ENDFOR
\RETURN $LP^\mathcal{E}_i$ and $D^m_i$
\end{algorithmic}
\end{algorithm}

As detailed in Algorithm 2, the implementation of $LP_i$ in the real environment $\mathcal{E}$ starts with the synthesized local plans $LP_i$ and corresponding door profiles $D^{mo}_i$. For $i\in I$, Algorithm 2 first enumerates (line 2) $LP_i$ as a collection of $K$ prefix-closed words over $V\cup \Sigma_i$, i.e., $LP_i=\bigcup_{k=1}^K \overline{LP_i^k}$, while if $LP_i^k$ admits a cycle such that $LP_i^k=uv^*$ for some $u\in\Sigma^*$ and $v\in\Sigma^*-\{\epsilon\}$, we replace $LP_i^k$ by $uv$. Next for each $LP^k_i$, we proceed to the investigation of two types of environment uncertainties. On the one hand (lines 7-10), if two adjacent regions in $LP_i$ are connected by multiple doors whereas some of the doors are closed, we require the agent to use alternative (redundant) doors to accomplish the motion plan and hence the integrated local plan $LP_i$ remains the same, while the door profile $D^m_i$ is formed by discarding all the words in $D^{mo}_i$ that contain the symbols of closed doors. On the other hand (lines 12-17), when two consecutive regions $v$ and $v'$ that are supposed to be visited by agent $G_i$ are not connected by any doors in $\mathcal{E}$, the assumption that $G^\mathcal{E}_i$ is a trim DFA inspires us to replace the motion transition $vv'$ by a sequence of ``intermediate" regions $v_1,v_2,\ldots,v_p$ such that
$$v_{i,0}\ldots vv_1v_2\ldots v_pv' \in Run[L(G^\mathcal{E}_i)];$$
afterwards, we replace all the transition $vv'$ in $LP_i$ by $vv_1v_2\ldots v_pv'$ to construct $LP^\mathcal{E}_i$, and the door profile is updated accordingly.

\section{Case Study: A Robotic Coordination Example}\label{sec:CoSMoP}

We examine the effectiveness of our proposed learning-based synthesis framework with a multi-robot coordination case study, whose scenario is shown in Fig.~\ref{scenario}.

\begin{figure}[b]
\centering
\includegraphics[width=0.35\textwidth]{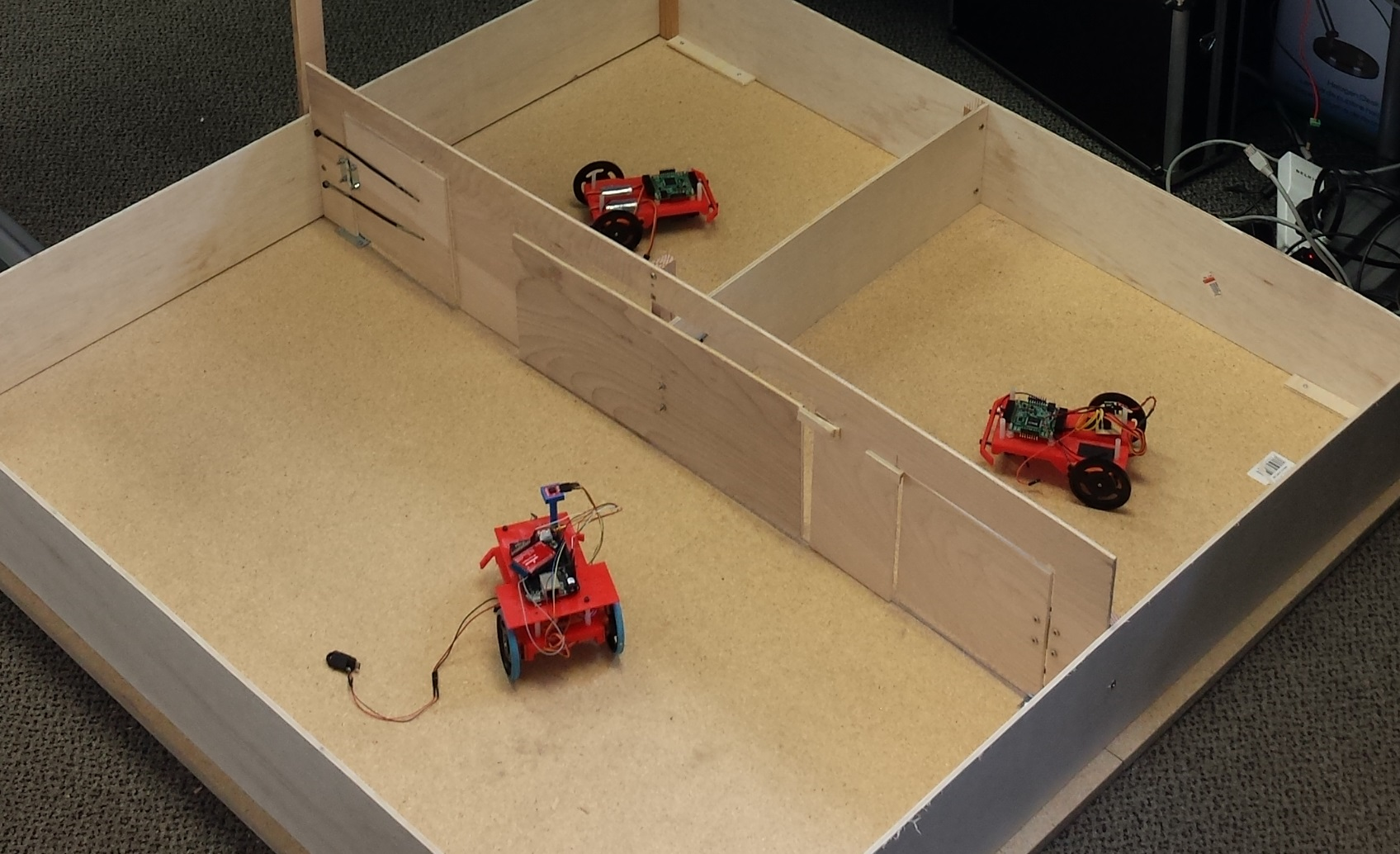}
\caption{The scenario of multi-robot coordination.}
\label{scenario}
\vspace{-5mm}
\end{figure}

\subsection{Description of the Multi-robot Coordination Scenario}
As depicted in Fig.~\ref{scenario}, the cooperative robotic team consists of three robots, namely $G_1$, $G_2$ and $G_3$, all of which have identical communication and self-localization capabilities. In addition, $G_2$ is assumed to be equipped with rescue and fire-fighting capabilities. We assume that all doors are equipped with a spring and are kept closed whenever there is no force to keep them open. Initially, the three robots are positioned in Room 1. Room 2 and Room 3 are accessible using the one-way door $D_2$, or the two-way doors $D_1$ and $D_3$, respectively. The nominal environment of the robots is depicted in Fig.~\ref{scenario1}
\begin{figure}[h]
\centering
\includegraphics[width=0.2\textwidth]{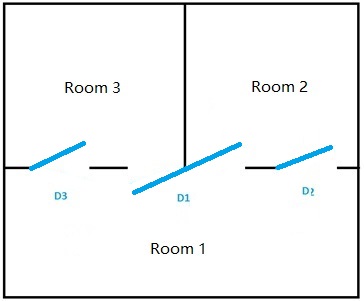}
\caption{The nominal environment $\mathcal{E}_0$.}
\label{scenario1}
\vspace{-7mm}
\end{figure}

From (3), the nominal environment $\mathcal{E}_0$ is characterized by:
\begin{itemize}
\item $V=\{R_1,R_2,R_3\}$, where $R_j$ stands for Room $j (j=1,2,3)$;
\item $\longrightarrow_{\mathcal{E}_0}=\{(R_1,R_2),(R_1,R_3),(R_2,R_1),(R_3,R_1)\}$;
\item $D=\{D_1^l,D_1^r,D_2,D_3\}$, where $D_1^l$ and $D_1^r$ represent the left and right half of $D_1$, respectively;
\item $F_D(R_1,R_2)=\{D_1^r,D_2\}$, $F_D(R_2,R_1)=\{D_1^r\}$, $F_D(R_1,R_3)=\{D_1^l,D_3\}$, and $F_D(R_3,R_1)=\{D_1^l,D_3\}$.
\end{itemize}

For $i\in I=\{1,2,3\}$, the motion DFA $G^m_i=(V,D,v_{i,0},\delta^m_i,V)$ for robot $G_i$ is then given by: $v_{i,0}=R_1$, $\delta^m_i(R_1,D_1^r)=\delta^m_i(R_1,D_2)=R_2$, $\delta^m_i(R_1,D_3)=\delta^m_i(R_1,D_1^l)=R_3$, $\delta^m_i(R_3,D_3)=\delta^m_i(R_3,D_1^l)=R_1$. Thus, the DFA representation of $G^m_i$ is given as follows:

\begin{figure}[h]
	\centering
	\begin{tikzpicture}[shorten >=1pt,node distance=2.1cm,on grid,auto, bend angle=20, thick,scale=0.65, every node/.style={transform shape}]
	\node[state,initial] (s_0)   {$R_1$};
	\node[state] (s_1) [right of = s_0] {$R_2$};
	\node[state] (s_2) [left of = s_0] {$R_3$};
	\path[->]
	(s_0) edge [bend left] node [pos=0.5, sloped, above]{$D_1^r,D_2$} (s_1)
    (s_0) edge [bend left] node [pos=0.5, sloped, below]{$D_1^l,D_3$} (s_2)
	(s_1) edge [bend left] node [pos=0.5, sloped, below]{$D_1^r$} (s_0)
	(s_2) edge [bend left] node [pos=0.5, sloped, above]{$D_1^l,D_3$} (s_0);
	\end{tikzpicture}
	\caption{$G^m_i$ for robot $G_i (i=1,2,3)$.}
	\vspace{-2.5mm}
\end{figure}

The coordination mission assigned to this multi-robot system requires that after a fire-extinguishing alarm from $R_2$ is triggered, $G_2$ needs to go to $R_2$ and come back immediately to $R_1$ through the two-way door $D_1$ that can only be opened with the cooperation of $G_1$ and $G_3$. To open $D_1$ efficiently, either $G_1$ or $G_3$ goes to $R_3$ from the two-way door $D_3$, while the other stays in $R_1$ to synchronously open $D_1$ and waits for $G_2$ to return to $R_1$. Afterwards, both $G_1$ and $G_3$ move backwards to close $D_1$ and all three robots assemble in $R_1$ for next service request.
%Intuitively, the specifications for the multi-robot coordination are twofold.
%\begin{itemize}
%\item {\it Request-response:} Robot $R_2$ should respond to the request by entering Room 2 through $D_2$ and returning to Room 1 through $D_1$ when it accomplishes its task;
%\item {\it Coordination:} $D_1$ should be opened jointly by $R_1$ and $R_3$.
%\end{itemize}

\subsection{Automata-based Characterization of Multi-robot Systems}
The global mission events of the multi-robot system are given by $\Sigma=\bigcup_{i=1}^3\Sigma_i$, and  are listed in Table~1.

\begin{table}[H]
  \begin{center}
    \caption{Mission Events of the Robots.}
    \label{list_of_events}
    \begin{tabular}{ll} \toprule
      \multicolumn{1}{l}{Event} &  {Explanation} \\ \midrule
      $h_i$ & Robot $G_i (i=1,2,3)$ receives the mission request.\\
      $F$ & Robot $G_2$ extinguishes the fire.  \\
      %$G_ionD_1$ & Robot $G_i (i=1,3)$ localizes itself at the door $D_1$,. \\
      %$R_itok$ & Robot $R_i$ heads for Room $k$, $i,k=1,2,3$. \\
      $G_iinR_j$ & Robot $G_i$ stays at Room $R_j (i,j=1,2,3)$. \\
      $Open$ & Open $D_1$.  \\
      $Close$ & Close $D_1$. \\
      $D_1open$ & The status that $D_1$ is opened by robots. \\
      $D_1close$ & The status that $D_1$ is closed by $G_1$ and $G_3$. \\
      $r$ & All the robots assemble in Room 1.\\\bottomrule
    \end{tabular}
  \end{center}
  \vspace{-7mm}
\end{table}

We assume that the prior knowledge of all local mission event sets $\Sigma_i (i=1,2,3)$ is accessible to the supervisor:
\begin{equation}\nonumber
\begin{split}
\Sigma_1=\{& h_1,Open,Close,G_2inR_1,G_1inR_3,D_1close,\\
& D_1open,G_1inR_1,r\},\\
\Sigma_2=\{& h_2,F,D_1open,G_2inR_1,r\},\\
\Sigma_3=\{& h_3,G_3inR_3,Open,Close,D_1open,G_2inR_1,\\
& D_1close,G_3inR_1,r\}.
\end{split}
\end{equation}
Furthermore, for the purpose of supervisory control, we assume that $\Sigma_{1,uc}=\{h_1,G_2inR_1\}$, $\Sigma_{2,uc}=\{h_2,D_1open\}$ and $\Sigma_{3,uc}=\{h_3,G_2inR_1\}$.

%Two regular language specifications, denoted by $K_s, K_d\subseteq \Sigma^*$ (subscripts $s$ and $d$ stand for ``service" and "door", respectively), that correspond to the aforementioned performance requirements, are shown in Fig.~4, represented by DFA, where $h_1||h_3$ stands for the synchronous product of single-event traces $\{h_1\}$ and $\{h_3\}$.
As discussed in the previous subsection, the following two regular languages $L^{spe}_1$ and $L^{spe}_2$ are introduced to capture the two requirements for the multi-robot system, respectively:
\begin{equation*}
\begin{split}
L^{spe}_1=&\overline{(h_2FD_1openG_2inR_1r)^*} \\
L^{spe}_2=&\overline{((h_1h_3+h_3h_1)[(G_1inR_1G_3inR_3+G_3inR_3G_1inR_1)}\\
& \overline{+(G_3inR_1G_1inR_3+G_1inR_3 G_3inR_1)]} \\
& \overline{OpenD_1openG_2inR_1CloseD_1close \ r)^*}
\end{split}
\end{equation*}

The overall global mission specification for the multi-agent system is then $L=L^{spe}_1\vert\vert L^{spe}_2$. Our design objective is to synthesize valid local integrated plans $LP_i=(L^{mi}_i,L^{mo}_i)$ for each robot $G_i (i=1,2,3)$ such that $\vert\vert_{i=1}^3 LP_i \models L$ holds in the real environment $\mathcal{E}$.

\subsection{Synthesis of Local Plans}
%The automatic synthesis framework presented in Section III is detailed in Fig. 5:
%
%\begin{figure}[H]
%\centering
%\includegraphics[width=200pt]{images/caf}
%\caption{Learning-based automatic synthesis framework.}
%\label{caf1}
%\end{figure}

We first follow the framework shown in Fig.~\ref{caf} to synthesize local mission plans $L^{mi}_i$ and local mission supervisors $S_i$ for robot $G_i (i=1,2,3)$. The initial local missions $L_i$ are obtained by $L_i=P_i(L) (i=1,2,3)$, whose DFA representations are shown in Figures \ref{fig:l13} and \ref{fig:l2}, respectively.
\begin{figure}[H]
	\centering
	\subfloat[$L_1$ for $G_1$]{\label{fig:ks}	
		\begin{tikzpicture}[shorten >=1pt,node distance=2.1cm,on grid,auto, bend angle=20, thick,scale=0.45, every node/.style={transform shape}]
		\node[state,initial] (s_0)   {};
		\node[state] (s_1) [right=of s_0] {};
		\node[state] (s_2) [right=of s_1] {};
		\node[state] (s_3) [right=of s_2] {};
		\node[state] (s_4) [below=of s_3] {};
		\node[state] (s_5) [left=of s_4] {};
		\node[state] (s_6) [left=of s_5] {};
		\node[state] (s_7) [left=of s_6] {};
        \node[state] (s_8) [above=of s_1] {};
        \node[state] (s_9) [right=of s_8] {};
        \node[state] (s_10) [right=of s_9] {};
        \node[state] (s_11) [above=of s_10] {};
        \node[state] (s_12) [left=of s_11] {};
        \node[state] (s_13) [left=of s_12] {};
        \node[state] (s_14) [left=of s_13] {};
        %\node[state] (s_15) [below=of s_14] {};
		%\node[state] (s_8) [left=of s_7] {};
	    %\node[state] (s_9) [below=of s_1] {};
%        \node[state] (s_10) [right=of s_9] {};
		\path[->]
		(s_0) edge node [pos=0.5, sloped, above]{$h_1$} (s_1)
		%(s_1) edge node [pos=0.5, sloped, above]{$R_1 to D_1$} (s_2)
        (s_1) edge node [pos=0.5, sloped, above]{$G_1inR_1$} (s_2)
        %(s_1) edge [bend right] node [pos=0.5, sloped, below]{$R_1inG_3$} (s_2)
		%(s_2) edge node [pos=0.5, sloped, above]{$R_1 on D_1$} (s_3)
		(s_2) edge node [pos=0.5, sloped, above]{$Open$} (s_3)	
		(s_3) edge node [pos=0.5, sloped, above]{$D_1open$} (s_4)
		(s_4) edge node [pos=0.5, sloped, above]{$G_2inR_1$} (s_5)
		(s_5) edge node [pos=0.5, sloped, above]{$Close$} (s_6)
		(s_6) edge node [pos=0.5, sloped, above]{$D_1closed$} (s_7)
		(s_7) edge node [pos=0.5, sloped, above]{$r$} (s_0)
        (s_1) edge node [pos=0.5, sloped, above]{$G_1inR_3$} (s_8)
        (s_8) edge node [pos=0.5, sloped, above]{$Open$} (s_9)	
		(s_9) edge node [pos=0.5, sloped, above]{$D_1open$} (s_10)
		(s_10) edge node [pos=0.5, sloped, above]{$G_2inR_1$} (s_11)
		(s_11) edge node [pos=0.5, sloped, above]{$Close$} (s_12)
		(s_12) edge node [pos=0.5, sloped, above]{$D_1close$} (s_13)
        (s_13) edge node [pos=0.5, sloped, above]{$G_1inR_1$} (s_14)
        (s_14) edge node [pos=0.5, sloped, above]{$r$} (s_0);
        %(s_9) edge node [pos=0.5, sloped, above]{$R_1 in 3$} (s_10)
%        (s_10) edge node [pos=0.5, sloped, above]{$R_1 to D_1$} (s_2);   	
		\end{tikzpicture}}
	\subfloat[$L_3$ for $G_3$]{\begin{tikzpicture}[shorten >=1pt,node distance=2.1cm,on grid,auto, bend angle=20, thick,scale=0.45, every node/.style={transform shape}]
		\node[state,initial] (s_0)   {};
		\node[state] (s_1) [right=of s_0] {};
		\node[state] (s_2) [right=of s_1] {};
		\node[state] (s_3) [right=of s_2] {};
		\node[state] (s_4) [below=of s_3] {};
		\node[state] (s_5) [left=of s_4] {};
		\node[state] (s_6) [left=of s_5] {};
		\node[state] (s_7) [left=of s_6] {};
        \node[state] (s_8) [above=of s_1] {};
        \node[state] (s_9) [right=of s_8] {};
        \node[state] (s_10) [right=of s_9] {};
        \node[state] (s_11) [above=of s_10] {};
        \node[state] (s_12) [left=of s_11] {};
        \node[state] (s_13) [left=of s_12] {};
        \node[state] (s_14) [left=of s_13] {};
        %\node[state] (s_15) [below=of s_14] {};
		%\node[state] (s_8) [left=of s_7] {};
	    %\node[state] (s_9) [below=of s_1] {};
%        \node[state] (s_10) [right=of s_9] {};
		\path[->]
		(s_0) edge node [pos=0.5, sloped, above]{$h_1$} (s_1)
		%(s_1) edge node [pos=0.5, sloped, above]{$R_1 to D_1$} (s_2)
        (s_1) edge node [pos=0.5, sloped, above]{$G_3inR_1$} (s_2)
        %(s_1) edge [bend right] node [pos=0.5, sloped, below]{$R_1inG_3$} (s_2)
		%(s_2) edge node [pos=0.5, sloped, above]{$R_1 on D_1$} (s_3)
		(s_2) edge node [pos=0.5, sloped, above]{$Open$} (s_3)	
		(s_3) edge node [pos=0.5, sloped, above]{$D_1open$} (s_4)
		(s_4) edge node [pos=0.5, sloped, above]{$G_2inR_1$} (s_5)
		(s_5) edge node [pos=0.5, sloped, above]{$Close$} (s_6)
		(s_6) edge node [pos=0.5, sloped, above]{$D_1closed$} (s_7)
		(s_7) edge node [pos=0.5, sloped, above]{$r$} (s_0)
        (s_1) edge node [pos=0.5, sloped, above]{$G_3inR_3$} (s_8)
        (s_8) edge node [pos=0.5, sloped, above]{$Open$} (s_9)	
		(s_9) edge node [pos=0.5, sloped, above]{$D_1open$} (s_10)
		(s_10) edge node [pos=0.5, sloped, above]{$G_2inR_1$} (s_11)
		(s_11) edge node [pos=0.5, sloped, above]{$Close$} (s_12)
		(s_12) edge node [pos=0.5, sloped, above]{$D_1close$} (s_13)
        (s_13) edge node [pos=0.5, sloped, above]{$G_3inR_1$} (s_14)
        (s_14) edge node [pos=0.5, sloped, above]{$r$} (s_0);  	
		\end{tikzpicture}}
	\caption{Local mission specifications $L_1$ and $L_3$}
	\label{fig:l13}
\vspace{-2.5mm}
\end{figure}

\begin{figure}[H]
	\centering
	\begin{tikzpicture}[shorten >=1pt,node distance=2.1cm,on grid,auto, bend angle=20, thick,scale=0.45, every node/.style={transform shape}]
	\node[state,initial] (s_0)   {};
	\node[state] (s_1) [right=of s_0] {};
	\node[state] (s_2) [right=of s_1] {};
	\node[state] (s_3) [below=of s_2] {};
	\node[state] (s_4) [left=of s_3] {};
	%\node[state] (s_5) [left=of s_4] {};
%	\node[state] (s_6) [left=of s_5] {};
	\path[->]
	(s_0) edge node [pos=0.5, sloped, above]{$h_2$} (s_1)
	(s_1) edge node [pos=0.5, sloped, above]{$F$} (s_2)
	(s_2) edge node [pos=0.5, sloped, above]{$D_1open$} (s_3)
	(s_3) edge node [pos=0.5, sloped, above]{$G_1inR_1$} (s_4)	
	%(s_4) edge node [pos=0.5, sloped, above]{$R_2 to 1$} (s_5)
%	(s_5) edge node [pos=0.5, sloped, above]{$R_2 in 1$} (s_6)
	(s_4) edge node [pos=0.5, sloped, below]{$r$} (s_0);	
	\end{tikzpicture}
	\caption{Local mission specification $L_2$.}
	\label{fig:l2}
\vspace{-2.5mm}
\end{figure}

Next, we aim to apply the $L_{LS}^*$ algorithm for the synthesis of the local mission supervisor for robot $G_i$ to fulfill $L_i (i=1,2,3)$. For instance, a candidate supervisor $S_2$ demonstrated in Fig.~\ref{fig:CS2} is synthesized to achieve $\sup C_2(L_2)$ for robot $G_2$, which is identical to $L_2$. Local mission supervisors $S_1$ and $S_3$ can be developed in a similar manner, and it turns out that $\sup C_1(L_1)=L_1$ and $\sup C_3(L_3)=L_3$.
\begin{figure}[H]
	\centering
	%\begin{tikzpicture}[shorten >=1pt,node distance=2.1cm,on grid,auto, bend angle=20, thick,scale=0.55, every node/.style={transform shape}]
	\begin{tikzpicture}[shorten >=1pt,node distance=2.1cm,on grid,auto, bend angle=20, thick,scale=0.45, every node/.style={transform shape}]
	\node[state,initial] (s_0)   {};
	\node[state] (s_1) [right=of s_0] {};
	\node[state] (s_2) [right=of s_1] {};
	\node[state] (s_3) [below=of s_2] {};
	\node[state] (s_4) [left=of s_3] {};
	%\node[state] (s_5) [left=of s_4] {};
%	\node[state] (s_6) [left=of s_5] {};
	\path[->]
	(s_0) edge node [pos=0.5, sloped, above]{$h_2$} (s_1)
	(s_1) edge node [pos=0.5, sloped, above]{$F$} (s_2)
	(s_2) edge node [pos=0.5, sloped, above]{$D_1open$} (s_3)
	(s_3) edge node [pos=0.5, sloped, above]{$G_1inR_1$} (s_4)	
	%(s_4) edge node [pos=0.5, sloped, above]{$R_2 to 1$} (s_5)
%	(s_5) edge node [pos=0.5, sloped, above]{$R_2 in 1$} (s_6)
	(s_4) edge node [pos=0.5, sloped, below]{$r$} (s_0);	
	\end{tikzpicture}
	\caption{A candidate local mission supervisor $S_2$ for robot $G_2$.}
	\label{fig:CS2}
\vspace{-2.5mm}
\end{figure}

Afterwards, we present $M_i=S_i\vert\vert G_i$ to the $L^*_{CV}$ algorithm, and a counterexample that indicates a violation of $L$ emerges. We examine this counterexample and it turns out that when one of the robots $G_1$ or $G_3$ stays in $R_1$, the other robot must stay in $R_3$. By adding this counterexample back to the $L^*_{CV}$ algorithm, we update the local specifications according to Section~\ref{sec:synthesis}.E, the $L^*_{LS}$ algorithm derives the new local mission supervisors as shown in Figures \ref{fig:S13} and \ref{fig:S2}, respectively.
%\begin{figure}[H]
%	\centering
%		\begin{tikzpicture}[shorten >=1pt,node distance=2.1cm,on grid,auto, bend angle=20, thick,scale=0.55, every node/.style={transform shape}]
%		\node[state,initial] (s_0)   {};
%		\node[state] (s_1) [right=of s_0] {};
%		%\node[state] (s_2) [right=of s_1] {};
%		\node[state] (s_3) [right=of s_1] {};
%		\node[state] (s_4) [right=of s_3] {};
%		\node[state] (s_5) [below=of s_4] {};
%		\node[state] (s_6) [left=of s_5] {};
%		\node[state] (s_7) [left=of s_6] {};
%		\node[state] (s_8) [left=of s_7] {};	
%		\path[->]
%		(s_0) edge node [pos=0.5, sloped, above]{$h_1$} (s_1)
%		(s_1) edge node [pos=0.5, sloped, above]{$G_1inR_1$} (s_3)
%		%(s_2) edge node [pos=0.5, sloped, above]{$R_1 on D_1$} (s_3)
%		(s_3) edge node [pos=0.5, sloped, above]{$Open$} (s_4)	
%		(s_4) edge node [pos=0.5, sloped, above]{$D_1open$} (s_5)
%		(s_5) edge node [pos=0.5, sloped, above]{$G_2inR_1$} (s_6)
%		(s_6) edge node [pos=0.5, sloped, above]{$Close$} (s_7)
%		(s_7) edge node [pos=0.5, sloped, above]{$D_1closed$} (s_8)
%		(s_8) edge node [pos=0.5, sloped, above]{$r$} (s_0);	
%		\end{tikzpicture}
%
%		\caption{The local mission supervisor $S_1$.}
%		\label{fig:S1}
%\end{figure}

\begin{figure}[H]
\centering
\subfloat[The local mission supervisor $S_1$]{\label{fig:ks}	
		\begin{tikzpicture}[shorten >=1pt,node distance=2.1cm,on grid,auto, bend angle=20, thick,scale=0.42, every node/.style={transform shape}]
		\node[state,initial] (s_0)   {};
		\node[state] (s_1) [right=of s_0] {};
		%\node[state] (s_2) [right=of s_1] {};
		\node[state] (s_3) [right=of s_1] {};
		\node[state] (s_4) [right=of s_3] {};
		\node[state] (s_5) [below=of s_4] {};
		\node[state] (s_6) [left=of s_5] {};
		\node[state] (s_7) [left=of s_6] {};
		\node[state] (s_8) [left=of s_7] {};	
		\path[->]
		(s_0) edge node [pos=0.5, sloped, above]{$h_1$} (s_1)
		(s_1) edge node [pos=0.5, sloped, above]{$G_1inR_1$} (s_3)
		%(s_2) edge node [pos=0.5, sloped, above]{$R_1 on D_1$} (s_3)
		(s_3) edge node [pos=0.5, sloped, above]{$Open$} (s_4)	
		(s_4) edge node [pos=0.5, sloped, above]{$D_1open$} (s_5)
		(s_5) edge node [pos=0.5, sloped, above]{$G_2inR_1$} (s_6)
		(s_6) edge node [pos=0.5, sloped, above]{$Close$} (s_7)
		(s_7) edge node [pos=0.5, sloped, above]{$D_1close$} (s_8)
		(s_8) edge node [pos=0.5, sloped, above]{$r$} (s_0);	
		\end{tikzpicture}}
	\subfloat[The local mission supervisor $S_3$]{\begin{tikzpicture}[shorten >=1pt,node distance=2.1cm,on grid,auto, bend angle=20, thick,scale=0.42, every node/.style={transform shape}]
\node[state,initial] (s_0)   {};
\node[state] (s_1) [right=of s_0] {};
\node[state] (s_2) [right=of s_1] {};
%\node[state] (s_3) [right=of s_2] {};
%\node[state] (s_4) [right=of s_3] {};
\node[state] (s_5) [right=of s_1] {};
\node[state] (s_6) [right=of s_5] {};
\node[state] (s_7) [right=of s_6] {};
\node[state] (s_8) [below=of s_7] {};
\node[state] (s_9) [left=of s_8] {};
\node[state] (s_10)[left=of s_9] {};
\node[state] (s_11)[left=of s_10] {};
\path[->]
(s_0) edge node [pos=0.5, sloped, above]{$h_3$} (s_1)
(s_1) edge node [pos=0.5, sloped, above]{$G_3inR_3$} (s_5)
%(s_2) edge node [pos=0.5, sloped, above]{$R_3 in 3$} (s_3)
%(s_3) edge node [pos=0.5, sloped, above]{$R_3 to D_1$} (s_4)	
%(s_4) edge node [pos=0.5, sloped, above]{$R_3 on D_1$} (s_5)
(s_5) edge node [pos=0.5, sloped, above]{$Open$} (s_6)
(s_6) edge node [pos=0.5, sloped, above]{$D_1open$} (s_7)
(s_7) edge node [pos=0.5, sloped, above]{$G_2inR_1$} (s_8)
(s_8) edge node [pos=0.5, sloped, above]{$Close$} (s_9)
(s_9) edge node [pos=0.5, sloped, above]{$D_1close$} (s_10)
(s_10) edge node [pos=0.5, sloped, above]{$G_3inR_1$} (s_11)
(s_11) edge node [pos=0.5, sloped, above]{$r$} (s_0);		
\end{tikzpicture}}
\caption{Local mission supervisors for $G_1$ and $G_3$.}
\label{fig:S13}
\vspace{-2.5mm}
\end{figure}
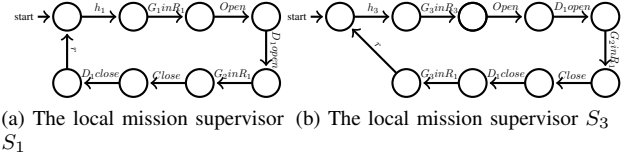

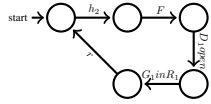
\begin{figure}[H]
	\centering
	\begin{tikzpicture}[shorten >=1pt,node distance=2.1cm,on grid,auto, bend angle=20, thick,scale=0.42, every node/.style={transform shape}]
	\node[state,initial] (s_0)   {};
	\node[state] (s_1) [right=of s_0] {};
	\node[state] (s_2) [right=of s_1] {};
	\node[state] (s_3) [below=of s_2] {};
	\node[state] (s_4) [left=of s_3] {};
	%\node[state] (s_5) [left=of s_4] {};
%	\node[state] (s_6) [left=of s_5] {};
	\path[->]
	(s_0) edge node [pos=0.5, sloped, above]{$h_2$} (s_1)
	(s_1) edge node [pos=0.5, sloped, above]{$F$} (s_2)
	(s_2) edge node [pos=0.5, sloped, above]{$D_1open$} (s_3)
	(s_3) edge node [pos=0.5, sloped, above]{$G_1inR_1$} (s_4)	
	%(s_4) edge node [pos=0.5, sloped, above]{$R_2 to 1$} (s_5)
%	(s_5) edge node [pos=0.5, sloped, above]{$R_2 in 1$} (s_6)
	(s_4) edge node [pos=0.5, sloped, below]{$r$} (s_0);	
	\end{tikzpicture}
	\caption{The local mission supervisor $S_2$.}
	\label{fig:S2}
\vspace{-2.5mm}
\end{figure}

%\begin{figure}[H]
%	\centering
%\begin{tikzpicture}[shorten >=1pt,node distance=2.1cm,on grid,auto, bend angle=20, thick,scale=0.55, every node/.style={transform shape}]
%\node[state,initial] (s_0)   {};
%\node[state] (s_1) [right=of s_0] {};
%\node[state] (s_2) [right=of s_1] {};
%%\node[state] (s_3) [right=of s_2] {};
%%\node[state] (s_4) [right=of s_3] {};
%\node[state] (s_5) [right=of s_1] {};
%\node[state] (s_6) [right=of s_5] {};
%\node[state] (s_7) [below=of s_6] {};
%\node[state] (s_8) [left=of s_7] {};
%\node[state] (s_9) [left=of s_8] {};
%\node[state] (s_10)[left=of s_9] {};	
%\path[->]
%(s_0) edge node [pos=0.5, sloped, above]{$h_3$} (s_1)
%(s_1) edge node [pos=0.5, sloped, above]{$G_3inR_3$} (s_5)
%%(s_2) edge node [pos=0.5, sloped, above]{$R_3 in 3$} (s_3)
%%(s_3) edge node [pos=0.5, sloped, above]{$R_3 to D_1$} (s_4)	
%%(s_4) edge node [pos=0.5, sloped, above]{$R_3 on D_1$} (s_5)
%(s_5) edge node [pos=0.5, sloped, above]{$Open$} (s_6)
%(s_6) edge node [pos=0.5, sloped, above]{$D_1open$} (s_7)
%(s_7) edge node [pos=0.5, sloped, above]{$G_2inR_1$} (s_8)
%(s_8) edge node [pos=0.5, sloped, above]{$Close$} (s_9)
%(s_9) edge node [pos=0.5, sloped, above]{$D_1close$} (s_10)
%(s_10) edge node [pos=0.5, sloped, above]{$r$} (s_0);		
%\end{tikzpicture}
%	\caption{Local supervisor $S_3$.}
%	\label{fig:S3}
%\end{figure}

Intuitively, the updated local mission plans $S_1$ and $S_3$ explicitly specify that robot $G_1$ stays in $R_1$, while robot $G_3$ enters $R_3$ such that $D_1$ can be opened jointly. We insert the new $M_i=S_i \vert\vert G_i$ to the $L^*_{CV}$ algorithm to justify whether the collective behaviors of the local supervisors cooperatively satisfy $L$; and in this time the $L^*_{CV}$ algorithm reports no more counterexample, which implies that $S_i (i=1,2,3)$ drives $G_i$ to accomplish an appropriate local mission. Therefore, Fig.~\ref{fig:S13} and Fig.~\ref{fig:S2} are also DFA representations of local mission plans $L^{mi}_i (i=1,2,3)$.

Finally, we apply the $L^*_{MP}$ algorithm stated in Section~\ref{sec:verification} to construct local motion plan $L^{mo}_i$ that is associated with $L^{mi}_i (i=1,2,3)$, respectively. Towards this end, the local motion plans are established in terms of DFAs:

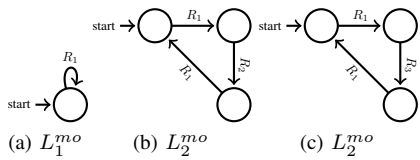
\begin{figure}[H]
	\centering
	\subfloat[$L^{mo}_1$]{\begin{tikzpicture}[shorten >=1pt,node distance=2.1cm,on grid,auto, bend angle=20, thick,scale=0.5, every node/.style={transform shape}]
	\node[state,initial] (s_0)   {};
	\path[->]
	(s_0) edge [loop above] node [pos=0.5, sloped, above]{$R_1$} (s_0);
	\end{tikzpicture}}
    \subfloat[$L^{mo}_2$]{\begin{tikzpicture}[shorten >=1pt,node distance=2.1cm,on grid,auto, bend angle=20, thick,scale=0.5, every node/.style={transform shape}]
	\node[state,initial] (s_0)   {};
	\node[state] (s_1) [right=of s_0] {};
    \node[state] (s_2) [below=of s_1] {};
	\path[->]
	(s_0) edge node [pos=0.5, sloped, above]{$R_1$} (s_1)
	(s_1) edge node [bend left] [pos=0.5, sloped, above]{$R_2$} (s_2)
	(s_2) edge node [bend left] [pos=0.5, sloped, below]{$R_1$} (s_0);
	\end{tikzpicture}}
\subfloat[$L^{mo}_2$]{\begin{tikzpicture}[shorten >=1pt,node distance=2.1cm,on grid,auto, bend angle=20, thick,scale=0.5, every node/.style={transform shape}]
	\node[state,initial] (s_0)   {};
	\node[state] (s_1) [right=of s_0] {};
\node[state] (s_2) [below=of s_1] {};
	\path[->]
	(s_0) edge node [pos=0.5, sloped, above]{$R_1$} (s_1)
	(s_1) edge node [pos=0.5, sloped, above]{$R_3$} (s_2)
	(s_2) edge node [pos=0.5, sloped, below]{$R_1$} (s_0);
	\end{tikzpicture}}
	\caption{Local motion plans $L^{mo}_i$ for robot $G_i (i=1,2,3)$.}
\label{fig:mp}
\vspace{-2.5mm}
\end{figure}

Based on the local motion plans obtained as shown in Fig.~\ref{fig:mp}, we can construct the integrated local plan $LP_i$ for each robot $G_i$. Such construction relies on the local mission-motion integration mapping $\pi_i (i=1,2,3)$, which is defined as follows:
\begin{enumerate}
\item Robot $G_1$:
\begin{equation*}
\begin{split}
&\pi_1(h_1)=\pi_1(Open)=\pi_1(Close)=\pi_1(G_2inR_1)\\
&=\pi_1(D_1open)=\pi_1(D_1close)=\pi_1(r)=R_1, \\
&\pi_1(G_1inR_3)=R_3.
\end{split}
\end{equation*}
\item Robot $G_2$:
\begin{equation*}
\begin{split}
&\pi_2(h_2)=\pi_2(G_2inR_1)=\pi_2(r)=R_1,\\
&\pi_2(F)=\pi_2(D_1open)=R_2.
\end{split}
\end{equation*}
\item Robot $G_3$:
\begin{equation*}
\begin{split}
&\pi_3(Open)=\pi_3(Close)=\pi_3(G_2inR_1)\\
&=\pi_3(D_1open)=\pi_3(D_1close)=\pi_1(r)=R_3, \\
&\pi_3(G_1inR_3)=\pi_3(h_3)=\pi_3(r)=R_1.
\end{split}
\end{equation*}
\end{enumerate}

Fig.~\ref{fig:LP1}, Fig.~\ref{fig:LP2} and Fig.~\ref{fig:LP3} depict the integrated local plans $LP_1$, $LP_2$ and $LP_3$, respectively.

\begin{figure}[H]
	\centering
	\begin{tikzpicture}[shorten >=1pt,node distance=2.1cm,on grid,auto, bend angle=20, thick,scale=0.42, every node/.style={transform shape}]
	\node[state,initial] (s_0)   {};
		\node[state] (s_1) [right=of s_0] {};
		\node[state] (s_2) [right=of s_1] {};
		\node[state] (s_3) [right=of s_2] {};
		\node[state] (s_4) [right=of s_3] {};
		\node[state] (s_5) [below=of s_4] {};
		\node[state] (s_6) [left=of s_5] {};
		\node[state] (s_7) [left=of s_6] {};
		\node[state] (s_8) [left=of s_7] {};	
		\path[->]
		(s_0) edge node [pos=0.5, sloped, above]{$R_1$} (s_1)
        (s_1) edge node [pos=0.5, sloped, above]{$h_1$} (s_2)
		(s_2) edge node [pos=0.5, sloped, above]{$G_1inR_1$} (s_3)
		%(s_2) edge node [pos=0.5, sloped, above]{$R_1 on D_1$} (s_3)
		(s_3) edge node [pos=0.5, sloped, above]{$Open$} (s_4)	
		(s_4) edge node [pos=0.5, sloped, above]{$D_1open$} (s_5)
		(s_5) edge node [pos=0.5, sloped, above]{$G_2inR_1$} (s_6)
		(s_6) edge node [pos=0.5, sloped, above]{$Close$} (s_7)
		(s_7) edge node [pos=0.5, sloped, above]{$D_1close$} (s_8)
		(s_8) edge node [pos=0.5, sloped, above]{$r$} (s_0);	
	\end{tikzpicture}
	\caption{The integrated local plan $LP_1$.}
	\label{fig:LP1}
\vspace{-2.5mm}
\end{figure}

 \begin{figure}[H]
	\centering
	\begin{tikzpicture}[shorten >=1pt,node distance=2.1cm,on grid,auto, bend angle=20, thick,scale=0.42, every node/.style={transform shape}]
	\node[state,initial] (s_0)   {};
	\node[state] (s_1) [right=of s_0] {};
	\node[state] (s_2) [right=of s_1] {};
	\node[state] (s_3) [right=of s_2] {};
	\node[state] (s_4) [below=of s_3] {};
	\node[state] (s_5) [left=of s_4] {};
	\node[state] (s_6) [left=of s_5] {};
    \node[state] (s_7) [left=of s_6] {};
	\path[->]
	(s_0) edge node [pos=0.5, sloped, above]{$R_1$} (s_1)
    (s_1) edge node [pos=0.5, sloped, above]{$h_2$} (s_2)
	(s_2) edge node [pos=0.5, sloped, above]{$R_2$} (s_3)
	(s_3) edge node [pos=0.5, sloped, above]{$F$} (s_4)
	(s_4) edge node [pos=0.5, sloped, above]{$D_1open$} (s_5)	
	(s_5) edge node [pos=0.5, sloped, above]{$R_1$} (s_6)
	(s_6) edge node [pos=0.5, sloped, above]{$G_2inR_1$} (s_7)
	(s_7) edge node [pos=0.5, sloped, below]{$r$} (s_0);	
	\end{tikzpicture}
	\caption{The integrated local plan $LP_2$.}
	\label{fig:LP2}
\vspace{-2.5mm}
\end{figure}

 \begin{figure}[H]
	\centering
	\begin{tikzpicture}[shorten >=1pt,node distance=2.1cm,on grid,auto, bend angle=20, thick,scale=0.42, every node/.style={transform shape}]
	\node[state,initial] (s_0)   {};
	\node[state] (s_1) [right=of s_0] {};
	\node[state] (s_2) [right=of s_1] {};
	\node[state] (s_3) [right=of s_2] {};
	\node[state] (s_4) [right=of s_3] {};
	\node[state] (s_5) [right=of s_4] {};
	\node[state] (s_6) [below=of s_5] {};
    \node[state] (s_7) [left=of s_6] {};
    \node[state] (s_8) [left=of s_7] {};
    \node[state] (s_9) [left=of s_8] {};
    \node[state] (s_10) [left=of s_9] {};
    \node[state] (s_11) [left=of s_10] {};
    %\node[state] (s_6) [left=of s_5] {};
	\path[->]
	(s_0) edge node [pos=0.5, sloped, above]{$R_1$} (s_1)
	(s_1) edge node [pos=0.5, sloped, above]{$h_3$} (s_2)
	(s_2) edge node [pos=0.5, sloped, above]{$R_3$} (s_3)
    (s_3) edge node [pos=0.5, sloped, above]{$G_3inR_3$} (s_4)
    (s_4) edge node [pos=0.5, sloped, above]{$Open$} (s_5)
	(s_5) edge node [pos=0.5, sloped, above]{$D_1open$} (s_6)	
	(s_6) edge node [pos=0.5, sloped, above]{$G_2inR_1$} (s_7)
	(s_7) edge node [pos=0.5, sloped, above]{$Close$} (s_8)
    (s_8) edge node [pos=0.5, sloped, above]{$D_1close$} (s_9)
    (s_9) edge node [pos=0.5, sloped, above]{$R_1$} (s_10)
    (s_10) edge node [pos=0.5, sloped, above]{$G_3inR_1$} (s_11)
	(s_11) edge node [pos=0.5, sloped, below]{$r$} (s_0);	
	\end{tikzpicture}
	\caption{The integrated local plan $LP_3$.}
	\label{fig:LP3}
\vspace{-2.5mm}
\end{figure}
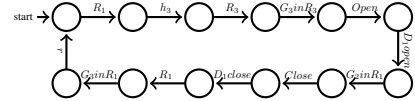

It is worth pointing out that, all the $LP_i$'s are designed with respect to $\mathcal{E}_0$, and when implementing $LP_i$'s in the practical environment $\mathcal{E}$, the robots may encounter with uncertainties. In this case, the robots shall automatically react to the uncertain environment by online replanning their motion plans. For example, in case the door $D_3$ is broken and cannot be opened when $G_3$ needs to return to $R_1$, $G_3$ shall then return to $R_1$ through $D_1$.
%=====================================================================================
\section{Experimental Results}\label{sec:experimental_setup}
\subsection{Experiment Setup}

We now proceed to the experimental validation of the theoretical results. The proposed supervisors, implemented as MATLAB StateFlow machines, are tested and evaluated on a team composed by three robots.

The experimental setup for testing the derived supervisors is composed by three main elements, described respectively as follows.

\subsubsection{Localization System}

The localization system is constructed on the basis of the Aruco library \cite{aruco} and provides the status of each robot (position and orientation) and each door (open or close) with a frame-rate of $\sim20$Hz. Each robot is equipped with a unique marker. The localization software detects the position and the orientation of each marker, providing the position data through a UDP socket. The information concerning the position of the robots and the status of each door is encoded using the JSON format.
Furthermore, each robot is able to request its position or another robot's position.

The localization software is based on a multi-threading architecture in order to minimize the delay on the transmission of the position information:
\begin{itemize}
\item The main thread implements a UDP server waiting for information requests from the robots. The response contains the robot id, its position and orientation and the status of all the doors.

For example, the position of the robot $G_1$ along with the status of the three doors is represented as follows:
 $\{G1:[+0460.86,+0113.43,+0079.15,0,0,1]\}$.
\item The second thread computes the position of each marker and updates the list of detected markers along with their position and orientation. The camera used on the localization system has a resolution of 640x480 and a frame-rate of $\sim30$ frames per second.
\end{itemize}

Figure~\ref{localization_system} shows the output of the localization system.
\begin{figure}[h]
\centering
\includegraphics[width=200pt]{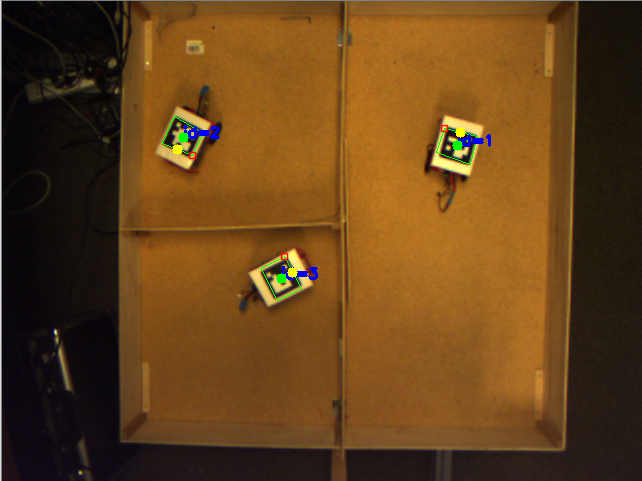}
\caption{View of the system from the camera used for the localization system with markers for position and orientation estimation. The localization system provides visual feedback with augmented-reality capabilities: each marker is overlapped with basic information (ID of the marker, orientation). }
\label{localization_system}
\vspace{-2.5mm}
\end{figure}

\subsubsection{Local supervisors}
Each of the three supervisors is implemented as a StateFlow machine and manages a pre-defined robot of the team.

The logic implemented follows the theoretical approach described in Section \ref{sec:CoSMoP}. The transition between the  possible states is based on the occurrence of one of the pre-defined events, e.g., the opening of a door. The list of possible events is represented in Table \ref{list_of_events}.

The communication between the robot and the corresponding StateFlow machine is realized using a wireless module. The TCP protocol is chosen to provide a reliable exchange of messages.

As an illustrative example, Figures~\ref{state_flow_high_level} and~\ref{supervisor_robot2} show the information transition between the robot and the local plan, as well as the StateFlow implementation of the local plan $LP_2$ for robot $G_2$, respectively.
\begin{figure}[h]
\centering
\includegraphics[width=0.4\textwidth]{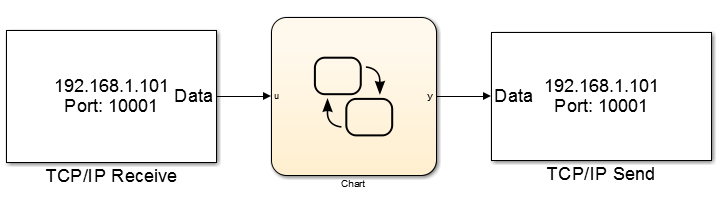}
\caption{High-level structure of the local plan $LP_2$ implemented as Stateflow machine with TCP/IP Receive/Send communication blocks. The TCP/IP Receive module is based on a blocking mechanism: The simulink model makes an iteration each time a new packet is received. Once the results is sent back to the robot, the state flow returns to its blocking state listening for a new packet.}
\label{state_flow_high_level}
\vspace{-2.5mm}
\end{figure}
\begin{figure}[h]
\centering
\includegraphics[width=0.4\textwidth]{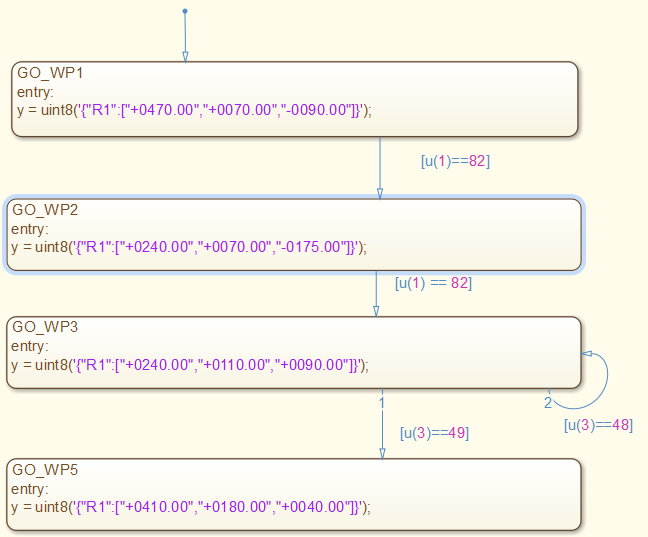}
\caption{StateFlow Representation of local supervisor $LP_2$ using ASCII encoding of the events. Since StateFlow deals only with numbers, the information is codified using the ASCII representation. The data sent back to the robot are then converted to characters.}
\label{supervisor_robot2}
\vspace{-2.5mm}
\end{figure}

\subsubsection{Ground Robots}
The hardware of each robot is designed as a multi-layered architecture.
\begin{itemize}
\item The low-level system (LL board), based on xMOS processors \cite{goncalo1}, is responsible to interact with the sensors and the actuators (Fig.~\ref{low_level_label}).
\begin{figure}[h]
\centering
\includegraphics[width=0.3\textwidth]{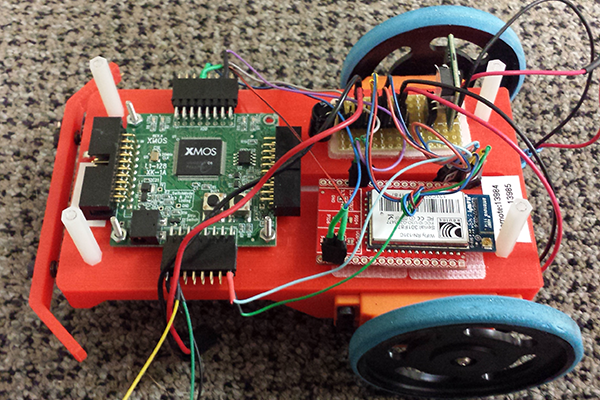}
\caption{Low-level system. The xMOS device is responsible to interact with sensors. The UART over IP module shown in the figure is used for requesting the robot's position from the localization server. The power supply is realized using a switching regulator.}
\label{low_level_label}
\vspace{-2.5mm}
\end{figure}
Thanks to the hard real-time performance of the xMOS devices, the low-level system provides deterministic real-time access to all the sensors and actuators connected to the processor.
\item The high-level system (HL board) is constructed on the basis of the BeagleBone board and is responsible to interact with the corresponding local plan (Fig.~\ref{high_level}). The main purpose of the high-level board is to provide a tool for developing high-level control systems, delegating the low-level real-time board for the interaction with the sensors and the actuators.
\begin{figure}[h]
\centering
\includegraphics[width=0.3\textwidth]{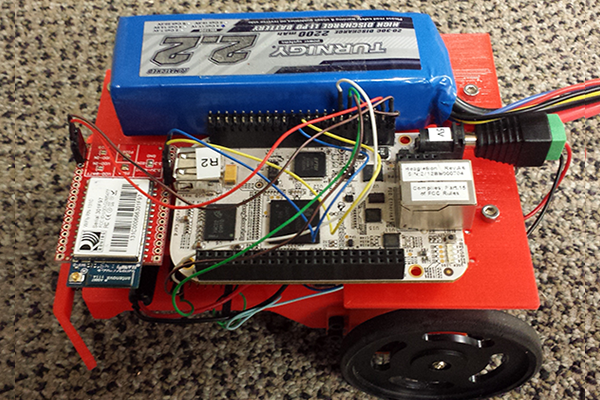}
\caption{High-level system. The Beaglebone board device is responsible to interact with State Flow. The UART over IP module shown in the figure is used for requesting the command to be executed from the localization supervisor. The power supply is based on a 3-cells LiPo battery.}
\label{high_level}
\vspace{-2.5mm}
\end{figure}
\end{itemize}

Taking advantage of the localization system (Fig.~\ref{localization_system}), each robot is equipped with a unique marker (Fig.~\ref{marker}) that is convenient to be recognized by the localization system. The coordinates of the four corners of the marker are used to estimate the position (in terms of pixel coordinates) of the center of the marker, along with its orientation.

\begin{figure}[h]
\centering
\includegraphics[width=0.3\textwidth]{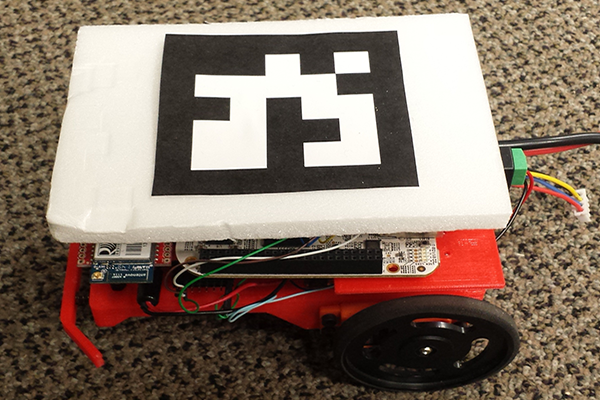}
\caption{Marker mounted on top of the robot. The localization system is able to detect the position of the center of the marker and its orientation with a frame-rate of $\sim20Hz$. The Aruco library, used in the localization system, provides the possibility to detect up to 1024 different markers.}
\label{marker}
\vspace{-2.5mm}
\end{figure}

\subsection{Demonstration Results}
The flow chart in Fig.~\ref{flow_chart} describes the interaction between the
local plan (represented in DFAs) and the corresponding ground robot.
\label{experimental_results}
\begin{figure}[h]
\centering
\includegraphics[width=0.35\textwidth]{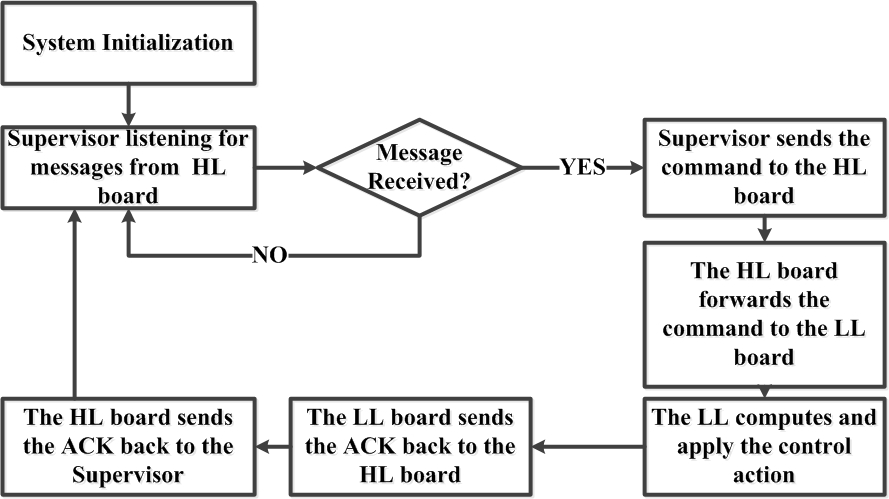}
\caption{High-level flow chart of the system. The local plan (DFA) is able to communicate with the High Level board of the robot. The HL board will then translate the received command in control action for the motors. The use of acknowledge messages ensures the correctness of the messages transmitted.}
\label{flow_chart}
\vspace{-2.5mm}
\end{figure}

After the reception of the START signal, the High-level board of the agent initializes the communication system and sends the first request to its local plan DFA. The
local plan DFA responds by sending the command back to the HL board. The HL board then reads data from the sensors and computes control actions interacting with the LL board. After reaching a desired position, the HL board sends an ACK (acknowledge) message back to the supervisor. Once the local plan supervisor receives the ACK message, it replies by sending the next action to the HL board of the agent. For example, we implement the local plan $LP_2$ (Fig.~\ref{fig:LP2}) for robot $G_2$ in the real experiment environment. The path followed by robot $G_2$, as defined by the local motion plan $L^{mo}_2$, is shown in Fig.~\ref{path_robot2}.

\begin{figure}[h]
\centering
\includegraphics[width=0.27\textwidth]{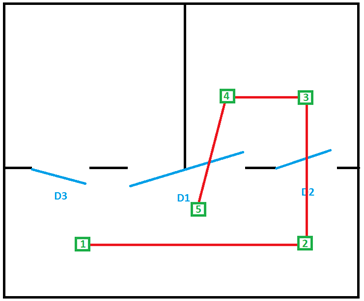}
\caption{Path planned for robot $G_2$. The path is described as a series of waypoints to be reached by the robot. For some waypoints, additional conditions must be satisfied. For example, robot 2 cannot reach waypoint 5 until $D_1$ is closed.}
\label{path_robot2}
\vspace{-2.5mm}
\end{figure}

In Figure~\ref{actual_path_robot_2}, we record the waypoints visited by robot $G_2$ and plot the actual path. The cloud of points corresponds to the instants in which the robot
$G_2$ reached a pre-defined waypoint and waited for the next command sent from
the local supervisor. The position of the robot is expressed in pixel
coordinates. The differences between the planned path and the real path are
due to hardware limitations, in particular to the difference between the
localization rate and the update rate of the servo motors. This problem can be
overcome by integrating additional sensors such as Inertial Measuring Unit
(IMU) or encoders in order to provide a higher data rate for position
estimation, as well as an estimation of the robot's position even when the
marker is not detected. A sensor fusion algorithm is currently under development to overcome both limitations.

\begin{figure}[h]
\centering
\includegraphics[width=0.38\textwidth]{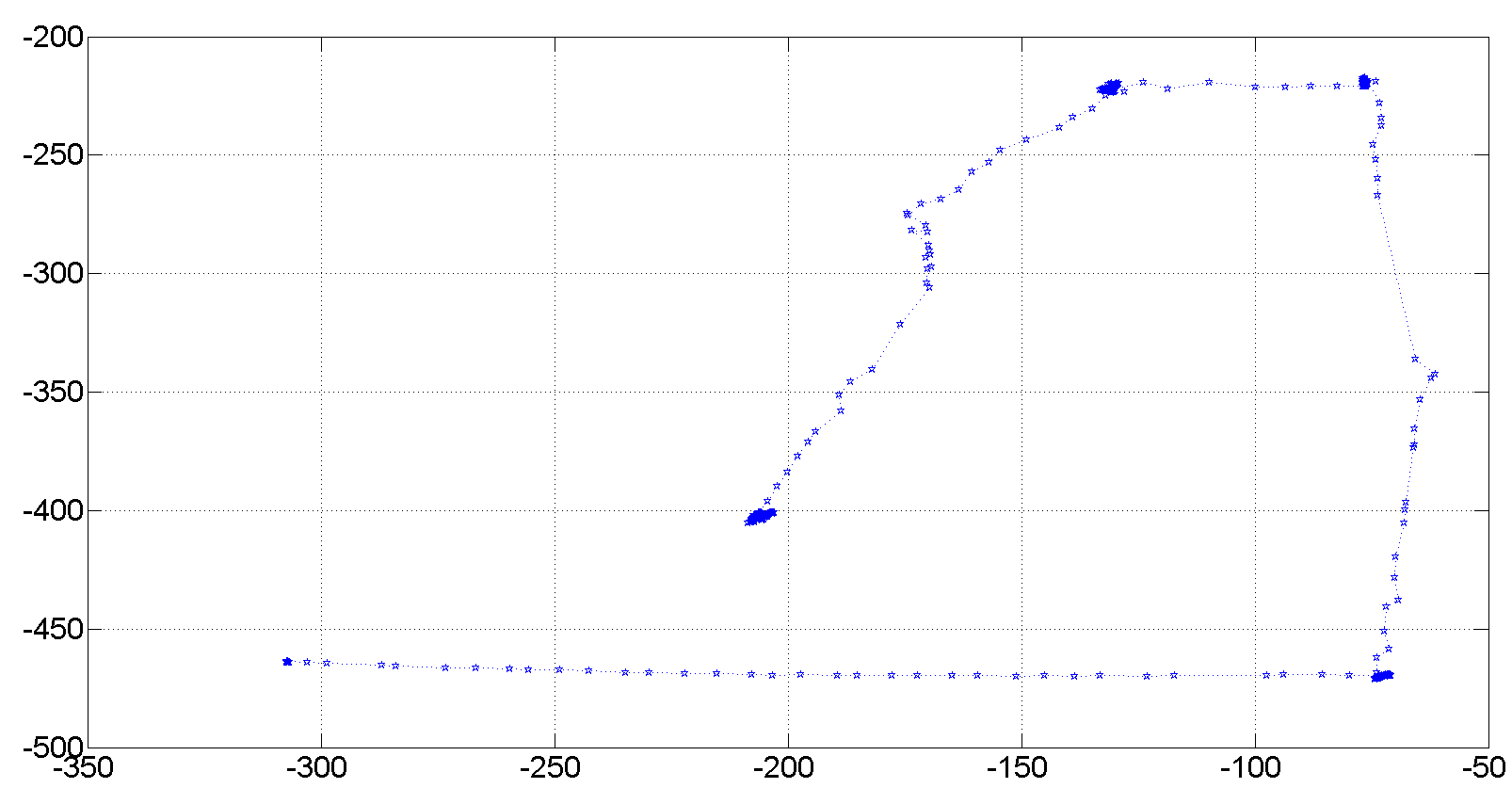}
\caption{Path followed by robot $G_2$. The position is updated with a frame-rate of 20 Hz. The clouds of points correspond to the instants in which the robot reached a waypoint and stopped waiting for the next command.}
\label{actual_path_robot_2}
\vspace{-5mm}
\end{figure}

\section{Conclusion and Future Work}
\label{sec:conclusion}
In this paper, based on a formal design approach, we present a learning-based automatic synthesis framework to solve the coordination and control co-design problem of cooperative multi-agent systems. Starting from an automata-based characterization of each agent's dynamics as well as the shared environment in which all the agents perform their mission and motion behaviors, our proposed framework solves the mission and motion planning sub-problems in a hierarchical architecture. We develop three modified $L^*$ algorithms to synthesize the local mission supervisors, to check the joint efforts of the mission executions via compositional verification techniques, and to synthesize motion plans corresponding to the mission plans. In addition, we also present another algorithm that adaptively replans the mission and motions so that environment uncertainties can be resolved. It has been shown that the proposed synthesis framework can accomplish the global task even if the agents' models are not given {\it a priori}. Computational and software tools are also developed to incorporate automatic supervisor synthesis with inter-robot communication in an effort to implement the framework. A demonstration example of multi-robot coordination to achieve a request-response team mission is presented to justify the effectiveness of the proposed framework. Future work includes combining the proposed learning-based automatic synthesis framework with robustness requirements to deal with possible faults.

%  \newpage
%\fi
%
% <OR> manually copy in the resultant .bbl file
% set second argument of \begin to the number of references
% (used to reserve space for the reference number labels box)

%\begin{IEEEbiography}{Michael Shell}
%Biography text here.
%\end{IEEEbiography}

% if you will not have a photo at all:
%\begin{IEEEbiographynophoto}{John Doe}
%Biography text here.
%\end{IEEEbiographynophoto}

% insert where needed to balance the two columns on the last page with
% biographies
%\newpage

%\begin{IEEEbiographynophoto}{Jane Doe}
%Biography text here.
%\end{IEEEbiographynophoto}

\end{document}